\newif\iftoc
\newif\iflineno
\newtheorem{lemma}{Lemma}[section]
\newtheorem{theorem}[lemma]{Theorem}
\newtheorem{corollary}[lemma]{Corollary}
\newtheorem{observation}[lemma]{Observation}
\newtheorem*{theorem*}{Observation}
\newtheorem*{lemma*}{Lemma}
\newtheorem*{observation*}{Observation}
\theoremstyle{definition}
\newtheorem{definition}[lemma]{Definition}
\newcommand{\accept}[0]{\textsc{accept}\xspace}
\newcommand{\reject}[0]{\textsc{reject}\xspace}
\newcommand{\ForRange}[3]{\For{{#1} \textbf{from} {#2} \textbf{to} {#3}}}
\newcommand{\braket}[1]{\left\langle #1 \right\rangle}
\newcommand{\card}[1]{\left|{#1}\right|}
\newcommand{\cond}{\middle |}
\newcommand{\ceil}[1]{{\left\lceil{#1}\right\rceil}}
\newcommand{\lfrac}[2]{\left. {#1} \middle/ {#2} \right.}
\newcommand{\poly}{\mathrm{poly}}
\newcommand{\supp}{\mathrm{supp}}
\newcommand{\set}[1]{{\left\{#1\right\}}}
\newcommand{\eps}[0]{\varepsilon}
\newcommand{\E}{\mathop{{\rm E}\/}}
\title{Refining the Adaptivity Notion in the Huge Object Model}
\author{Tomer Adar\thanks{Technion - Israel Institute of Technology, Israel. Email: \href{mailto:tomer-adar@campus.technion.ac.il}{tomer-adar@campus.technion.ac.il}.} \and Eldar Fischer\thanks{Technion - Israel Institute of Technology, Israel. Email: \href{mailto:eldar@cs.technion.ac.il}{eldar@cs.technion.ac.il}. Research supported by an Israel Science Foundation grant number 879/22.}}
\begin{document}
	\begin{titlepage}
		\maketitle
		\thispagestyle{empty}
		\begin{abstract}
			The Huge Object model for distribution testing, first defined by Goldreich and Ron in 2022, combines the features of classical string testing and distribution testing. In this model we are given access to independent samples from an unknown distribution $P$ over the set of strings $\{0,1\}^n$, but are only allowed to query a few bits from the samples. The distinction between adaptive and non-adaptive algorithms, which occurs naturally in the realm of string testing (while being irrelevant for classical distribution testing), plays a substantial role also in the Huge Object model.
			
			In this work we show that the full picture in the Huge Object model is much richer than just that of the ``adaptive vs.\ non-adaptive'' dichotomy. We define and investigate several models of adaptivity that lie between the fully-adaptive and the completely non-adaptive extremes. These models are naturally grounded by observing the querying process from each sample independently, and considering the ``algorithmic flow'' between them. For example, if we allow no information at all to cross over between samples (up to the final decision), then we obtain the {\em locally bounded} adaptive model, arguably the ``least adaptive'' one apart from being completely non-adaptive. A slightly stronger model allows only a ``one-way'' information flow. Even stronger (but still far from being fully adaptive) models follow by taking inspiration from the setting of streaming algorithms. To show that we indeed have a hierarchy, we prove a chain of exponential separations encompassing most of the models that we define.
		\end{abstract}
	\end{titlepage}
	
	\iftoc
	\tableofcontents
	\thispagestyle{empty}
	\newpage
	\pagenumbering{arabic}
	\fi
	
	\section{Introduction}
	
	Property testing is the study of sublinear, query-based probabilistic decision-making algorithms. That is, algorithms that return \accept or \reject after reading only a small portion of their input. The study of (classical) property testing, starting with \cite{blr1993}, \cite{rs1992} and \cite{rs1996}, has seen an extensive body of work. See for example 
	\cite{g17}. Usually, a property-testing algorithm with threshold parameter $\eps$ is required to accept an input that satisfies the property with high probability, and reject an input whose distance from any satisfying one is more than $\eps$, with high probability as well. For string properties, which were the first to be studied (along with functions, matrices, etc.\ that can also be represented as strings), the distance measure is usually the normalized Hamming distance.
	
	Distribution testing is a newer model, first defined implicitly in \cite{gr2011} (a version of which has already appeared in 2000 as a technical report). In \cite{batuFFKRW2001} and \cite{batu-FRSW2000} it was explicitly defined and researched. The algorithms in this model are much weaker, where instead of queries, the decision to accept or reject must be made based only on a sequence of independent samples drawn from an unknown distribution. In such a setting the distance metric is usually the variation distance. For a more comprehensive survey, see \cite{gs009}.
	
	The study of a \emph{combination} of string and distribution testing was initiated in \cite{gr2022}. Here the samples in themselves are considered to be very large objects, and hence after obtaining a sample (usually modeled as a string of size $n$), queries must be made to obtain some information about its contents. This requires an appropriate modification in the distance notion. This model is appropriately called the \emph{Huge Object model}.
	
	Contrast the above to the original ``small object'' distribution testing model, where it is assumed that every sample is immediately available to the algorithm in its entirety. In particular, in the original model, the algorithm does not have any choice of queries, as it just receives a sequence of independent samples from the distribution to be tested. Hence one might even call it a ``formula'' rather than an ``algorithm''. Grossly speaking, the only decision made is whether to accept or reject the provided sequence of sampled objects. 
	
	On the other hand, in the string testing model, an algorithm is provided with a (deterministic) input string, and may make query decisions based both on internal random coins and on answers to previous queries. An algorithm which makes use of the option of considering answers to previous queries when choosing the next query is called adaptive, while an algorithm that queries based only on coin tosses is called non-adaptive (the final decision on whether to accept or reject the input must, of course, depend on the actual answers).
	
	Algorithms for the Huge Object model, due to their reliance on individual queries to the provided samples, can be adaptive or non-adaptive. This relationship with respect to the Huge Object model was first explored in \cite{fischer2022}.
	
	However, as we shall demonstrate below, the complete picture here is richer than the standard adaptive/non-adaptive dichotomy used in classical string testing. As it turns out, several categories of adaptivity can be defined and investigated based on the consideration of the shared information between the different samples that are queried.
	
	\subsection{Adaptivity notions in the Huge Object model}
	
	For our purpose, unless we state otherwise, we assume that the sequence of samples is taken in advance (but is not directly disclosed to the algorithm), and is presented as a matrix from which the algorithm makes its queries. For a sequence of $s$ samples from a distribution whose base set is $\{0,1\}^n$, this would be a binary $s\times n$ matrix.
	
	We say that an algorithm is \emph{non-adaptive} if it chooses its entire set of queries before making them, which means that it cannot choose later queries based on the answers to earlier ones. This is identical to the definition of a non-adaptive algorithm for string properties.
	
	A \emph{fully adaptive} algorithm is allowed to choose every query based on answers to all queries made before it. This is quite similar to the definition of an adaptive algorithm for string properties, but restricting ourselves to this dichotomy does not give the full picture. We refine the notion of adaptivity by considering more subtle restrictions on the way that the algorithms plan their queries, leading to query models that are not as expressive as those of fully adaptive algorithms, but are still more expressive than those of non-adaptive ones. In this introduction we only introduce the rationale of every model; the formal definitions appear in the preliminaries section.
	
	One interesting restriction, which is surprisingly difficult to analyze, is ``being adaptive for every individual sample, without sharing adaptivity between different samples'' (the results of random coin tosses are still allowed to be shared). We say that an algorithm is \emph{locally-bounded} if it obeys this restriction. This model captures the concept of distributed execution, in a way that every node has a limited scope of a single sample, and only when all nodes are done, their individual outcomes are combined to facilitate a decision.
	
	A more natural restriction is ``being able to query only the most recent sample''. We say that an algorithm is \emph{forward-only} if it cannot query a sample after querying a later one. This can be viewed (if we abandon the above-mentioned ``matrix representation'') as the algorithm being provided with oracle access to only one sample at a time, not being able to ``go back in time'' once a new sample was taken. An example for the usage of the model is an anonymous survey. As long as the survey session is alive, we can present new questions based on past interactions and on the current one, but once the session ends, we are not able to recall the same participant for further questioning.
	
	A natural generalization of forward-only adaptiveness is having a bounded memory for holding samples (rather than only having one accessible sample at a time). Once the memory is full, the algorithm must drop one of these samples (making it inaccessible) in order to free up space for a new sample. An additional motivation for this model is the concept of stream processing, whose goal is computing using sublinear memory. Relevant to our work is \cite{aliakbarpour2022estimation}, where the input stream is determined by an unknown distribution, in contrast to the usual streaming setting where the order of the stream is arbitrary. Within the notion of having memory of a fixed size, we actually distinguish two models. In the weak model, when the memory is full, the oldest sample is dropped. In the strong model, the algorithm decides (possibly adaptively) which sample to drop.
	
	We show that every two consecutive models in the above hierarchy have an exponential separation, which means that there is a property that requires $\Omega(\poly(n))$ queries for an $\eps$-test in the first model (for some fixed $\eps$), but is also $\eps$-testable using $O\left(\poly\left(\eps^{-1}\right) \log n\right)$ queries in the second model (for every $\eps > 0$). Moreover, our upper bounds always have one-sided error, while the lower bounds apply for both one-sided and two-sided error algorithms. The exact relationship between the weak and the strong limited memory models remains open, however.
	
	We believe that investigating limited adaptiveness models can apply to other areas where there are two ``query scales''. That is, when investigating a model takes into account collections of objects that are restricted both in the way that whole objects are obtained and in the access model \emph{inside} each obtained object. For example, one could think of a distributed computing scenario where the communication between the nodes follows a LOCAL or a CONGEST scheme (see \cite{peleg2000distributed}), but additionally each node holds a ``large'' input from which it may only perform sub-linear time computation \emph{between} the communication rounds.
	
	\subsection{Organization of the paper}
	
	We start with formal definitions of the models which are required to state our results, followed by an overview of the results themselves and a description of the main ideas of their proofs. The overview also serves as a guide to the rest of the paper, that contains the formal proofs. While the statements in the overview are labeled as ``informal'', the main difference between them and the formal statements to which they refer is that the latter also specify the specific properties that demonstrate the query bounds.
	
	
	\section{Foundational preliminaries}
	
	The following are the core definitions and lemmas used throughout this paper, including the model definitions used in the overview in Section \ref{sec:overview}. Here, all distributions are defined over finite sets.
	
	\begin{definition}[Common notations]
		For a set $A$, \emph{the power set of $A$} is denoted by $\mathcal{P}(A)$. For two sets $A$ and $B$, the set of all functions $f : A \to B$ is denoted by $B^A$. For a finite set $A$, the \emph{set of all permutations over $A$} is denoted by $\pi(A)$.
	\end{definition}
	
	%
	%
	
	\begin{definition}[Set of distributions]
		Let $\Omega$ be a finite set. The \emph{set of all distributions that are defined over $\Omega$} is denoted by $\mathcal{D}(\Omega)$.
	\end{definition}
	While parts of this section are generalizable to distributions over non-finite sets $\Omega$ with compact topologies, we restrict ourselves to distributions over finite sets, which suffice for our application.
	
	\begin{definition}[Property]
		A \emph{property} $\mathcal{P}$ over a finite alphabet $\Sigma$ is defined as a sequence of compact sets $\mathcal{P}_n\subseteq\mathcal{D}(\Sigma^n)$. Here {\em compactness} refers to the one defined with respect to the natural topology inherited from $\mathbb{R}^{|\Sigma|^n}$.
	\end{definition}
	All properties are defined over $\Sigma = \{0,1\}$ unless we state otherwise.
	
	\subsection{Distances}
	
	The following are the distance measures that we use. In the sequel, we will omit the subscript (e.g.\ use ``$d(x,y)$'' instead of ``$d_\mathrm{H}(x,y)$'') whenever the measure that we use is clear from the context.
	
	\begin{definition}[Normalized Hamming distance]
		For two strings $s_1, s_2 \in \Sigma^n$, we use $d_\mathrm{H}(s_1,s_2)$ to denote their \emph{normalized Hamming distance}, $\frac{1}{n} \left| \left\{ 1 \le i \le n \middle| s_1[i] \ne s_2[i] \right\} \right|$.
	\end{definition}
	
	For all our distance measures we also use the standard extension to distances between sets, using the corresponding infimum (which in all our relevant cases will be a minimum). For example, For a string $s \in \{0,1\}^n$ and a set $A \subseteq \{0,1\}^n$, we define $d_\mathrm{H}(s,A) = \min\limits_{s' \in A} d_\mathrm{H}(s,s')$.
	
	\begin{definition}[Variation distance]
		For two distributions $P$ and $Q$ over a common set $\Omega$, we use $d_\mathrm{var}(P,Q)$ to denote their \emph{variation distance}, $\max_{E \subseteq \Omega} {\left| \Pr_P[E] - \Pr_Q[E] \right|}$. Since $\Omega$ is finite there is an equivalent definition of $d_\mathrm{var}(P,Q) = \frac{1}{2}\sum_{s \in \Omega} {\card{P(s)-Q(s)}}$.
	\end{definition}
	
	\begin{definition}[Transfer distribution]
		For two distributions $P$ over $\Omega_1$ and $Q$ over $\Omega_2$, we say that a distribution $T$ over $\Omega_1 \times \Omega_2$ is a {\em transfer distribution} between $P$ and $Q$ if for every $x_0 \in \Omega_1$, $\Pr_{(x,y) \sim T}[x = x_0] = \Pr_{P}[x_0]$, and for every $y_0 \in \Omega_2$, $\Pr_{(x,y) \sim T}[y = y_0] = \Pr_{Q}[y_0]$. We use $\mathcal{T}(P,Q)$ to denote the set of all transfer distributions between $P$ and $Q$.
	\end{definition}
	
	We note that for finite $\Omega_1$ and $\Omega_2$ the set $\mathcal{T}(P,Q)$ is compact as a subset of $\mathcal{D}(\Omega_1\times\Omega_2)$.
	
	\begin{definition}[Earth Mover's Distance]
		For two distributions $P$ and $Q$ over a common set $\Omega$ with a metric $d_\Omega$, we use $d_\mathrm{EMD}(P,Q)$ to denote their \emph{earth mover's distance}, defined by the infimum of the ``average distance'' demonstrated by a transfer distribution,
		$\inf_{T \in \mathcal{T}(P,Q)}{ \E_{(x,y) \sim T}\left[d_{\Omega}(x,y)\right]}$.
	\end{definition}
	
	In the sequel, the above ``$\inf$'' can and will be replaced by ``$\min$'', by the compactness of $\mathcal{T}(P,Q)$ for finite $\Omega$. Most papers (including the original \cite{gr2022}) use an equivalent definition that is based on linear programming, whose solution is the optimal transfer distribution.
	
	In our theorems, $\Omega$ is always ${\{0,1\}}^n$ for some $n$ and the metric is the Hamming distance. Sometimes, as an intermediate phase, we may use a different $\Omega$ (usually $\{1,\ldots,k\}^n$ for some $k$), and then show a reduction back to the binary case.
	
	\begin{definition}[Distance from a property] \label{def:dist-from-property}
		The \emph{distance} of a distribution $P$ from a property $\mathcal{P} = \langle \mathcal{P}_n \rangle$ is loosely noted as $d(P,\mathcal{P})$ and is defined to be $d_{\mathrm{EMD}}(P,\mathcal{P}_n) = \inf_{Q \in \mathcal{P}_n} d_\mathrm{EMD}(P,Q)$.
	\end{definition}
	
	It is very easy to show that for any two distributions $P,Q\in\mathcal{D}(\Sigma^n)$ we have $d_{\mathrm{EMD}}(P,Q)\leq d_{\mathrm{var}}(P,Q)$. This means that the topology induced by the variation distance is richer than that induced by the earth mover's distance (actually for finite sets these two topologies are identical). In particular it means that all considered properties form compact sets with respect to the earth mover's distance. We obtain the following lemma.
	
	\begin{lemma}\label{lemma:emd-supp}
		For a property $\mathcal{P}$ of distributions over strings, and any distribution $P\in\mathcal{D}(\Sigma^n)$, there is a distribution realizing the distance of $P$ from $\mathcal{P}$, i.e.\ a distribution $Q\in\mathcal{P}_n$ for which $d(P,Q)=d(P,\mathcal{P}_n)$. In particular, the infimum in Definition \ref{def:dist-from-property} is a minimum.
	\end{lemma}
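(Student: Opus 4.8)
The plan is to reduce the statement about the earth mover's distance to a compactness-plus-continuity argument. Fix $P \in \mathcal{D}(\Sigma^n)$. By definition, $d(P,\mathcal{P}_n) = \inf_{Q \in \mathcal{P}_n} d_{\mathrm{EMD}}(P,Q)$, so I want to find $Q^\star \in \mathcal{P}_n$ attaining this infimum. The first step is to recall that $\mathcal{P}_n$ is a compact subset of $\mathcal{D}(\Sigma^n)$ with respect to the topology inherited from $\mathbb{R}^{|\Sigma|^n}$ — this is part of the definition of a property — and that, as remarked in the excerpt just before the lemma, the topology induced by the variation distance coincides (for finite $\Sigma^n$) with the Euclidean one, and $d_{\mathrm{EMD}} \le d_{\mathrm{var}}$. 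So $\mathcal{P}_n$ is compact in all these equivalent topologies.

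The second step is to argue that the function $Q \mapsto d_{\mathrm{EMD}}(P,Q)$ is continuous on $\mathcal{D}(\Sigma^n)$ (with respect to the same topology). The cleanest way is to establish that it is $1$-Lipschitz with respect to the variation distance: for any $Q, Q'$, one has $|d_{\mathrm{EMD}}(P,Q) - d_{\mathrm{EMD}}(P,Q')| \le d_{\mathrm{EMD}}(Q,Q') \le d_{\mathrm{var}}(Q,Q')$, where the first inequality is the triangle inequality for $d_{\mathrm{EMD}}$ (which holds because transfer distributions compose — given a transfer distribution from $P$ to $Q$ and one from $Q$ to $Q'$, one glues them along $Q$ to get one from $P$ to $Q'$ whose expected Hamming cost is at most the sum, by the triangle inequality of the Hamming metric pointwise) and the second is the already-cited bound $d_{\mathrm{EMD}} \le d_{\mathrm{var}}$. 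Hence $Q \mapsto d_{\mathrm{EMD}}(P,Q)$ is continuous.

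The third and final step is the standard extreme value theorem: a continuous real-valued function on a nonempty compact set attains its infimum. Applying this to $Q \mapsto d_{\mathrm{EMD}}(P,Q)$ on $\mathcal{P}_n$ yields $Q^\star \in \mathcal{P}_n$ with $d_{\mathrm{EMD}}(P,Q^\star) = \inf_{Q \in \mathcal{P}_n} d_{\mathrm{EMD}}(P,Q) = d(P,\mathcal{P}_n)$, which is exactly the claim; in particular the infimum in Definition \ref{def:dist-from-property} is a minimum. (One should also note $\mathcal{P}_n \ne \emptyset$, or else the statement is vacuous; this is implicitly assumed wherever the distance is used.)

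The main obstacle — really the only non-bookkeeping point — is the triangle inequality for $d_{\mathrm{EMD}}$, i.e.\ that gluing two transfer distributions along their shared marginal produces a valid transfer distribution with additive cost. This requires the conditional-distribution (disintegration) decomposition of a transfer distribution given the middle coordinate, which is elementary for finite sets but is the one place where a small argument is genuinely needed rather than a citation. Everything else — compactness of $\mathcal{P}_n$, equivalence of topologies, $d_{\mathrm{EMD}} \le d_{\mathrm{var}}$, and the extreme value theorem — is either given in the excerpt or entirely standard.
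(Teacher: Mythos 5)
Your proposal is correct and takes essentially the same route as the paper: compactness of $\mathcal{P}_n$ (given by the definition of a property, and transferred to the EMD setting via $d_{\mathrm{EMD}} \le d_{\mathrm{var}}$) combined with continuity of $Q \mapsto d_{\mathrm{EMD}}(P,Q)$ and the extreme value theorem. The paper leaves the continuity and attained-infimum steps implicit in the discussion preceding the lemma; you simply make them explicit via the $1$-Lipschitz estimate coming from the triangle inequality for $d_{\mathrm{EMD}}$ and the bound $d_{\mathrm{EMD}} \le d_{\mathrm{var}}$.
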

	
	\subsection{The testing model}
	
	This model is defined in \cite{gr2022}. We use an equivalent definition which will be the ``baseline'' for our restricted adaptivity variants.
	
	The input is a distribution $P$ over $\Sigma^n$ (our final theorems will be for $\Sigma=\{0,1\}$, but some lemmas will have other finite $\Sigma$). An algorithm $\mathcal{A}$ gets random oracle access to $s$ samples that are independently drawn from $P$. Then it is allowed to query individual bits of the samples. The output of the algorithm is either \accept or \reject. For convenience we identify the samples with an $s\times n$ matrix, so for example the query ``$(i,j)$'' returns the $j$th bit of the $i$th sample.
	
	The input size $n$ and the number of samples $s$ are hard-coded in the algorithm. As with boolean circuits, an algorithm for an arbitrarily sized input is defined as a sequence of algorithms, one for each $n$.
	
	For a given algorithm we define another measure of complexity, which is the total number of queries that the algorithm makes. Without loss of generality, we always assume that every sample is queried at least once (implying that $q \ge s$).
	
	For a property $\mathcal{P}$ and $\eps > 0$, we say that an algorithm $\mathcal{A}$ is an $\eps$-test if:
	\begin{itemize}
		\item For every $P \in \mathcal{P}$, $\mathcal{A}$ accepts the input $P$ with probability higher than $\frac{2}{3}$.
		\item For every $P$ that is $\eps$-far from $\mathcal{P}$, $\mathcal{A}$ accepts the input $P$ with probability less than $\frac{1}{3}$.
	\end{itemize}
	We say that $\mathcal{A}$ is an $\eps$-test with one sided error if:
	\begin{itemize}
		\item For every $P \in \mathcal{P}$, $\mathcal{A}$ accepts the input $P$ with probability 1.
		\item For every $P$ that is $\eps$-far from $\mathcal{P}$, $\mathcal{A}$ accepts the input $P$ with probability less than $\frac{1}{2}$.
	\end{itemize}
	
	The choice of the probability bounds in the above definition are somewhat arbitrary. For the one sided error definition $\frac{1}{2}$ is more convenient than $\frac{1}{3}$. We also note that for non-$\eps$-far inputs that are not in $\mathcal{P}$, any answer by $\mathcal{A}$ is considered to be correct.
	
	\subsection{Restricted models}\label{sec:prelim:subsec:models}
	
	As observed by Yao in \cite{yao1977}, every probabilistic algorithm can be seen as a distribution over the set of allowable deterministic algorithms. This simplifies the algorithmic analysis, since we only have to consider deterministic algorithms (a distinction between public and private coins may break this picture, but this will not be the case here). We will use Yao's observation to define every probabilistic algorithmic model by defining its respective set of allowable deterministic algorithms.
	
	\begin{definition}[Fully adaptive algorithm] \label{def:det-alg-ho-model}
		Every deterministic algorithm can be described as a full decision tree $T$ and a set $A$ of accepted leaves. Without loss of generality we assume that all leaves have the exactly the same depth (we use dummy queries if ``padding'' is needed). Every internal node of $T$ consists of a query $(i,j) \in \{1,\ldots,s\} \times \{1,\ldots,n\}$ (the $j$th bit of the $i$th sample), and every edge corresponds to an outcome element (in $\Sigma$). The number of queries $q$ is defined as the height of the tree. Every leaf can be described by the string of length $q$ detailing the answers given to the $q$ queries, corresponding to its root-to-leaf path. Thus we can also identify $A$ with a subset of $\Sigma^q$. We use variants of the decision tree model to describe our adaptivity concepts.
	\end{definition}
	
	Now that we have defined the most general form of a deterministic algorithm in the Huge Object model, we formally define our models for varying degrees of adaptivity.
	
	\begin{definition}[Non-adaptive algorithm]
		We say that an algorithm is \emph{non-adaptive} if it chooses its queries in advance, rather than deciding each query location based on the answers to its previous ones. Formally, every deterministic non-adaptive algorithm is described as a pair $(Q,A)$ such that $Q \subseteq \{1,\ldots,s\} \times \{1,\ldots,n\}$ (for some sample complexity $s$) is the set of queries, and $A \subseteq \Sigma^Q$ is the set of accepted answer functions. The query complexity is defined as $q = \card{Q}$.
	\end{definition}
	
	\begin{definition}[Locally-bounded adaptive algorithm]\label{def:local-alg}
		We call an algorithm \emph{locally-bounded} if it does not choose its queries to a sample based on answers to queries in other samples. Formally, every $s$-sample deterministic locally-bounded algorithm is a tuple $\left(T_1,\ldots,T_s;A\right)$, where every $T_i$ is a decision tree of height $q_i$ (where $q = \sum_{i=1}^s q_i$ is the total number of queries) that is only allowed to query the $i$th sample, and $A \subseteq \Sigma^q$ represents a set of accepted superleaves, where a superleaf is defined as the concatenation of the $q_1,\ldots,q_s$ symbol long sequences that represent the leaves of trees $T_1,\ldots,T_s$ respectively.
	\end{definition}
	
	\begin{definition}[Forward-only adaptive algorithm]
		We call an algorithm \emph{forward-only} if it cannot query a sample after querying a later one. Formally, a forward-only algorithm for $s$ samples of $n$-length strings is defined as a pair $(T,A)$, where $T$ is a decision tree over $\{1,\ldots,s\} \times \{1,\ldots,n\}$ and $A \subseteq \Sigma^q$ (as with general adaptive algorithms), additionally satisfying that for every internal node of $T$ that is not the root, if its query is $(i,j)$ and its parent query is $(i',j')$, then $i' \le i$.
	\end{definition}
	
	\begin{definition}[Weak memory-bounded adaptive algorithm] \label{def:model:weak-k-mem}
		We say that an algorithm is \emph{weak $m$-memory bounded} if it can only query a sliding window of the $m$ most recent samples at a time. Formally, a weak $m$-memory-bounded adaptive algorithm using $s$ samples of $n$-length strings is defined as a pair $(T,A)$, where $T$ is a decision tree over $\{1,\ldots,s\} \times \{1,\ldots,n\}$ and $A \subseteq \Sigma^q$ (as with general adaptive algorithms), additionally satisfying that for every internal node of $T$ that is not the root, if its query is $(i,j)$, then for every ancestor whose query is $(i',j')$, it holds that $i'-m < i$.
	\end{definition}
	
	\begin{definition}[Strong memory-bounded adaptive algorithm] \label{def:model:strong-k-mem}
		A \emph{strong memory-bounded adaptive algorithm} for $s$ samples of $n$-length strings is defined as a triplet $(T,A,M)$ where $T$ is a decision tree, $A \subseteq \Sigma^q$ is the set of accepted answer vectors, and $M : \mathrm{nodes}(T) \to \mathcal{P}(\{1,\ldots,s\})$ is the ``memory state'' at every node. The explicit rules of $M$ are:
		\begin{itemize}
			\item For every internal node $u \in T$, $\card{M(u)} \le k$ (there are at most $k$ samples in memory).
			\item For every internal node $u \in T$, if $i \in M(u)$, and if $v$ is a child of $u$ for which $i \notin M(v)$, then for every descendant $w$ of $v$, $i \notin M(w)$ (a ``forgotten'' sample cannot be ``recalled'').
			\item For every internal node $u \in T$ whose query is $(i,j)$, $i \in M(u)$ (the $i$th sample must be in memory in order to query it).
		\end{itemize}
	\end{definition}
	Without loss of generality, because the samples are independent, we can assume that:
	\begin{itemize}
		\item $M(\mathrm{root}) = \{1,\ldots,k\}$ (the algorithm has initial access to the first $k$ samples).
		\item For every internal node $u \in T$ and the set $V$ of all its ancestors, it holds that $\max(M(u)) \le 1 + \max\limits_{v \in V} (\max M(v))$ (new samples are accessed ``in order'').
	\end{itemize}
	
	\section{Overview of results and methods}
	\label{sec:overview}
	
	The following is an informal overview of our work. Most of our results are exponential separations between models (that is, $O(\log n)$ vs $n^{\Omega(1)}$ bounds), but we also define new methodologies and analyze example properties.
	
	All separations are with an exponential gap, and are achieved by properties that have an efficient $1$-sided error test in one model, but do not even have an efficient $2$-sided test in the other model.
	
	
	Figure \ref{fig:summary-of-our-results} {provides} a visualization of our results. More details about the difference between the weak $k$-memory and the strong $k$-memory model are provided below.
	
	\begin{figure}
		\centering
			\includegraphics{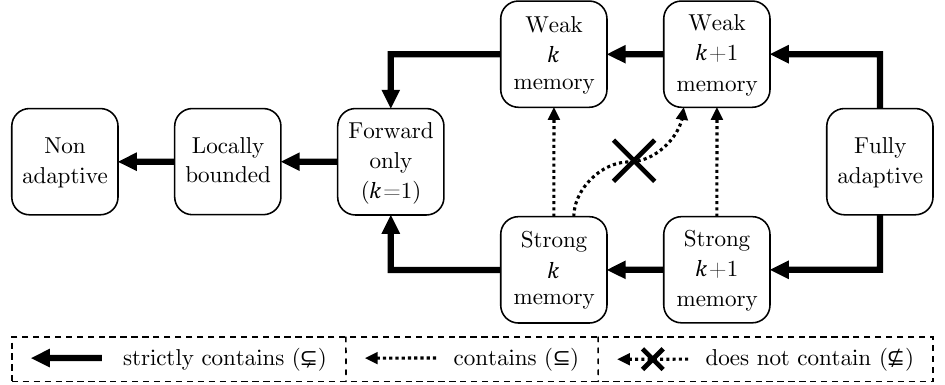}
		\caption{Graphical summary of our results}
		\label{fig:summary-of-our-results}
	\end{figure}
	
	
	\subsection{Non-adaptive algorithms}
	In section \ref{sec:na} we analyze three example properties to examine the similarities and differences between the Huge Object model (when restricted to non-adaptive queries) and the classic sampling model. Here we shortly describe two of them.
	
	We show that the determinism property (the property of all distributions that draw a specific element with probability 1) can be tested non-adaptively using $O(\eps^{-1})$ queries, consisting of $O(\eps^{-1})$ samples (as in the classic model) and $O(1)$ queries per sample.
	\begin{observation*}(Informal statement of Observation \ref{obs:na:determinism} regarding Algorithm \ref{alg:na:determinism})
		The property of drawing a fixed string has a one-sided error non-adaptive $\eps$-test that uses $O(\eps^{-1})$ queries.
	\end{observation*}
	
	The immediate generalization of the determinism property is the bounded support property.
	\begin{theorem*}(Informal statement of Theorem \ref{th:na:m-support} about Algorithm \ref{alg:na:m-support})
		The property of being supported on a set of at most $m$ elements has a one-sided error non-adaptive $\eps$-test that uses $O(\eps^{-2} m \log m)$ queries.
	\end{theorem*}
	As described in detail in Section \ref{sec:na}, our $\eps$-test for the $m$-support property needs more than a fixed number of queries per sample. Though not necessarily optimal, this algorithm demonstrates the core difference between the Huge Object model and the classic one: the limited ability to distinguish different samples. This limitation holds for adaptive algorithms as well, even though the adaptivity can reduce the number of queries per sample for some properties.
	
	\subsection*{Locally bounded adaptive algorithms}
	The locally-bounded adaptive model allows the algorithm to pick its queries based on answers to previous queries for every \emph{fixed} sample, but lacks the ability to pass information between samples. The ability of being adaptive allows the algorithm more ways to query its samples, but it still lacks the ability to test \emph{relations} between the samples.
	
	\paragraph{Analysis method} To analyze the locally-bounded model, we define an intermediate model of string testing which we call the \emph{split-adaptive model}. In this model, we test properties of $k$-tuples of strings, where the queries are made separately for every entry of the tuple (that is, every entry is processed using an adaptive algorithm that is oblivious of the other entries). To obtain a reduction, we consider every $s$-sample locally-bounded algorithm over an input distribution $P$ as a split-adaptive algorithm whose input is drawn from $P^s$ (that is, an $s$-tuple whose entries are independently drawn from $P$).
	
	\paragraph{Exponential separation from the non-adaptive model}
	Naturally, there is an exponential separation between the locally-bounded model and the non-adaptive model of the Huge Object model.
	\begin{lemma}
		There exists a property $\mathcal{P}$ of distributions over $\{0,1\}^n$ that has a locally-bounded $\eps$-test that uses $O(\poly(\eps^{-1}) \log n)$ queries for every $\eps > 0$, but there exists some $\eps_0 > 0$ for which non-adaptive $\eps_0$-test requires $\Omega(\poly(n))$ queries.
	\end{lemma}
	
	This is an almost-direct corollary of a result from \cite{gr2022} regarding converting string testing problems to the Huge Object model. Essentially, the Huge Object model ``contains'' the string testing one, and the conversion produces locally adaptive algorithms out of their respective adaptive string algorithms.
	
	\subsection*{Forward only adaptive algorithms}
	In the forward-only model, the algorithm virtually gets a stream of samples, and is allowed to query only the current sample without any restriction (but further queries to past samples are not allowed), based on answers to all past queries. In contrast to the locally bounded model, forward algorithms can test a richer collection of binary relations between samples, due to the ability to query one sample and then use the gathered data to choose the queries for the next one.
	
	\paragraph{The query foresight method} Some adaptive algorithms do not obey the forward only restriction but can be modified to do so, using a method we call \emph{query foresight}. Intuitively, an adaptive algorithm that has some knowledge about the structure of the queries it may make in the future can make them speculatively at present (that is, we make all potential queries to satisfy the forward-only constraint, even though we believe that some of them will later be considered as irrelevant). The more knowledge the algorithm has about the potential future queries, the less queries are wasted on the current sample.
	
	As an example to the query foresight method, we present a fully adaptive algorithm for the $m$-support property (Algorithm \ref{alg:s-m+1-mem:m-supp}), which is usually better than the algorithm presented in the discussion about the non-adaptive algorithm. We observe that the general structure of its queries is highly predictable, and provide a modified version thereof (Algorithm \ref{alg:forward-m-supp}) which is also forward-only, without increasing its worst-case query complexity.
	
	\paragraph{Exponential separation from the locally-bounded model}
	We use the ability of forward-only algorithms to consider richer collection of relations between samples, as compared to locally-bounded algorithms, to show an exponential separation between these models.
	\begin{theorem*}(Informal, combined statement of Theorem \ref{th:lbnd-local-Inv} and Theorem \ref{th:alg-Inv-correct})
		There exists a property of distributions over $\{0,1\}^n$ that has a forward-only $\eps$-test that uses $O(\poly(\eps^{-1}) \log n)$ queries for every $\eps > 0$, but for which there exists some $\eps_0 > 0$ so that any locally-bounded adaptive $\eps_0$-test requires $\Omega(\poly(n))$ queries.
	\end{theorem*}
	
	In \cite{ergun99} it was shown that $\eps$-testing two functions over $\{1,\ldots,n\}$ for being inverses of each other is possible with $O(\eps^{-1})$ many queries, while testing a single function for having an inverse is harder and requires a polynomial number of queries. Here we separate the two functions by setting them in a probability space with support size $2$. If we allow forward-only adaptivity, then the original inverse test can be implemented, as it works by verifying that $g(f(i))=i$ for sufficiently many $i$s. We can call the first sample ``$f$'', and after writing down our $f(i_1),\ldots,f(i_q)$, we ``wait'' for a sample of $g$ and then verify that $g(f(i_j))=i_j$ for $i_1,\ldots,i_q$.
	
	To make the above work for binary strings (rather than an alphabet of size $n$) we use an appropriate large distance encoding of the values. Also, we modify the definition of the property slightly to make sure that it is possible to construct a one-sided test also using forward-only adaptivity.
	
	The lower bound against locally-bounded adaptivity requires an intricate analysis of the model. Essentially we use the split-adaptive string-testing model to show that when querying each of $f$ and $g$ ``in solitude'', being adaptive over a function that is drawn at random does not provide an advantage over a non-adaptive algorithm. In particular, the values of a uniformly drawn permutation are ``too random'' to allow the implementation of a meaningful query strategy without getting some information from the inverse function. Essentially, we show that ``coordinating in advance'' the query strategy is insufficient.
	
	\subsection*{$k$-bounded memory algorithms}
	As per Definitions \ref{def:model:weak-k-mem}, \ref{def:model:strong-k-mem} we have two kinds of bounded memory, which we call \emph{weak} and \emph{strong}. Intuitively, in both models, the algorithm gets a stream of samples, and it has an unrestricted access to $k$ of these samples. When the algorithm needs an access to a new sample, it must give up the ability to access one of the past samples. In the weak model, the algorithm does not have a choice and it must drop the earliest sample. In other words, the weak model has an unrestricted access to a sliding window of the $k$ most recent samples. In the strong model, the algorithm is allowed to choose the sample to drop.
	
	For $k=1$, the weak and strong models are both equal to each other and to the forward-only model. Intuitively, as $k$ increases, the algorithm is able to consider more complicated relations between samples, especially $k$-ary relations, which are more challenging for $k-1$-memory algorithms.
	
	\paragraph{Exponential separation from the forward-only model}
	We use the ability to fully consider binary relations using $2$-memory algorithms, compared to the slightly limited ability to do that using forward-only algorithms, to establish an exponential separation between them.
	
	\begin{theorem*}(Informal, combined statement of Theorem \ref{th:fwd-lbnd-sym} and Theorem \ref{th:alg-psym-correct})
		There exists a property $\mathcal{P}$ of distributions over $\{0,1\}^n$ that has a weak $2$-memory $\eps$-test that uses $O(\poly(\eps^{-1}) \log n)$ queries for every $\eps > 0$, but for which there exists some $\eps_0 > 0$ so that any forward-only adaptive $\eps_0$-test requires $\Omega(\poly(n))$ queries.
	\end{theorem*}
	
	To prove the theorem, we define a property that catches the idea of symmetric functions. For some symmetric function $f : [m] \times [m] \to \{0,1\}$, a distribution in the property draws a random key $a \in [m]$ and returns a vector that contains both $a$ (using a high distance code of length $m$) and all values of $f$ at points $(a,b)$ for $b \in [m]$.
	
	For the upper-bound, the algorithm makes a sequence of independent iterations of two samples at a time. In every iteration, it gathers their ``keys'' $a_1$ and $a_2$, verifies the correctness of their codewords, and then checks whether $f(a_1,a_2) = f(a_2,a_1)$. There are some cases that should be carefully analyzed, for example the case where the distribution does not correspond to a single $f$, or the case where some values for ``$a$'' appear very rarely or not at all, but these do not defeat the above algorithm (they somewhat affect its number of needed iterations).
	
	The lower bound follows from a forward-only algorithm being given access to every sample without any knowledge about the keys of ``future'' samples. If the algorithm has only one accessible sample at a time, it can only ``guess'' the other key, but the probability to actually draw a later sample with that key is too low, unless the algorithm collects queries according to about $\sqrt{m}$ guessed keys.
	
	\paragraph{Larger memory generalization}
	We generalize the above theorem to state an exponential separation between the $k$-weak model and the $k-1$-strong model, for every $k \ge 2$:
	\begin{theorem*}(Informal, combined statement of Theorem \ref{th:fwd-lbnd-sym} and Theorem \ref{th:alg-psym-correct})
		For every fixed $k \ge 2$, there exists a property $\mathcal{P}_k$ of distributions over $\{0,1\}^n$ that has a weak $k$-memory $\eps$-test that uses $O(\poly(\eps^{-1}) \log n)$ queries for every $\eps > 0$, but there exists some $\eps_k > 0$ for which any strong $k-1$-memory adaptive $\eps_k$-test requires $\Omega(\poly(n))$ queries.
	\end{theorem*}
	Note that the degree of the polynomials in the above theorem's statement, as well as some hidden constant factors, depend on $k$.
	
	
	We define a property based on parity, which generalizes the above symmetry property. Suppose that $f : \binom{[m]}{k} \to \{0,1\}^k$ is a function such that $f(A)$ has zero parity for every subset $A \subseteq [m]$ of size $k$. We ``encode'' such a function as a distribution, making sure to ``separate'' the $k$ bits of $f(A)$ to $k$ different samples. A typical sample in the distribution would have an encoding (using a high distance code) of a random key $a \in [m]$, followed by some information on $f(A)$ for every $A$ that contains $a$. Specifically, for each such $A$ we supply the $i$th bit of $f(A)$, where $i$ is the ``rank'' of $a$ in $A$ (going by the natural order over $[m]$).
	
	For the upper bound, the algorithm makes a sequence of independent iterations of $k$ samples at a time. In every iteration it gathers the keys $a_1,\ldots,a_k$ and verifies their codewords. If they are all different, the algorithm constructs the value of $f(\{a_1,\ldots,a_k\})$ and checks its parity.
	
	For the lower bound, if the algorithm has less than $k$ accessible samples at a time, again it can only ``guess'' the missing key, and the probability to make the right guess is too low.
	
	We go even further, and show that even if the $k-1$-memory algorithm is allowed to choose which of the samples are retained in every stage (strong $k-1$-memory) rather than keeping a sliding window of recent history, the exponential separation still holds. The separation is achieved for an $\eps_k$-test of the property where $\eps_k = \Theta(1/k)$.
	
	\subsubsection*{Remaining open problems}
	
	It is an open problem whether the weak $k$-memory model is indeed strictly weaker than the strong $k$-memory model (for the same $k$). And if so, is the separation exponential? Also, we do not know whether or not for every $k$ there exists $k^*$ such that the $k^*$-weak model contains the $k$-strong one.
	
	We believe that there exist some $\eps_0 > 0$ and 
	$0 < \alpha < 1$ such that for every sufficiently large $k$, there is an exponential separation between the weak $k$-memory model and the strong $\alpha k$-memory model, with respect to an $\eps_0$-test, rather than the separation for $\eps_k = \Theta(1/k)$ that we show for $k-1$ vs $k$ memory.
	
	Another interesting open problem is whether the fully adaptive model has a simultaneous exponential separation from all fixed $k$-memory models. That is, whether there exists a property $\mathcal{P}$ and some $\eps_0 > 0$ such that $\eps_0$-testing of $\mathcal{P}$ would require $\Omega(\poly(n))$ queries in every $k$-memory model (the polynomial degree possibly depending on $k$), but $\mathcal{P}$ is $\eps$-testable using $O(\log n)$ queries using a fully adaptive algorithm for every fixed $\eps > 0$.
	
	
	
	\section{Additional preliminaries}
	
	The following are some mechanisms and technical lemmas that will aid us throughout the proofs.
	
	\subsection{Property building blocks}
	
	Here we present some useful notions for defining our properties. The following two definitions are used in most of our constructions.
	
	\begin{definition}[Vectorization of functions]
		Let $f : S \to \Sigma^*$ be a function from a (finite) well-ordered set $S$ to strings over a finite alphabet $\Sigma$. For $S' \subseteq S$, we use $\braket{f(i)|i \in S'}$ to denote the concatenation of $f(s)$ for every $s \in S'$, such that the order of concatenation follows the order that is defined for $S$.
	\end{definition}
	
	\begin{definition}[Sample map] \label{def:sample-map}
		Let $P$ be a distribution over $\Omega_1$ and let $f : \Omega_1 \to \Omega_2$ be a function. We define the \emph{sample map} $f(P)$ as the following distribution:
		\[\forall y \in \Omega_2 : \Pr_{f(P)}[y] \overset{\mathrm{def}}{=} \Pr_{x \sim P}[f(x) = y]\]
	\end{definition}
	
	We will also make good use of the following notational conventions.
	
	\begin{definition}[Binomial collection]
		Let $S$ be a set and $k$ be an integer. Define $\binom{S}{k}$ as the set of all subsets of $S$ whose size is exactly $k$.
	\end{definition}
	
	\begin{definition}[Rank] \label{def:ord-a-A}
		Let $A$ be a finite, well ordered set and let $a \in A$. We define $\mathrm{ord}(a,A)$ as the \emph{ranking} of $a$ in $A$. Formally, $\mathrm{ord}(a,A) = \card{\left\{a' \in A \cond a' \le a \right\}}$.
	\end{definition}
	
	From now on we will use $[k]$ to denote $\{1,\ldots,k\}$. In particular for $1\leq i\leq k$ we have $\mathrm{ord}(i,[k])=i$.
	
	\subsection{Reductions between properties} \label{subsec:std-mdl-rdc}
	
	As per Definition \ref{def:sample-map}, given a distribution $P$ over $(\Sigma_1)^n$ and a function $f:(\Sigma_1)^n\to(\Sigma_2)^k$, the sample map $f(P)$ is a distribution over $(\Sigma_2)^k$.

	\begin{lemma} \label{lemma:bndfmap}
		Let $P$ and $Q$ be distributions over some metric set and let $f:(\Sigma_1)^n\to(\Sigma_2)^k$ be a function. If there are two constant factors $0 < a < b$ such that $a \cdot d(x,y) \le d(f(x), f(y)) \le b \cdot d(x,y)$ for every $x, y \in (\Sigma_1)^n$, then $a \cdot d(P,Q) \le d(f(P), f(Q)) \le b \cdot d(P,Q)$.
	\end{lemma}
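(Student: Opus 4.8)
The plan is to reduce the statement about distributions to the assumed bi-Lipschitz bounds on $f$ between individual points by working directly with transfer distributions, since the Earth Mover's Distance is defined as a minimum over $\mathcal{T}(P,Q)$ (using the compactness noted after the definition of transfer distributions). First I would prove the upper bound $d(f(P),f(Q)) \le b\cdot d(P,Q)$. Let $T \in \mathcal{T}(P,Q)$ be an optimal transfer distribution realizing $d(P,Q) = \E_{(x,y)\sim T}[d(x,y)]$. Consider the sample map $(f\times f)(T)$, i.e.\ the distribution of $(f(x),f(y))$ for $(x,y)\sim T$; one checks immediately from Definition \ref{def:sample-map} and the marginal conditions defining $\mathcal{T}(P,Q)$ that this is a transfer distribution in $\mathcal{T}(f(P),f(Q))$. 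Hence $d(f(P),f(Q)) \le \E_{(x,y)\sim T}[d(f(x),f(y))] \le \E_{(x,y)\sim T}[b\cdot d(x,y)] = b\cdot d(P,Q)$, where the middle inequality is the assumed pointwise upper bound applied inside the expectation.

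For the lower bound $a\cdot d(P,Q) \le d(f(P),f(Q))$, the natural approach — pushing an optimal transfer distribution the other way — does not work directly, because $f$ need not be injective, so a transfer distribution between $f(P)$ and $f(Q)$ may couple a value $f(x)$ with a value $f(y)$ without telling us how to couple the $P$-mass on $f^{-1}(f(x))$ with the $Q$-mass on $f^{-1}(f(y))$. The key step here is to show that any transfer distribution $S \in \mathcal{T}(f(P),f(Q))$ can be ``lifted'' to some $T \in \mathcal{T}(P,Q)$ such that whenever $(x,y)$ is in the support of $T$, the pair $(f(x),f(y))$ is in the support of $S$ with the right conditional structure. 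Concretely, for each pair $(u,v)$ in the support of $S$, split the mass $S(u,v)$ among pairs $(x,y)$ with $f(x)=u$, $f(y)=v$ according to the product of the conditional distributions $\Pr_P[x \mid f(x)=u]$ and $\Pr_Q[y \mid f(y)=v]$ (here we use that $\Pr_{f(P)}[u] = \Pr_P[f(x)=u] > 0$ whenever $(u,v)\in\supp(S)$, so these conditionals are well-defined). One verifies that the resulting $T$ has the correct $P$- and $Q$-marginals, hence lies in $\mathcal{T}(P,Q)$. Then $a\cdot d(P,Q) \le a\cdot \E_{(x,y)\sim T}[d(x,y)] \le \E_{(x,y)\sim T}[d(f(x),f(y))] = \E_{(u,v)\sim S}[d(u,v)]$, using the assumed pointwise lower bound $a\cdot d(x,y)\le d(f(x),f(y))$; taking $S$ to be an optimal transfer distribution for $f(P),f(Q)$ gives the claim.

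The main obstacle is precisely this lifting step in the lower bound: one must check carefully that the mass-splitting produces a genuine probability distribution with the prescribed marginals, which is a routine but slightly fiddly computation of marginalizing a product of conditionals against $S$. Everything else is a direct substitution of the pointwise inequalities into an expectation, plus the observation that $(f\times f)$ maps transfer distributions to transfer distributions. Note also that the same argument would go through for the variation distance or any distance realized by an analogous coupling formulation, but we only need it in the EMD form, where the "$\inf$" is attained as a "$\min$" by compactness, so all the optimal $T$ and $S$ invoked above actually exist.
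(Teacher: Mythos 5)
Your proposal is correct and follows essentially the same route as the paper's proof: you push a transfer distribution forward through $(f\times f)$ for the upper bound, and for the lower bound you lift $S\in\mathcal{T}(f(P),f(Q))$ to $T\in\mathcal{T}(P,Q)$ by weighting $S(u,v)$ with the product of conditionals $\Pr_P[x\mid f(x)=u]\cdot\Pr_Q[y\mid f(y)=v]$, which is exactly the paper's explicit formula $T'(x,y)=\frac{\Pr_P[x]\Pr_Q[y]}{\Pr_{f(P)}[f(x)]\Pr_{f(Q)}[f(y)]}S(f(x),f(y))$ written in a different but equivalent form. You also correctly identify that the only nontrivial point is the non-injectivity of $f$, which this product-measure lift resolves.
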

	
	\begin{proof}
		The upper bound is immediate by taking a transfer distribution $T\in\mathcal{T}(P,Q)$ and moving to the sample map $g(T)\in\mathcal{T}(f(P),f(Q))$, where $g$ is defined by $g(x,y)=(f(x),f(y))$.
		
		For the lower bound, let $T$ be a transfer distribution from $f(P)$ to $f(Q)$. Let $T'$ be the following transfer distribution from $P$ to $Q$:	
		\[T'(x,y) = \frac{\Pr_P[x] \Pr_Q[y]}{\Pr_{f(P)}[f(x)] \Pr_{f(Q)}[f(y)]} T(f(x),f(y))\]
		
		And bound the distance:	
		\begin{eqnarray*}
			a\cdot d(P,Q) &\le & \E_{(x,y) \sim T'}\left[a \cdot d(x,y)\right] \le \E_{(x,y) \sim T'}\left[d(f(x),f(y))\right] \\
			& = & \sum_{u,v \in (\Sigma_2)^k} \Pr_{(x,y) \sim T'}\left[f(x)=u, f(y)=v\right] d(u,v) \\
			& = & \sum_{u,v \in (\Sigma_2)^k} T(u,v) \cdot d(u,v) = \E_{(u,v) \sim T}\left[d(u,v)\right]
		\end{eqnarray*}
		Hence $d(f(P),f(Q)) = \inf_{T \in \mathcal{T}(f(P),f(Q))} \E_{(u,v) \sim T}\left[d(u,v)\right] \ge a \cdot d(P,Q)$.
	\end{proof}
	
	Assume that we have some property of distributions of $n$-length strings over a finite alphabet $\Sigma$ of size $m$, rather than over $\{0,1\}$. Consider some error correction code $C : \Sigma \to \{0,1\}^{2 \log_2 m}$ whose minimal distance is at least $\frac{1}{3}$. We extend $C$ to be defined over $\Sigma^n \to \{0,1\}^{2 n \log_2 m}$ by encoding every element individually.
	\[C(x_1 \ldots x_n) \overset{\mathrm{def}}{=} C(x_1) \ldots C(x_n)\]
	Lemma \ref{lemma:bndfmap} implies that for every $P$ and $Q$ that are distributions over $\Sigma^n$, it holds that
	\[\frac{1}{3} d(P, Q) \le d(C(P), C(Q)) \le d(P,Q)\]
	We note that for all models that we define, using this reduction keeps the algorithm in its respective model, and also preserves one-sided error (if the original algorithm has it).
	
	Based on the above inequality we observe that if there exists an $\eps$-tester for a property over $\Sigma^n$ that uses $s$ samples and $q$ queries, then there exists a $3\eps$-tester for the corresponding binary property, that uses $s$ samples and at most $2 q \log_2 m$ bit queries. Also, if there is no $\eps$-tester for the property over $\Sigma$ (for some $s$ and $q$ bounds), then there is no $\eps$-tester for the encoded property (for the same $s$ and $q$ bounds).
	
	\subsection{Useful Properties}
	
	In the following we will use (very sparse) systematic codes, whose existence is well-known.
	
	\begin{lemma}[Systematic code]
		\label{lemma:syscode}
		There exists a set $\mathcal{C}$ of error correction codes, such that for every $n \ge m \ge 10$, it has a code $C_{m,n} : [m] \to \{0,1\}^n$ with the following properties: (1) Its minimal codeword distance is at least $\frac{1}{3}$ and (2) The projection of $C_{m,n}$ on its first $\ceil{\log_2 m}$ is one-to-one, that is, $C_{m,n}$ can be decoded by reading the first $\ceil{\log_2 m}$ bits.
	\end{lemma}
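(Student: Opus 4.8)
The plan is to build $C_{m,n}$ by concatenating a short systematic prefix that records the element of $[m]$ verbatim (in binary) with a longer suffix that is a good distance code applied to the same element, and then to pad out to length $n$. Concretely, fix $\ell = \ceil{\log_2 m}$ and reserve the first $\ell$ coordinates for the binary representation of the input symbol $a \in [m]$; this immediately gives property (2), since reading those $\ell$ bits recovers $a$ and hence the whole codeword. The remaining $n - \ell$ coordinates must be used to guarantee property (1), the relative minimum distance $\ge \tfrac13$; note that distinct symbols may agree on the prefix in up to $\ell - 1$ of the $\ell$ prefix coordinates, so essentially all of the distance has to come from the suffix.

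For the suffix I would invoke a standard existence result for good codes — e.g.\ a Gilbert–Varshamov or a Reed–Solomon-style / Justesen-style construction — giving a code $D : [m] \to \{0,1\}^{t}$ with relative distance at least $\tfrac12$ (any fixed constant $> \tfrac13$ works), which exists once $t = O(\log m)$ with a sufficiently large absolute constant. Since the statement only requires $n \ge m$, and $m \ge 10$ forces $\log_2 m$ to be a nontrivial fraction of $m$ only for small $m$, I would pick $t$ large enough (still $t \le n - \ell$ is what needs checking) so that on the combined block of length $\ell + t$ the absolute distance between two distinct codewords is at least $1 + \tfrac{t}{2} \ge \tfrac{\ell + t}{3}$; the last inequality holds as soon as $t \ge \tfrac65(\ell - \tfrac32)$, i.e.\ $t = 2\ell$ comfortably suffices. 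Finally, extend to length exactly $n$ by repeating the length-$(\ell+t)$ codeword enough times and truncating (or, more simply, by padding each codeword by copying coordinates in a fixed pattern): repetition multiplies both the block length and every pairwise distance by the same factor up to a truncation error, and choosing the repetition so that the truncated tail is a small fraction of $n$ keeps the relative distance above $\tfrac13$. One then checks that $\ceil{\log_2 m}$ bits still decode, since the systematic prefix sits at the very front and survives the padding.

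The main obstacle is purely bookkeeping: making sure the relative distance stays $\ge \tfrac13$ \emph{uniformly} over the whole range $n \ge m \ge 10$, rather than just asymptotically. The delicate regime is $n$ close to $m$ with $m$ small, where $\ell$ is a non-negligible fraction of $n$ and the "wasted" prefix distance (only $1$ guaranteed out of $\ell$) drags the ratio down; here one must verify that a suitable number of repetitions of a relative-$\tfrac12$ block still clears the $\tfrac13$ bar, and that $m \ge 10$ is enough slack to absorb the truncation. Everything else — the existence of the constant-rate relative-$\tfrac12$ inner code, the exact-length padding, and the one-to-one prefix projection — is routine, so I would state the construction, cite the standard good-code existence lemma, and then discharge property (1) with a short explicit inequality for the worst case $n = m$.
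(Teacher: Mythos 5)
The paper asserts this lemma as folklore and gives no proof, so there is no ``paper proof'' to compare against; I will evaluate your construction on its own terms. The high-level template --- a systematic prefix $\mathrm{bin}(a)$ of length $\ell = \ceil{\log_2 m}$ followed by a high-distance suffix, padded to length $n$ --- is a reasonable way to try to fill the gap, but the central existence claim you rest it on is false, and the concrete numbers do not close even after repairing it.

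Your claim that there is a code $D : [m] \to \{0,1\}^t$ of relative distance at least $\tfrac12$ with $t = O(\log m)$ contradicts the Plotkin bound: a binary code of length $t$ and minimum distance $\ge t/2$ has at most $2t$ codewords, so $m$ codewords at relative distance $\tfrac12$ need $t \ge m/2$, which is linear in $m$, not logarithmic. With your choice $t = 2\ell$, Plotkin caps the code size at $4\ell = 4\ceil{\log_2 m}$, already below $m$ for $m$ in the twenties. You do parenthetically say ``any fixed constant $> \tfrac13$ works,'' but then the inequality you actually solve, $1 + \tfrac{t}{2} \ge \tfrac{\ell+t}{3}$, specifically uses $\tfrac12$. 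Replacing $\tfrac12$ by a general $\delta$, you need $t(3\delta - 1) \ge \ell - 3$, so $t$ must grow as $\delta \to \tfrac13^+$, and for whatever $\delta$ you fix you now also need a $(t, m, \delta t)$ binary code with $t \le n - \ell$ \emph{uniformly over $n \ge m \ge 10$}. The tight end defeats this: at $n = m = 10$ you have $\ell = 4$ and only $n - \ell = 6$ suffix bits, and already a $(6,10,3)$ binary code does not exist by the Hamming bound ($10\cdot 7 > 64$), while $t = 5$ forces a total block length of $9$ and, after padding to $10$, a distance of at most $3/10 < \tfrac13$. So the regime you yourself flag as delicate is exactly where the argument has no room, and it is not discharged.

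Two smaller points. First, the repeat-and-truncate padding loses up to $\ell + t$ coordinates of distance, which is \emph{not} a small fraction of $n$ when $r = \lceil n/(\ell+t)\rceil$ is $1$ or $2$; that is again precisely the regime near $n = m$. Second, $1 + \tfrac{t}{2} \ge \tfrac{\ell+t}{3}$ rearranges to $t \ge 2\ell - 6$, not $t \ge \tfrac65(\ell - \tfrac32)$.

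The lemma is true, but a proof that covers the whole range $n \ge m \ge 10$ needs either an explicit small-$m$ family (e.g.\ extended Hamming / first-order Reed--Muller codes, repeated or shortened to length $n$; the $[8,4,4]$ extended Hamming code with two data bits repeated gives a systematic $[10,4,4]$ code of relative distance $0.4$ for $m=n=10$) spliced with an asymptotic argument, or a greedy/Gilbert--Varshamov argument whose constants are actually checked at $m = 10$ rather than quoted asymptotically. As written, your proposal does not establish the lemma.
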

	From now on, every use of systematic codes refers to the set $\mathcal{C}$ that is guaranteed by Lemma \ref{lemma:syscode}, usually denoted just by $C$ (rather than the explicit notion $C_{m,n}$).
	
	The next property is very useful for proving adaptivity gaps.
	
	\begin{definition}[string property $\mathbf{cpal}$, see \cite{fischer2022}, \cite{akns99}]
		\label{def:prop:2pal}
		For any fixed $n$, the property $\mathbf{cpal}$ is defined over $\{0,1,2,3\}^n$ as the set of $n$-long strings that are concatenations of a palindrome over $\{0,1\}$ and a palindrome over $\{2,3\}$ (in this order).
	\end{definition}
	
	The following lemma is well-known (the adaptive bound, using binary search, is described in \cite{fischer2022}).
	
	\begin{lemma}
		Property $\mathbf{cpal}$ does not have a non-adaptive $\frac15$-test using $o(\sqrt{n})$ queries, while having an adaptive $\eps$-test using $O(\log(n)+1/\eps)$ many queries.
	\end{lemma}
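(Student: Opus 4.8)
The plan is to prove the two bounds separately: the adaptive upper bound by a \emph{binary-search-then-sample} scheme, and the non-adaptive lower bound by Yao's principle with a pair of distributions that hide a random split point.

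For the adaptive one-sided $O(\log n + 1/\eps)$-test, I would first locate a candidate split point. Any $x \in \mathbf{cpal}$ has its symbols in $\{0,1\}$ on a prefix $[1,k^*]$ and in $\{2,3\}$ on the complementary suffix, and this threshold behaviour is monotone, so binary search over $\{0,1,2,3\}^n$ identifies $k^*$ using $O(\log n)$ queries (the boundary cases $k^* \in \{0,n\}$ being handled by first probing positions $1$ and $n$). Having fixed a candidate $k$, sample $O(1/\eps)$ random positions from $[1,k]$ and from $[k+1,n]$ to check membership in $\{0,1\}$ resp.\ $\{2,3\}$, and $O(1/\eps)$ random mirror pairs $\{i,k+1-i\}$ and $\{k+i,n+1-i\}$ to check equality, accepting iff all checks pass. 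If $x \in \mathbf{cpal}$ the binary search returns the genuine (unique) split and every check passes, so $x$ is accepted with probability $1$. If $x$ is $\eps$-far from $\mathbf{cpal}$, then for the particular $k$ returned, turning $x$ into a string compatible with split $k$ costs more than $\eps n$ symbol changes, hence the number of alphabet violations plus the number of violated mirror pairs relative to $k$ is at least $\eps n$, and a sample of $O(1/\eps)$ of each type exposes a violation with probability at least $\tfrac12$. The total query cost is $O(\log n + 1/\eps)$; the alphabet-encoding reduction of Section~\ref{subsec:std-mdl-rdc} is not needed since $\mathbf{cpal}$ is already posed over a finite alphabet.

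For the non-adaptive $\Omega(\sqrt n)$ lower bound, fix $q = o(\sqrt n)$ and invoke Yao: it suffices to exhibit a ``yes'' distribution $\mathcal{D}_{\mathrm{yes}}$ supported on $\mathbf{cpal}$ and a ``no'' distribution $\mathcal{D}_{\mathrm{no}}$ supported, with probability $1-o(1)$, on strings that are $\tfrac15$-far from $\mathbf{cpal}$, such that for every fixed set $Q$ of $q$ positions the two induced distributions on $\Sigma^Q$ are within variation distance $o(1) < \tfrac13$ (the $o(1)$ far-ness failure probability of $\mathcal{D}_{\mathrm{no}}$ being absorbed into the same bound, and the final decision depending only on the $Q$-answers). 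Draw a split $k$ uniformly from a window of $\Theta(n)$ values around $n/2$; $\mathcal{D}_{\mathrm{yes}}$ outputs a uniformly random $\{0,1\}$-palindrome of length $k$ concatenated with a uniformly random $\{2,3\}$-palindrome of length $n-k$, while $\mathcal{D}_{\mathrm{no}}$ puts independent uniform $\{0,1\}$ symbols on $[1,k]$ and independent uniform $\{2,3\}$ symbols on $[k+1,n]$. Conditioned on $k$, the $Q$-answer marginals of the two distributions coincide unless $Q$ contains a ``mirror pair'', i.e.\ two positions $i<j$ with $j=k+1-i$ (both in the low part) or $j=n+k+1-i$ (both in the high part); since each of the $\binom q2$ pairs in $Q$ is a mirror pair for at most two values of $k$, the probability over $k$ of this birthday-type event is $O(q^2/n)=o(1)$, and coupling on the shared $k$ gives the variation-distance bound.

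It remains to check that $\mathcal{D}_{\mathrm{no}}$ is genuinely far from $\mathbf{cpal}$, which is the step requiring the most care. For a sample $x \sim \mathcal{D}_{\mathrm{no}}$ with true split $k$ and any candidate split $k'$: if $|k-k'|$ is large, the segment between $k$ and $k'$ carries $|k-k'|$ symbols of the wrong alphabet, forcing $|k-k'|$ changes; if $|k-k'|$ is small, each of the two halves is a prefix/suffix of a uniform string over its alphabet, so palindromizing them costs a $\tfrac14-o(1)$ fraction of $n$ by a mirror-pair count plus a Chernoff bound. A union bound over the $n$ candidate splits $k'$ then shows $d(x,\mathbf{cpal}) \ge \tfrac15$ with probability $1-o(1)$; conditioning $\mathcal{D}_{\mathrm{no}}$ on this event changes it by $o(1)$ in variation distance and does not affect the indistinguishability bound. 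I expect the main obstacle to be pushing the far-ness constant comfortably above $\tfrac15$ uniformly over \emph{all} candidate splits, including those close to the true one — exactly the estimate carried out for the string version of this property in \cite{akns99} and \cite{fischer2022} — while the remainder is bookkeeping.
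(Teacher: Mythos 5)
The paper does not supply a proof of this lemma; it states it as well known and refers to \cite{fischer2022} (for the binary-search adaptive upper bound) and \cite{akns99} (for the non-adaptive $\Omega(\sqrt n)$ lower bound). Your reconstruction follows exactly that standard route — binary search to locate the split followed by $O(1/\eps)$ alphabet/mirror-pair checks for the upper bound, and a Yao argument with a hidden random split point plus a birthday-type collision count for the lower bound — and your far-ness calculation for $\mathcal{D}_{\mathrm{no}}$ (roughly $n/4 - o(n)$ symbol changes needed uniformly over all candidate splits $k'$, via the $(k-k') + k'/4 + ((n-k')/2 - (k-k'))/2 = n/4 + (k-k')/2$ accounting and a Chernoff/union bound over $k'$) does indeed clear $\tfrac15$; so this is the same proof as in the cited sources, correctly carried out.
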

	
	In \cite{fischer2022} this was made into a distribution property by using ``distributions'' that are deterministic.
	
	\begin{definition}[Distribution property $\mathbf{CPal}$, see \cite{fischer2022}]
		\label{def:prop:p-2pal}
		For a fixed, even $n$, the property $\mathbf{CPal}$ is defined as the set of distributions over $\{0,1\}^n$ that are deterministic (have support size $1$), whose support is an element that belongs to $\mathbf{cpal}$, with respect to the encoding $(0,1,2,3) \mapsto (00,01,10,11)$.
	\end{definition}
	
	In Subsection \ref{sec:local:subsec:exp-sep-na} we use $\mathbf{CPal}$ to show an exponential separation between the non-adaptive model and the locally bounded model.
	
	Our next property relies on function inverses to provide adaptivity bounds, and was first investigated in relation to \cite{ergun99}. For a technical reason (that will allow for one-sided error testing later on) we add a special provision for function equality (the original property allowed only for inverse functions).
	
	\begin{definition}[Function property $\mathbf{inv}$]
		\label{def:prop:inv}
		For a fixed $n$, the property $\mathbf{inv}$ is defined over $[n]^{[2n]}$ as the set of ordered pairs of functions $f,g : [n] \to [n]$ such that either $f(i) = g(i)$ for every $1 \le i \le n$ or $g(f(i)) = i$ for every $1 \le i \le n$.
	\end{definition}
	
	It is well-known (first proved in a more general version in \cite{ergun99}) that an $\eps$-test for function inverses takes $O(1/\eps)$ many queries, while e.g.\ testing a single function $f$ for being a bijection requires at least $\Omega(\sqrt{n})$ many queries. For making it into a distribution property we ``split apart'' $f$ and $g$.
	
	\begin{definition}[Distribution property $\mathbf{Inv}$]
		\label{def:prop:p-inv}
		For a fixed $n$, the property $\mathbf{Inv}$ is defined as the set of distributions over $[n]^{[n]}$ that are supported by a set of the form $\{f,g\}$ such that $(f,g) \in \mathbf{inv}$. Note that in particular all deterministic distributions satisfy $\mathbf{Inv}$, since we allow $f=g$ to occur.
	\end{definition}
	
	\begin{definition}[Distribution property $\mathbf{Inv}^*$]
		For a fixed $n$, let $C_n : [n] \to \{0,1\}^{2\ceil{\log_2} n}$ be an error-correction code whose distance is at least $\frac{1}{3}$. We define $\mathbf{Inv}^*$ as the property of distributions over $\{0,1\}^{2\ceil{\log_2 n} n}$, that can be represented as $C_n(P)$ for $P \in \mathbf{Inv}$ (see the discussion after Lemma \ref{lemma:bndfmap}).
	\end{definition}
	
	In Subsection \ref{sec:local:sub:poly-lbnd-p-inv} and Subsection \ref{sec:forward:subsec:exp-sep-local} we use $\mathbf{Inv}$ through its encoding $\mathbf{Inv}^*$ to show an exponential separation between the locally bounded model and the forward-only model.
	
	We finally define a simple property of a matrix (considered as a function with two variables) being symmetric.
	
	\begin{definition}[Matrix property $\mathbf{sym}$]
		\label{def:prop:sym}
		For a fixed $n$, the property $\mathbf{sym}$ of functions with two variables $f:[n]^2 \to \{0,1\}$ is defined as the property of being symmetric, i.e.\ satisfying $f(i,j)=f(j,i)$ for all $i,j\in [n]$.
	\end{definition}
	
	The corresponding distribution property is inspired by considering distributions over the rows of a symmetric matrix, along with properly encoded identifiers.
	
	\begin{definition}[Distribution property $\mathbf{Sym}$]
		\label{def:prop:Sym}
		For any $m$ and the systematic code $C : [m] \to \{0,1\}^m$ from Lemma \ref{lemma:syscode}, the property $\mathbf{Sym}$ is defined as the set of distributions for which \[\Pr\limits_{x \sim P}\left[\exists a \in [m]  : x_{1,\ldots,m} = C(a)\right] = 1\] (i.e.\ all vectors start with an encoding of a ``row identifier''), and for every $a,b \in [m]$, \[\Pr\limits_{x,y \sim P}\left[x_{1,\ldots,m} = C(a) \wedge y_{1,\ldots,m} = C(b) \wedge x_{m+b} \ne y_{m+a}\right] = 0\] (if two ``identifiers'' $a$ and $b$ appear with positive probability, then the respective ``$f(a,b)$'' and ``$f(b,a)$'' are identical).
	\end{definition}
	To understand Definition \ref{def:prop:Sym}, consider first the set of distributions $P$ over $\{0,1\}^{2m}$ that are supported over a set of the form $\left\{C(a),\langle f(a,b)\rangle_{b\in [n]}:a\in [n]\right\}$ where $f$ satisfies $\mathbf{sym}$. However, we need to go in a more roundabout way when defining $\mathbf{Sym}$ due to technical difficulties when only a subset of the possible identifiers appears in the distribution. In Subsection \ref{sec:forward:subsec:poly-lbnd-Sym} and Subsection \ref{sec:kmem:subsec:exp-sep-forward-bounded} we use $\mathbf{Sym}$ to show an exponential separation between the forward-only model and the weak $2$-memory model.
	
	\subsection{Useful lemmas}
	
	The following lemma is well known and is justified by Markov's inequality for $\tilde{X} = 1 - X$.
	
	\begin{lemma}[reverse Markov's inequality]
		\label{lemma:anti-markov}
		Let $X$ be a random variable whose value is bounded between $0$ and $1$. Then for every $0 < \rho < 1$,
		$\Pr[X > \rho\E[X]] \ge (1 - \rho)\E[X]$. Specifically, $\Pr[X > \frac{1}{2}\E[X]] \ge \frac{1}{2}\E[X]$.
	\end{lemma}
	
	
	The following lemma simplifies EMD-distance lower bounds, by characterizing for some properties the distance between them as that achievable by a ``direct translation'' of every vector (or in other words, a sample map). But before the lemma itself we need to define the relevant properties.
	
	\begin{definition}
		Given a family $\Pi$ of subsets of $\Sigma^n$ that is monotone non-increasing, that is, such that for every $A \in \Pi$ and $B \subseteq A$, $B \in \Pi$ too, we define the property $\mathcal{D}(\Pi)=\bigcup_{A\in\Pi}\mathcal{D}(A)$ as the property of having a support that is a member of $\Pi$.
	\end{definition}
	
	\begin{lemma}
		\label{lemma:canonical-dist-from-support-property}
		For a fixed alphabet $\Sigma$, let $\Pi$ be a monotone non-increasing family of subsets of $\Sigma^n$. For every distribution $P \in \mathcal{D}(\Sigma^n)$ there is an $A \in \Pi$ and a function $f : \supp(P) \to A$ such that $d(P,\mathcal{D}(\Pi)) = d(P,f(P)) = \sum_{x \in \supp(P)} \Pr_P[x] d(x,f(x))$.
	\end{lemma}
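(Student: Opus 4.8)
The plan is to invoke the existence of a distance-realizing distribution (Lemma~\ref{lemma:emd-supp}) and then repeatedly ``collapse'' the optimal transfer distribution until it becomes a genuine function on the support of $P$. First I would note that $\mathcal{D}(\Pi)$ is indeed a property in the sense of the earlier definitions: it is a union of sets of the form $\mathcal{D}(A)$ for $A\in\Pi$, each of which is compact (being a simplex), and since $\Sigma^n$ is finite there are only finitely many such $A$, so $\mathcal{D}(\Pi)$ is a finite union of compact sets and hence compact. Thus by Lemma~\ref{lemma:emd-supp} there is a $Q\in\mathcal{D}(\Pi)$ with $d(P,Q)=d(P,\mathcal{D}(\Pi))$, and we may fix $A\in\Pi$ with $\supp(Q)\subseteq A$. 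By finiteness of $\Sigma^n$ the earth mover's distance $d_{\mathrm{EMD}}(P,Q)$ is realized by some transfer distribution $T\in\mathcal{T}(P,Q)$, so $d(P,\mathcal{D}(\Pi))=\E_{(x,y)\sim T}[d(x,y)]$.

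The heart of the argument is to replace $T$ by a transfer distribution supported on the graph of a function $f:\supp(P)\to A$ without increasing the expected cost, and without leaving $\mathcal{D}(\Pi)$. Given the optimal $T$, for each $x_0\in\supp(P)$ look at the conditional distribution of $y$ given $x=x_0$ under $T$, and let $f(x_0)$ be a minimizer of $d(x_0,y)$ over the $y$'s in the support of that conditional distribution (ties broken arbitrarily); then $d(x_0,f(x_0))\le \E_{(x,y)\sim T}[d(x,y)\mid x=x_0]$. Define $T'$ to be the distribution of $(x,f(x))$ for $x\sim P$. Each $f(x_0)$ lies in $\supp(Q)\subseteq A$, so letting $Q'=f(P)$ we have $\supp(Q')\subseteq A$, hence $Q'\in\mathcal{D}(A)\subseteq\mathcal{D}(\Pi)$ using that $\Pi$ is monotone non-increasing (actually $A\in\Pi$ already suffices). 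Clearly $T'\in\mathcal{T}(P,Q')$ since its first marginal is $P$ by construction and its second marginal is $f(P)=Q'$. Averaging the pointwise inequality over $x_0\sim P$ gives
\[
\E_{(x,y)\sim T'}[d(x,y)]=\sum_{x_0\in\supp(P)}\Pr_P[x_0]\,d(x_0,f(x_0))\le \E_{(x,y)\sim T}[d(x,y)]=d(P,\mathcal{D}(\Pi)).
\]
Combined with $d(P,f(P))=d_{\mathrm{EMD}}(P,Q')\le\E_{(x,y)\sim T'}[d(x,y)]$ and with the trivial inequality $d(P,\mathcal{D}(\Pi))\le d(P,f(P))$ (since $f(P)\in\mathcal{D}(\Pi)$), we get equality throughout, which is exactly the claimed identity $d(P,\mathcal{D}(\Pi))=d(P,f(P))=\sum_{x\in\supp(P)}\Pr_P[x]\,d(x,f(x))$.

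I expect the only real subtlety to be the bookkeeping around the second marginal of $T'$: when several elements of $\supp(P)$ are mapped by $f$ to the same element of $A$, the value $\Pr_{f(P)}[y_0]$ aggregates their probabilities, and one has to make sure this equals the second marginal of the constructed $T'$ — but this is immediate from the definition of the sample map $f(P)$, so it is a formality rather than an obstacle. The one genuinely load-bearing point worth stating carefully is that the support of the optimal $Q$ (and hence each chosen $f(x_0)$) stays inside a single $A\in\Pi$; this is where we use that $\mathcal{D}(\Pi)$ decomposes as a union over $\Pi$ and that an optimal realizer exists, rather than merely a limit of near-optimal ones. Monotonicity of $\Pi$ is then only needed to guarantee that shrinking the support (e.g.\ if $f$ is not surjective onto $A$) keeps us in the property, which is what lets us conclude $f(P)\in\mathcal{D}(\Pi)$ cleanly.
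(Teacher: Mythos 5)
Your proposal is correct and follows essentially the same route as the paper: take an optimal realizer $Q\in\mathcal{D}(\Pi)$ (via Lemma~\ref{lemma:emd-supp}), take an optimal transfer distribution $T$, define $f$ so that the deterministic coupling $(x,f(x))$ is no more expensive than $T$, note $\supp(f(P))\subseteq\supp(Q)$ to stay in $\mathcal{D}(\Pi)$, and chain the inequalities to get equality. The only cosmetic difference is that the paper defines $f(x)$ as the nearest point in all of $\supp(Q)$ (and then uses $d(x,f(x))\le d(x,y)$ pointwise on $\supp(T)$), whereas you define $f(x)$ as the nearest point in the conditional support $T(\cdot\mid x)$ and bound by the conditional expectation; both yield the same chain.
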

	\begin{proof}
		Let $Q$ be a distribution that realizes the distance of $P$ from $\mathcal{D}(\Pi)$, so that $\supp(Q) \in \Pi$ and $d(P,\mathcal{D}(\Pi)) = d(P,Q)$. Let $T$ be a transfer distribution (over $\Sigma^n \times \Sigma^n$) that realizes the (EMD) distance between $P$ and $Q$. For every $x \in \supp(P)$, let $f(x) = \arg \min\limits_{y \in \supp(Q)} d(x,y)$ (ties are broken arbitrarily but consistently). Observe that $\supp(f(P)) \subseteq \supp(Q)$, and thus $f(P) \in \mathcal{D}(\Pi)$. Finally,
		\begin{eqnarray*}
			d(P,\mathcal{D}(\Pi)) \le d(P,f(P)) & \le & \sum_{x \in \supp(P)} \Pr_P[x] d(x,f(x)) \\
			& = & \sum_{\substack{x \in \supp(P) \\ y \in \supp(Q)}} T(x,y) d(x,f(x)) \\
			& \le & \sum_{\substack{x \in \supp(P) \\ y \in \supp(Q)}} T(x,y) d(x,y) =\ d(P,Q) = d(P,\mathcal{D}(\Pi))
		\end{eqnarray*}
		and hence all are equal.
	\end{proof}
	
	Finally we state the following ubiquitous lemma for property testing lower bounds. A restricted version appears in \cite{fischer2004art}. A specific instance of this lemma for (fully adaptive) decision trees was first implicitly proved in \cite{fns04}.
	
	\begin{lemma}[useful form of Yao's principle]
		\label{lemma:yao-principle}
		Fix some $\eps > 0$ and $\alpha < \frac{1}{3}$, and let $P$ be a property of distributions over length $n$ strings over a specific alphabet $\Sigma$. Let $D_\mathrm{yes}$ be a distribution over distributions over strings that draws distributions that belong to $P$, and let $D_\mathrm{no}$ be a distribution over distributions over strings that draws a distribution that is $\eps$-far from $P$ with probability $1 - \alpha$ or more. If, for every allowable deterministic algorithm that uses less than $q$ queries, the variation distance between the distribution over answer sequences (e.g.\ leaf identifiers) from an input drawn from $D_\mathrm{yes}$ and the distribution over answer sequences from an input drawn from $D_\mathrm{no}$ is less than $\frac{1}{3} - \alpha$, then every $\eps$-test (in the corresponding model) for $P$ must use at least $q$ queries.
	\end{lemma}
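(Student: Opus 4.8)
The plan is to prove this by the standard averaging argument underlying Yao's minimax principle, adapted to the Huge Object setting where an "algorithm's view" of the input is its sequence of query answers (equivalently, the leaf it reaches in the decision tree). First I would fix an arbitrary allowable deterministic algorithm $\mathcal{A}$ using fewer than $q$ queries, together with its set $A$ of accepting leaves (or accepting answer sequences). For a fixed input distribution $R$ over strings, the behavior of $\mathcal{A}$ on $R$ is itself randomized only through the random draw of the $s$ samples from $R$ and the reading of their bits; this induces a distribution $\mu_{\mathcal{A},R}$ over answer sequences. I would then observe the decomposition $\Pr[\mathcal{A}\text{ accepts }R] = \Pr_{L\sim\mu_{\mathcal{A},R}}[L\in A]$, so that the acceptance probability is a linear functional (an indicator-event probability) of the answer-sequence distribution.

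Next I would take expectations over the meta-distributions. Define $\overline{\mu}_{\mathrm{yes}} = \E_{R\sim D_{\mathrm{yes}}}[\mu_{\mathcal{A},R}]$ and $\overline{\mu}_{\mathrm{no}} = \E_{R\sim D_{\mathrm{no}}}[\mu_{\mathcal{A},R}]$; these are exactly the "distribution over answer sequences from an input drawn from $D_{\mathrm{yes}}$ (resp.\ $D_{\mathrm{no}}$)" referred to in the statement. By hypothesis, $d_{\mathrm{var}}(\overline{\mu}_{\mathrm{yes}}, \overline{\mu}_{\mathrm{no}}) < \frac13 - \alpha$. Hence for the event $A$,
\[
\left| \Pr_{L\sim\overline{\mu}_{\mathrm{yes}}}[L\in A] - \Pr_{L\sim\overline{\mu}_{\mathrm{no}}}[L\in A] \right| \le d_{\mathrm{var}}(\overline{\mu}_{\mathrm{yes}}, \overline{\mu}_{\mathrm{no}}) < \tfrac13 - \alpha.
\]
By linearity of expectation, $\Pr_{L\sim\overline{\mu}_{\mathrm{yes}}}[L\in A] = \E_{R\sim D_{\mathrm{yes}}}[\Pr[\mathcal{A}\text{ accepts }R]]$ and similarly for $D_{\mathrm{no}}$. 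So the average acceptance probability of $\mathcal{A}$ under $D_{\mathrm{yes}}$ and under $D_{\mathrm{no}}$ differ by less than $\frac13-\alpha$.

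Now suppose toward a contradiction that $\mathcal{A}$ were (part of the support of) a valid $\eps$-test using fewer than $q$ queries. Since $D_{\mathrm{yes}}$ draws only members of $P$, the correctness guarantee forces $\E_{R\sim D_{\mathrm{yes}}}[\Pr[\mathcal{A}\text{ accepts }R]] \ge \frac23$. Since $D_{\mathrm{no}}$ draws an input that is $\eps$-far from $P$ with probability at least $1-\alpha$, on those inputs $\mathcal{A}$ accepts with probability at most $\frac13$, and on the remaining $\alpha$-fraction we bound the acceptance probability trivially by $1$, giving $\E_{R\sim D_{\mathrm{no}}}[\Pr[\mathcal{A}\text{ accepts }R]] \le \frac13 + \alpha$. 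Therefore the gap is at least $\frac23 - (\frac13+\alpha) = \frac13 - \alpha$, contradicting the bound derived above. This argument is uniform over all deterministic algorithms in the relevant class using fewer than $q$ queries; since (by Yao's observation, as recalled in the preliminaries) every probabilistic $\eps$-test in the model is a distribution over such deterministic algorithms, averaging over its coin tosses shows no probabilistic $\eps$-test with fewer than $q$ queries can exist either. The only mildly delicate point — and the step I would be most careful with — is the very first one: making precise that for each of the restricted models defined in Section~\ref{sec:prelim:subsec:models}, an allowable deterministic algorithm really does induce a well-defined distribution over answer sequences depending only on $R$, so that "answer sequence" is the correct notion of the algorithm's transcript and the event $A\subseteq\Sigma^q$ is measurable with respect to it; this is immediate from the decision-tree descriptions but worth stating so the lemma applies verbatim to every model in the hierarchy.
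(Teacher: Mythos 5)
Your proof follows the standard averaging argument that the paper merely alludes to (the paper gives no detailed proof, just remarking that the lemma ``is well known'' and sketching the idea of running the algorithm on the mixed input $\frac{1}{2}(D_\mathrm{yes}+D_\mathrm{no})$, which is an equivalent cast of the same calculation phrased as error probability rather than acceptance gap). Your first half, reducing the acceptance-probability gap of every allowable deterministic $\mathcal{A}$ to the variation distance between its answer-sequence distributions, is exactly right, as is your observation about which notion of ``transcript'' to use.

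One step needs reordering. You write ``suppose toward a contradiction that $\mathcal{A}$ were (part of the support of) a valid $\eps$-test'' and then apply the $\ge\frac{2}{3}$ and $\le\frac{1}{3}$ guarantees to that single deterministic $\mathcal{A}$. Those guarantees are promised only for the probabilistic algorithm as a whole, not for each deterministic algorithm in its support, so that inference is not valid as written. The clean ordering is: (1) your variation-distance calculation shows that \emph{every} allowable deterministic $\mathcal{A}$ with fewer than $q$ queries satisfies $\E_{R\sim D_\mathrm{yes}}[\Pr[\mathcal{A}\text{ accepts }R]] - \E_{R\sim D_\mathrm{no}}[\Pr[\mathcal{A}\text{ accepts }R]] < \frac{1}{3}-\alpha$; (2) a probabilistic algorithm $\mathcal{B}$ with fewer than $q$ queries is, by Yao's observation, a distribution over such $\mathcal{A}$, so by linearity the same difference for $\mathcal{B}$ is an average of the deterministic differences and is therefore also $< \frac{1}{3}-\alpha$; (3) only now invoke the test guarantees, which apply to $\mathcal{B}$: an $\eps$-test must have $\E_{R\sim D_\mathrm{yes}}[\Pr[\mathcal{B}\text{ accepts }R]]\ge\frac{2}{3}$ and $\E_{R\sim D_\mathrm{no}}[\Pr[\mathcal{B}\text{ accepts }R]]\le\frac{1}{3}+\alpha$, forcing a gap of at least $\frac{1}{3}-\alpha$, contradicting (2). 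Your closing sentence about ``averaging over its coin tosses'' gestures at this, but the contradiction as you structured it targets the wrong object; with the reordering above the argument is complete.
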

	For the meaning of ``allowable deterministic algorithms'' above, refer to the discussion surrounding the various models of adaptivity defined in Subsection \ref{sec:prelim:subsec:models}. This form of Yao's lemma is well known and justified by analyzing the behavior of any deterministic algorithm over the distribution over input distributions $\frac{1}{2}(D_\mathrm{yes}+D_\mathrm{no})$, leading to an error probability larger than $\frac{1}{3}$.
	
	\section{The non-adaptive model} \label{sec:na}
	
	The following section presents some core properties and methodologies that serve as building blocks for other Huge Object algorithms and their analysis. While the results in this section appear implicitly in \cite{gr2022} (through non-specific reductions), we optimize their query complexity using property-specific algorithms.
	
	\subsection{All zero test}
	
	The all zero test is conceptually the simplest non-trivial testing problem in every reasonable model. Despite its simplicity, in the Huge Object model it is a core building block for reducing the polynomial order of $\eps$ in the query complexity of some properties, compared to black box reductions like in \cite[Theorem 1.4]{gr2022}. Formally, a distribution $P$ over $\{0,1\}^n$ belongs to $\mathcal{D}(0)$ if $\supp(P) = \{0^n\}$.
	
	The $\eps$-testing algorithm is quite simple. We take $\ceil{\eps^{-1}}$ samples, and from each one of them we query a randomly chosen index. We accept if all answers are 0, and otherwise we reject.
	
	\begin{algorithm}
		\caption{One-sided $\eps$-test for all zero, non adaptive, $O(\eps^{-1})$ queries}
		\label{alg:na:p-zero}
		\begin{algorithmic}
			\State \textbf{take} $s = \ceil{\eps^{-1}}$ samples.
			\ForRange{$i$}{$1$}{$s$}
			\State \textbf{choose} $j_i \in [n]$ uniformly at random.
			\State \textbf{query} sample $i$ at index $j_i$, giving $b_i$.
			\If{$b_i \ne 0$}
			\State \Return \reject
			\EndIf
			\EndFor
			\State \Return \accept
		\end{algorithmic}
	\end{algorithm}
	
	\begin{observation} \label{obs:na:p-zero}
		Algorithm \ref{alg:na:p-zero} is a one-sided error $\eps$-test, and its query complexity is $O(\eps^{-1})$.
	\end{observation}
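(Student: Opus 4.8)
The plan is to verify the two claims separately: perfect completeness (accepting every $P \in \mathcal{D}(0)$ with probability $1$) and soundness (rejecting every $P$ that is $\eps$-far from $\mathcal{D}(0)$ with probability at least $\frac12$), and then note the query bound.

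For completeness, observe that if $\supp(P) = \{0^n\}$ then every sample is the all-zero string, so every queried bit $b_i$ equals $0$ regardless of the random choices of $j_i$. Hence Algorithm \ref{alg:na:p-zero} never reaches a \reject statement and returns \accept with probability $1$.

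For soundness, the key step is to relate the distance $d(P,\mathcal{D}(0))$ to the probability that a single uniformly random bit of a single sample is nonzero. Note that $\mathcal{D}(0) = \mathcal{D}(\{0^n\})$ is of the form $\mathcal{D}(\Pi)$ for the (trivially monotone) family $\Pi = \{\emptyset, \{0^n\}\}$, so by Lemma \ref{lemma:canonical-dist-from-support-property} the optimal transfer simply maps every $x \in \supp(P)$ to $0^n$, giving $d(P,\mathcal{D}(0)) = \sum_{x \in \supp(P)} \Pr_P[x]\, d_\mathrm{H}(x, 0^n) = \E_{x \sim P}\big[\tfrac{1}{n}|\{j : x_j \ne 0\}|\big]$. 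The right-hand side is exactly the probability, over a fresh sample $x \sim P$ and an independent uniform $j \in [n]$, that $x_j \ne 0$; call this probability $p$. So if $P$ is $\eps$-far from $\mathcal{D}(0)$ then $p = d(P,\mathcal{D}(0)) > \eps \geq 1/s$. Each iteration independently rejects with probability $p$ (the samples are independent and the $j_i$ are independent uniform), so the algorithm accepts with probability $(1-p)^s < (1 - 1/s)^s < e^{-1} < \tfrac12$, which gives the required soundness bound. (If one prefers to avoid $e^{-1}$, the elementary bound $(1-p)^s \le 1 - \min(1, sp) \cdot$-type estimates or simply $(1-1/s)^s \le 1/2$ for $s \ge 2$ together with a separate check of the $s = 1$ case also works.)

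Finally, the algorithm takes $s = \ceil{\eps^{-1}}$ samples and makes exactly one query per sample, so its query complexity is $s = \ceil{\eps^{-1}} = O(\eps^{-1})$; it is clearly non-adaptive since all query locations are chosen before any answer is seen. The only mild subtlety — the ``main obstacle,'' though it is a small one — is the identification of $d(P,\mathcal{D}(0))$ with the expected normalized Hamming weight of $P$, which is where Lemma \ref{lemma:canonical-dist-from-support-property} (or a direct one-line transfer-distribution argument) is invoked; everything else is immediate.
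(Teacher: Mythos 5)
Your proof is correct and follows essentially the same route as the paper: identify $d(P,\mathcal{D}(0))$ with the expected normalized Hamming weight of a sample (because the target is a point mass at $0^n$), observe that a single iteration rejects with exactly that probability, and bound $(1-p)^s$. The only cosmetic difference is which lemma you cite for the distance identity (you invoke Lemma~\ref{lemma:canonical-dist-from-support-property}, the paper invokes Lemma~\ref{lemma:emd-supp}; both are overkill since the EMD to a point mass is just the expected distance to that point), and that you route through $e^{-1}$ whereas the paper directly writes $(1-\eps)^{\eps^{-1}}<\frac12$.
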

	
	\begin{proof}
		Given a string sample $x$, the probability to query a $1$-bit is exactly $d(x,0^n)$. If $x$ itself is drawn from another distribution $P$, then:
		\[\Pr_{\substack{x \sim P \\ j \sim [n]}}\left[x_j = 1\right] = \E_{x \sim P}\left[\Pr_{j \sim [n]}\left[x_j = 1\right]\right] = \E_{x \sim P}\left[d(x,0^n)\right] = d(P,\mathcal{D}(0))\]
		Where the last transition relies on Lemma \ref{lemma:emd-supp}. For $\eps$-far inputs,
		\[\Pr_P\left[\accept\right] = \prod_{i=1}^{s} \Pr_{\substack{x^i \sim P \\ j \sim [n]}}\left[(x^i)_j = 0\right] = \prod_{i=1}^{s} \left(1 - d(P,\mathcal{D}(0))\right) = (1 - d(P,\mathcal{D}(0)))^s\]
		
		For $s \ge \eps^{-1}$, the probability to accept $\eps$-far inputs is at most $(1-\eps)^{\eps^{-1}} < \frac{1}{2}$ as desired.
	\end{proof}
	
	\subsection{Determinism test}
	
	We show a one-sided $\eps$-test algorithm for the property of having only one element in the support, using $O(\eps^{-1})$ samples and $O(\eps^{-1})$ queries.
	
	Consider some fixed $z \in \{0,1\}^n$. If a distribution is $\eps$-far from being deterministic, then it must also be $\eps$-far from being supported by $\{z\}$. Our algorithm considers the first sample as $z$, and then it uses the other samples to test $P$ for being supported by $\{z\}$. The cost of every logical query is two physical queries (because $z$ is not actually fixed, and to find its individual bits we need to query them).
	
	\begin{algorithm}[H]
		\caption{One-sided $\eps$-test for determinism, non adaptive, $O(\eps^{-1})$ queries}
		\label{alg:na:determinism}
		\begin{algorithmic}
			\State \textbf{take} $s = 1 + \ceil{\eps^{-1}}$ samples.
			\ForRange{$i$}{$2$}{$s$}
			\State \textbf{choose} $j_i \in [n]$, uniformly at random.
			\State \textbf{query} $j_i$ at sample $1$, giving $x^1_{j_i}$.
			\State \textbf{query} $j_i$ at sample $i$, giving $x^i_{j_i}$.
			\If{$x^i_{j_i} \ne x^1_{j_i}$}
			\State \Return \reject
			\EndIf
			\EndFor
			\State \Return \accept
		\end{algorithmic}
	\end{algorithm}
	
	\begin{observation} \label{obs:na:determinism}
		Algorithm \ref{alg:na:determinism} is a non-adaptive one-sided error $\eps$-test for determinism, and its query complexity is $O(\eps^{-1})$.
	\end{observation}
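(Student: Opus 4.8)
The plan is to mirror the analysis of Algorithm~\ref{alg:na:p-zero}, replacing the fixed target string $0^n$ with the (random) first sample $x^1$, and replacing the property $\mathcal{D}(0)$ with the property $\mathcal{D}(1)$ of distributions supported on a single string. The key structural observation is that $\mathcal{D}(1)$ is exactly $\mathcal{D}(\Pi)$ for the monotone family $\Pi$ of all singletons together with the empty set, so Lemma~\ref{lemma:canonical-dist-from-support-property} applies: the distance of $P$ from determinism is realized by collapsing $\supp(P)$ onto a single string $z^\ast$, and $d(P,\mathcal{D}(1)) = \sum_{x \in \supp(P)} \Pr_P[x]\, d(x,z^\ast)$ for the optimal $z^\ast$. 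In particular, for \emph{every} fixed $z \in \{0,1\}^n$ we have $\E_{x\sim P}[d(x,z)] \ge d(P,\mathcal{D}(1))$.

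Next I would carry out the completeness (one-sided) direction: if $P \in \mathcal{D}(1)$, then all samples are equal to the common support element, so $x^1_{j_i} = x^i_{j_i}$ always and the algorithm accepts with probability~$1$. Then the soundness direction: condition on the first sample $x^1 = z$. Conditioned on this, each later iteration $i$ queries a uniformly random index $j_i$ of an independent sample $x^i \sim P$ and rejects unless $x^i_{j_i} = z_{j_i}$; the probability of \emph{not} rejecting in a single iteration is $\E_{x^i \sim P}\big[\Pr_{j}[x^i_j = z_j]\big] = \E_{x^i\sim P}[1 - d(x^i,z)] = 1 - \E_{x^i\sim P}[d(x^i,z)] \le 1 - d(P,\mathcal{D}(1))$, using the Lemma-based inequality above for the fixed string $z$. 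Since the $s-1 = \ceil{\eps^{-1}}$ iterations are (conditionally on $x^1$) independent, the conditional acceptance probability is at most $(1 - d(P,\mathcal{D}(1)))^{\ceil{\eps^{-1}}}$; for $\eps$-far $P$ this is at most $(1-\eps)^{\ceil{\eps^{-1}}} < \tfrac12$, and the bound survives taking expectation over $x^1$. The query and sample counts are $2\ceil{\eps^{-1}} = O(\eps^{-1})$ and $1 + \ceil{\eps^{-1}} = O(\eps^{-1})$ respectively, and the algorithm is visibly non-adaptive (the only adaptive-looking step, the early \reject, can be deferred to the end without changing behavior, so it fits the non-adaptive model).

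The one genuinely non-routine point — and the step I would be most careful about — is justifying the inequality $\E_{x\sim P}[d(x,z)] \ge d(P,\mathcal{D}(1))$ for the \emph{specific} $z = x^1$ that the algorithm happens to see, rather than for the distance-optimal $z^\ast$. This is where invoking Lemma~\ref{lemma:canonical-dist-from-support-property} (or directly the definition of $d(P,\mathcal{D}(1))$ as an infimum over deterministic distributions, together with $d_{\mathrm{EMD}}(P,\delta_z) \le \E_{x\sim P}[d(x,z)]$ via the product transfer distribution) does the work: $\delta_z \in \mathcal{D}(1)$ for any $z$, so $d(P,\mathcal{D}(1)) \le d_{\mathrm{EMD}}(P,\delta_z) \le \E_{x\sim P}[d(x,z)]$ regardless of which $z$ arises. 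Everything else is a direct transcription of the all-zero analysis with this substitution.
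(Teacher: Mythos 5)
Your proposal is correct and follows essentially the same route as the paper: condition on $x^1=z$, observe that $\delta_z\in\mathcal{D}(1)$ so $\E_{x\sim P}[d(x,z)] = d_{\mathrm{EMD}}(P,\delta_z)\ge d(P,\mathcal{D}(1))$ for \emph{any} $z$, and then reuse the all-zero analysis verbatim with $z$ in place of $0^n$. The paper's proof is more terse (it just cites ``like the all-zero test'' after the reduction to testing support $\{z\}$), and you are a bit more explicit about why the $z$ seen need not be the optimal $z^\ast$ — which is exactly the right point to flag — but there is no substantive difference in the argument, including the handling of the early-reject step for non-adaptivity.
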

	
	\begin{proof}
		By the discussion above, proving that it is a one-sided $\eps$-test is almost identical to the proof of the all-zero test. Observe that the algorithm cannot reject an input with support size $1$. The probability to reject an $\eps$-far input is:
		\[\Pr_P\left[\reject\right] = \sum_{z \in \{0,1\}^n} \Pr\limits_P\left[x^1 = z\right] \underbrace{\Pr\limits_P\left[\reject \cond x^1 = z \right]}_{\geq 1 / 2\mathrm{\ like\ the\ all\ zero\ test}} \geq \frac{1}{2}\]
		
		To show that it is non-adaptive, note that we can make the random choices for $j_2,\ldots,j_s$ in advance, and then we can query these indexes from $x^1$ and the other corresponding samples in a single batch.
	\end{proof}
	
	\subsection{Bounded support test}
	
	We show a one-sided, non-adaptive $\eps$-test algorithm for the property of having at most $m$ elements in the support, using $O(\eps^{-1} m)$ samples and $O(\eps^{-2} m \log m)$ queries.
	
	\begin{lemma}
		\label{lemma:sparse-support-cover}
		Let $P$ be a distribution over $\{0,1\}^n$ that is $\eps$-far from being supported by $m$ elements (or less). The expected number of independent samples that we have to draw until we get $m+1$ elements of the support that are pairwise $\frac{1}{2}\eps$-far, is at most $1 + 2 \eps^{-1} m$.
	\end{lemma}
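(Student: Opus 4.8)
The plan is to run a greedy/sequential process: draw samples one at a time, and maintain a set $S$ of ``representatives'' that are pairwise $\frac12\eps$-far. When a new sample $x$ arrives, add it to $S$ if it is $\frac12\eps$-far from every element currently in $S$; otherwise discard it. Stop once $|S| = m+1$. I want to bound the expected number of draws until we stop. The key observation is that as long as $|S| \le m$, the distribution $P$ is still ``caught'' by the fact that it is $\eps$-far from being supported on $m$ elements, and in particular $\eps$-far from being supported on the at most $m$ closed balls of radius $\frac12\eps$ around the current elements of $S$. So a fresh sample has a decent chance of landing outside all these balls and thus joining $S$.

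The main step is the per-sample progress bound. Fix a stage where $S = \{z_1,\ldots,z_t\}$ with $t \le m$. Let $B = \bigcup_{i=1}^t \{y : d(y,z_i) < \frac12\eps\}$, and consider the ``rounding'' map $f$ that sends each point of $B$ to a nearest $z_i$ (and fixes points outside $B$). Then $f(P)$ restricted appropriately has support of size at most $m$ among the $z_i$'s together with the points of $\supp(P)\setminus B$; but actually the cleaner route is: since $P$ is $\eps$-far from being supported on $\le m$ points, and $\{z_1,\ldots,z_t\}$ has size $t \le m$, we have $d(P, \mathcal{D}(\{z_1,\ldots,z_t\})) \ge \eps$. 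Using Lemma \ref{lemma:canonical-dist-from-support-property} (or directly the sample map bounding the EMD), this forces $\E_{x\sim P}[d(x,\{z_1,\ldots,z_t\})] \ge \eps$, hence $\Pr_{x\sim P}[d(x,\{z_1,\ldots,z_t\}) \ge \frac12\eps] \ge \frac12\eps$ by a reverse-Markov / averaging argument (using that the Hamming distance is at most $1$): if $d(x,S)$ were $<\frac12\eps$ with probability $> 1-\frac12\eps$, the expectation would be $< \frac12\eps\cdot 1 + 1\cdot\frac12\eps = \eps$, a contradiction. So each drawn sample, conditioned on the current state with $t \le m$, enlarges $S$ with probability at least $\frac12\eps$.

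Now I finish with a waiting-time summation. Because each of the $m+1$ ``successes'' (going from $|S|=t$ to $|S|=t+1$, for $t=0,1,\ldots,m$) happens on each trial with probability at least $\frac12\eps$ regardless of history, the number of draws needed for one more success is dominated by a geometric random variable with mean at most $2\eps^{-1}$; by linearity of expectation the total expected number of draws to reach $|S|=m+1$ is at most $(m+1)\cdot 2\eps^{-1}$. To sharpen to the stated $1 + 2\eps^{-1}m$: the very first sample always enters $S$ for free (it is trivially $\frac12\eps$-far from the empty set), so only $m$ further successes are needed, giving $1 + m\cdot 2\eps^{-1}$. The step I expect to need the most care is the per-sample bound — specifically making rigorous that ``$P$ is $\eps$-far from $\le m$ points'' implies ``$\E_{x\sim P}[d(x,\{z_1,\ldots,z_t\})]\ge\eps$''; this is exactly where Lemma \ref{lemma:canonical-dist-from-support-property} with the monotone family $\Pi = \{A : |A| \le m\}$ applies, since the optimal transfer distribution from $P$ to a nearest $\le m$-supported distribution can be taken to move each $x$ to its nearest representative, so the realized distance equals $\E_{x\sim P}[d(x,\{z_1,\ldots,z_t\})]$ once we include $\{z_1,\ldots,z_t\}$ as a candidate support; any smaller expectation would contradict $\eps$-farness.
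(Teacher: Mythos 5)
Your argument is correct and follows essentially the same route as the paper's: condition on the current set of at most $m$ pairwise-far representatives, use the $\eps$-farness of $P$ to get $\E_{x\sim P}[d(x,S)]\ge\eps$, apply a reverse-Markov/averaging step to get a success probability of at least $\frac12\eps$ per draw, and then sum geometric waiting times by linearity of expectation, noting the first sample is free. The only cosmetic difference is that you spell out the appeal to Lemma~\ref{lemma:canonical-dist-from-support-property} for the expectation bound where the paper asserts it with a one-line parenthetical, and you replace the named reverse Markov's inequality with an explicit averaging contradiction.
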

	\begin{proof}
		For the purpose of the analysis, consider an infinite sequence $X_1,X_2,\ldots$ of samples that are independently drawn from $P$. For every set $A$ of at most $m$ elements, the expected distance of the next sample from $A$ is at least $\eps$ (since otherwise $P$ would be $\eps$-close to be supported by $A$). By reverse Markov's inequality (Lemma \ref{lemma:anti-markov}), the probability to draw a sample that is $\frac{1}{2}\eps$-far from $A$ is at least $\frac{1}{2}\eps$.
		
		For every $i \le 1$, let $T_i$ be the index of the first sample that is $\frac{1}{2}\eps$-far from $\{X_{T_1},\ldots,X_{T_{i-1}}\}$. Trivially, $T_1 = 1$, and for every $2 \le i \le m+1$, $T_i - T_{i-1}$ is a geometric variable with success probability of at least $\frac{1}{2}\eps$ (the set $\{X_{T_1},\ldots,X_{T_{i-1}}\}$ takes the role of the set $A$ in the discussion above), and thus its expected value is at most $2\eps^{-1}$.
		
		By linearity of expectation, $\E\left[T_{m+1}\right] = \E\left[T_1\right] + \sum_{i=2}^{m+1} \E\left[T_i - T_{i-1}\right] \le 1 + 2 m \eps^{-1}$.
	\end{proof}
	
	The algorithm works as follows: we choose a set $J$ of $t = \ceil{4 \eps^{-1} \ln m}$ indexes, and take $s = 1 + \ceil{8 m \eps^{-1}}$ samples. Then we query every sample in all indexes of $J$, and reject if we find a set of $m+1$ samples whose restrictions to $J$ are distinct.
	
	\begin{algorithm}
		\caption{One sided $\eps$-test for $m$-bounded support, non adaptive, $O(\eps^{-2} m \log m)$ queries}
		\label{alg:na:m-support}
		
		\begin{algorithmic}
			\State \textbf{take} $s = 1 + \ceil{8 \eps^{-1}m}$ samples.
			\State \textbf{let} $t = \ceil{4 \eps^{-1} (\ln m + 2)}$
			\State \textbf{choose} $j_1,\ldots,j_t \in [n]$ uniformly and independently at random.
			\State \textbf{let} $J = \{j_1, \ldots, j_t\}$
			\ForRange{$i$}{$1$}{$s$}
			\State \textbf{query} sample $i$ at $j$ for every $j \in J$, giving substring $y^i$ of length $\card{J}$.
			\EndFor
			\If{$\card{\set{y^1,\ldots,y^s}} > m$}
			\State \Return \reject
			\EndIf
			\State \Return \accept
		\end{algorithmic}
	\end{algorithm}
	
	\begin{theorem} \label{th:na:m-support}
		Algorithm \ref{alg:na:m-support} is a one-sided $\eps$-test for being supported by at most $m$ elements.
	\end{theorem}
	
	\begin{proof}
		For proving complexity, observe that the algorithm draws $O(\eps^{-1} m)$ samples and makes $O(\eps^{-1} \log m)$ queries to each of them, giving a total of $O(\eps^{-2} m \log m)$ queries.
		
		For perfect completeness, consider an input distribution $P$ that is supported by a set of $k$ elements (for $k \le m$).Note that in this case $\card{\{y^1,\ldots,y^s\}} \le k$ for every choice of $J$. Thus, the algorithm must accept it with probability $1$.
		
		For soundness, consider an input distribution $P$ that is $\eps$-far from being supported by any set of $m$ elements. By Lemma \ref{lemma:sparse-support-cover} and Markov's inequality, with probability higher than $1 - \frac{1}{4}$, there are at least $m+1$ pairwise $\frac{1}{2}\eps$-far elements within the $s$ samples of the algorithm. If this happens, then for every pair of these elements, the probability that they agree on all indexes of $J$ is at most $(1 - \frac{1}{2}\eps)^{4\eps (\ln m + 2)}$, which is less than $\frac{1}{e^2 m^2}$. The probability that $J$ fails to distinguish even one of the $\binom{m}{2}$ pairs is at most $e^{-2}$. Hence, the probability of the algorithm to reject is at least $1 - \frac{1}{4} - e^{-2} > \frac{1}{2}$.
	\end{proof}
	
	\section{The locally-bounded model} \label{sec:local}
	
	The locally bounded model captures the concept of distributed execution in the Huge Object model. Every sample is processes adaptively using a (possibly) different logic, but nothing is shared across samples. After all nodes are done, the algorithm makes its decision based on the concatenation of their results. Lower bounds for this model are surprisingly hard to prove, and we use a corresponding string model to show them.
	
	\subsection{Split adaptive string testing}
	
	We define a model of string algorithms that helps us to analyze some variants of locally bounded adaptive algorithms.
	\begin{definition}[Split adaptive algorithm]
		For a fixed $k$, a \emph{$k$-split adaptive} deterministic algorithm for $n$-long strings (where $n$ is divisible by $k$) over some alphabet $\Sigma$ is a sequence of $k$ decision trees $T_1,\ldots,T_k$, where the tree $T_i$ can only query at indexes between $(i-1)k+1$ and $ik$, and a set of accepted answer sequences. The query complexity of the algorithm is defined as the sum of heights of its trees.
	\end{definition}
	
	\begin{observation}
		Every $k$-split deterministic adaptive algorithm can be represented as the tuple $(T_1,\ldots,T_k,A)$, that consists of its $k$ decision trees and the set of accepted answer sequences. A $k$-split probabilistic algorithm can be seen as a distribution over such tuples.
	\end{observation}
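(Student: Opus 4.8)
The plan is to simply unwind the definition of a $k$-split adaptive algorithm and then appeal to Yao's observation, exactly as in the treatment of the other models in Subsection~\ref{sec:prelim:subsec:models}.

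First I would handle the deterministic case. By definition a $k$-split deterministic algorithm \emph{is} given by the data $(T_1,\dots,T_k)$ together with a set of accepted answer sequences, so the content of the first sentence is just to check that these data determine the algorithm's behavior. I would observe that on any input string $x$, running $T_i$ consults only the coordinates of $x$ in the $i$th block $\{(i-1)k+1,\dots,ik\}$ and therefore reaches a well-defined leaf, whose root-to-leaf path is an answer string of length $\mathrm{height}(T_i)$; concatenating these $k$ strings in order produces the algorithm's answer sequence, and (after the usual padding of leaves to a common depth, as in Definition~\ref{def:det-alg-ho-model}) acceptance is decided by membership of this sequence in $A$. Nothing here refers to anything outside the tuple $(T_1,\dots,T_k,A)$, and conversely any such tuple whose trees respect the block restriction describes a legal $k$-split deterministic algorithm; hence the representation is faithful and exhaustive.

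For the probabilistic case I would invoke Yao's principle as recalled after \cite{yao1977}: a probabilistic algorithm chooses a string of random coins and thereafter runs deterministically. Conditioned on any fixing of the coins, the resulting deterministic algorithm still obeys the split restriction --- fixing the coins cannot cause a tree to query outside its own block --- so it is again described by a tuple of the above form. Therefore the probabilistic $k$-split algorithm is precisely the distribution over these tuples induced by the coin distribution.

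I do not expect a genuine obstacle here; the one point worth stating explicitly is that the block restriction is \emph{preserved under conditioning on the coins}, which is exactly what licenses viewing a probabilistic $k$-split algorithm as a distribution over \emph{$k$-split} deterministic ones (rather than as a randomized object that splits only on average). This is the same closure property that underlies the paper's general convention of specifying a class of probabilistic algorithms through its support of deterministic ones.
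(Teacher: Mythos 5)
Your proposal is correct and matches the paper's intent: the deterministic case is immediate from the definition of a $k$-split adaptive algorithm, and the probabilistic case is the standard Yao decomposition into deterministic algorithms, with the one point worth noting being that fixing the coins preserves the block restriction. The paper in fact gives no explicit proof of this observation, treating it as a direct consequence of the definition and the discussion of Yao's principle, which is exactly how you have filled it in.
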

	
	\begin{lemma}[Construction of a $2$-split adaptive string algorithm from locally bounded one]
		\label{lemma:2-split-construction} 
		For some fixed alphabet $\Sigma$, let $R \subseteq \left(\Sigma^n\right)^2$ be a reflexive and symmetric binary relation, and let $\Pi_R$ be the property of $2n$-long strings that are concatenation of $u,v \in \Sigma^n$ such that $(u,v) \in R$. Let $\mathcal{P}_R$ be the property over distributions over $\Sigma^n$ which states that $P\in\mathcal{P}_R$ if it is supported over a set $\{u,v\}$ such that $(u,v) \in R$ (note that by the assumption that $R$ is reflexive, every deterministic distribution is in $\mathcal{P}_R$). There exists an algorithmic construction whose input is a $2$-split adaptive $\eps$-test algorithm for $\Pi_R$, and its output is a locally-bounded $\eps$-test algorithm for $\mathcal{P}_R$, with the same number of queries. Also, the construction preserves one-sided error, if exists in the input.
	\end{lemma}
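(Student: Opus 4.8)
The plan is to convert a given $2$-split adaptive $\eps$-test $\mathcal{B}$ for the string property $\Pi_R$ directly into a locally-bounded $\eps$-test $\mathcal{A}$ for the distribution property $\mathcal{P}_R$. Recall that $\mathcal{B}$, as a probabilistic algorithm, is a distribution over tuples $(T_1, T_2, A)$, where $T_1$ queries only the first block of $n$ coordinates and $T_2$ only the second block; and a locally-bounded algorithm is a tuple $(S_1, \ldots, S_s; A)$ where $S_i$ queries only the $i$th sample. The natural construction takes $s = 2$ samples, uses (a copy of) $T_1$ as $S_1$ acting on the first sample, uses $T_2$ as $S_2$ acting on the second sample, and keeps the same accepting set $A$. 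Since $T_1$ and $T_2$ are decision trees over disjoint coordinate blocks, nothing is shared between them, so $\mathcal{A}$ is genuinely locally-bounded with the same query complexity as $\mathcal{B}$, and it preserves one-sided error (the acceptance predicate is unchanged). The randomness of $\mathcal{B}$ carries over verbatim to randomness of $\mathcal{A}$.

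The crux is the analysis of correctness, and the key structural observation is this: when $\mathcal{A}$ is run on an input distribution $P$ supported on a set $\{u, v\}$ (with $u$ possibly equal to $v$), the two samples $x^1, x^2$ are i.i.d.\ draws from $\{u,v\}$, so the pair $(x^1, x^2)$ takes one of the four values $(u,u), (u,v), (v,u), (v,v)$, and in each case the concatenation $x^1 x^2$ is a $2n$-long string whose two halves both lie in $\supp(P)$. For completeness: if $P \in \mathcal{P}_R$ then $(u,v) \in R$, and by reflexivity and symmetry of $R$ we also have $(u,u), (v,v), (v,u) \in R$; hence \emph{every} possible concatenation $x^1 x^2$ lies in $\Pi_R$. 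Running $T_1$ on $x^1$ and $T_2$ on $x^2$ and testing membership in $A$ is exactly running $\mathcal{B}$ on the string $x^1 x^2 \in \Pi_R$, so $\mathcal{A}$ accepts with the probability that $\mathcal{B}$ accepts a yes-instance — at least $2/3$ (or $1$ in the one-sided case).

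For soundness I need: if $P$ is $\eps$-far from $\mathcal{P}_R$ then $\mathcal{A}$ rejects with probability at least $1/3$ (resp.\ more than $1/2$). The plan is to relate the EMD-distance of $P$ from $\mathcal{P}_R$ to the Hamming distance of the random concatenation $x^1 x^2$ from $\Pi_R$. The relevant tool is Lemma~\ref{lemma:canonical-dist-from-support-property}: since $\mathcal{P}_R = \mathcal{D}(\Pi)$ for the monotone-nonincreasing family $\Pi$ of subsets of $\Sigma^n$ of size at most $2$ that are ``$R$-cliques'' (a singleton $\{u\}$ qualifies by reflexivity), the distance $d(P, \mathcal{P}_R)$ is realized by a ``direct translation'' $f : \supp(P) \to A^*$ for some $A^* \in \Pi$, with $d(P,\mathcal{P}_R) = \sum_x \Pr_P[x]\, d(x, f(x))$. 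The main obstacle — and the step that needs the most care — is turning this into a statement about the \emph{pair} of samples: I want to argue that if $\mathcal{A}$ fed to $\mathcal{B}$ rarely produced strings far from $\Pi_R$, then $P$ would be close to $\mathcal{P}_R$. Concretely, suppose toward a contradiction that with high probability $x^1 x^2$ is $\eps$-close to $\Pi_R$; each such close string has both halves close to a common size-$\le 2$ $R$-clique, and by an averaging/pigeonhole argument over the (at most quadratically many) relevant cliques one extracts a single clique $\{u^*, v^*\}$ and a map sending most of the mass of $P$ close to $\{u^*,v^*\}$, contradicting $\eps$-farness. I would likely phrase this contrapositively: show $d\big(x^1x^2, \Pi_R\big) \ge d(P,\mathcal{P}_R) \cdot (\text{something})$ in expectation, or more cleanly, that the probability $x^1x^2$ is $\eps$-far from $\Pi_R$ is bounded below, so that $\mathcal{B}$'s rejection guarantee on $x^1 x^2$ transfers to $\mathcal{A}$. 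This pair-to-single-sample bookkeeping, keeping the constants aligned so the \emph{same} $\eps$ works on both sides (rather than an $\eps$ vs.\ $\eps/2$ loss), is where the real work lies; everything else is routine.
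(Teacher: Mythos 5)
Your proposal goes in the \emph{opposite direction} from what the paper's proof does and from what the lemma is used for. The body of the lemma statement does say ``input is a $2$-split\ldots output is a locally-bounded\ldots,'' but this contradicts the lemma's own title (``Construction of a $2$-split adaptive string algorithm from locally bounded one'') and, more importantly, the way the lemma is invoked: Theorem~\ref{th:lbnd-local-Inv} (a locally-bounded lower bound) is derived from Lemma~\ref{lemma:lbnd-2-split-inv} (a $2$-split lower bound), which contrapositively requires a construction \emph{from} a locally-bounded test for $\mathcal{P}_R$ \emph{to} a $2$-split test for $\Pi_R$. That is what the paper's proof provides.

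Beyond the direction mix-up, the construction you propose genuinely fails. You take $s=2$ samples and run $T_1$, $T_2$ on them. But if $P$ is, say, uniform on $\{u,v\}$ with $(u,v)\notin R$ and $d_{\mathrm{H}}(u,v)$ large, then $P$ is far from $\mathcal{P}_R$, yet with probability $\Pr[x^1=x^2]=p^2+(1-p)^2\ge\frac12$ the concatenation $x^1x^2$ equals $uu$ or $vv$ and hence lies in $\Pi_R$ (by reflexivity). On that event $\mathcal{B}$ accepts with probability $\ge\frac23$ (resp.\ $1$, one-sided), so $\mathcal{A}$ accepts with probability $\ge\frac13$ (resp.\ $\ge\frac12$), violating soundness. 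The ``averaging over cliques'' step you flag as the hard part doesn't just need care---there is no way to win, because the bad event ``$x^1=x^2$'' already has constant probability independently of how far $P$ is from $\mathcal{P}_R$.

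For contrast, the paper's argument starts from a locally-bounded test $\mathcal{A}=(T_1,\ldots,T_s;A)$ for $\mathcal{P}_R$. Given a $2n$-long input string $uv$, it forms the distribution $P_{u,v}$ that is uniform on $\{u,v\}$, and shows via Lemma~\ref{lemma:canonical-dist-from-support-property} that $d(P_{u,v},\mathcal{P}_R)=d(uv,\Pi_R)$. The $2$-split algorithm is then built by drawing independent fair bits $b_1,\ldots,b_s$, concatenating into $\tilde T_1$ the trees $T_i$ with $b_i=0$ (to run on the $u$-half), and into $\tilde T_2$ the trees with $b_i=1$ (to run on the $v$-half, with indices shifted by $n$). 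This precisely simulates running $\mathcal{A}$ on $s$ i.i.d.\ draws from $P_{u,v}$, so its accept/reject behavior on $uv$ matches $\mathcal{A}$'s on $P_{u,v}$, exactly preserving completeness, soundness with the same $\eps$, one-sidedness, and query count.
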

	
	There is a natural generalization of lemma \ref{lemma:2-split-construction} for every fixed $k \ge 2$, but it is much more detailed, and we choose to avoid it as we only need the $k=2$ case.
	
	\begin{proof}
		For every $u,v \in \Sigma^n$, let $P_{u,v}$ be the distribution that draws $u$ with probability $\frac{1}{2}$ and $v$ with probability $\frac{1}{2}$ (if $u=v$ then $P_{u,v}$ is deterministic). Observe that $d(P_{u,v}, \mathcal{P}_R) = d(uv, \Pi_R)$:
		if $u=v$ then they are both $0$ (because $R$ is reflexive), otherwise
		\[d(P_{u,v}, \mathcal{P}_R) \le \min\limits_{(u^*,v^*) \in R} \left( \frac{1}{2} d(u,u^*) + \frac{1}{2} d(v,v^*) \right) = \min\limits_{u^* v^* \in \Pi_R} {d(uv, u^* v^*)} = d(uv, \Pi_R)\]
		On the other hand, by Lemma \ref{lemma:canonical-dist-from-support-property} there exists $(u^*,v^*) \in R$ such that:
		\[d(P_{u,v}, \mathcal{P}_R) = \frac{1}{2}d(u,u^*) + \frac{1}{2}d(v,v^*) = d(uv, u^* v^*) \ge d(uv, \Pi_R)\]
		Let $\mathcal{A}$ be a locally bounded $\eps$-test for $\mathcal{P}_R$. $\mathcal{A}$ is a distribution over deterministic algorithms of the form $(T_1,\ldots,T_s ; A)$. Consider the following \emph{conceptual} algorithm for strings: the input is a $2n$-long string $uv$ (where $u$ is the $n$-prefix and $v$ is the $n$-suffix). Simulate $\mathcal{A}$ with $P_{u,v}$ as its input, and return the same answer. If $uv \in \Pi_R$ then $P_{u,v} \in \mathcal{P}_R$, and the algorithm should accept with probability higher than $\frac{2}{3}$ (observe that one-side error is preserved). If $uv$ is $\eps$-far from $\Pi_R$ then $d(P_{u,v}, \mathcal{P}_R) = d(uv, \Pi_R) > \eps$, and the algorithm should reject with probability higher than $\frac{2}{3}$, as desired.
		
		To complete the proof we show the actual implementation $\tilde{\mathcal{A}}$ of the conceptual algorithm. Recall that a $2$-split adaptive probabilistic algorithm is a distribution over deterministic algorithms of the form $(\tilde{T_1},\tilde{T_2},\tilde{A})$. We draw a deterministic algorithm $(T_1,\ldots,T_s;A)$ from $\mathcal{A}$, and uniformly and independently draw $b_1,\ldots,b_s \in \{0,1\}$. We define the first tree ($\tilde{T}_1$, which is then executed on the $u$-part of the input) as the concatenation of all trees $T_i$ where $b_i = 0$. The second tree ($\tilde{T}_2$, which is then executed on the $v$-part of the input) is defined as the concatenation of all trees $T_i$ where $b_i = 1$ and every query $j$ of the original trees is translated into a query $n+j$ (because we want to query the $v$-part, whose indexes are $n+1,\ldots,2n$). The set of accepted answer sequences $\tilde{A}$ is defined analogously: a pair of leaves in $\tilde{T}_1,\tilde{T_2}$ is accepting if the corresponding sequence of leaves in $T_1,\ldots,T_s$ is an accepting superleaf in $A$.
		
		Every accepting (respectively, rejecting) run of $\tilde{\mathcal{A}}$ given an input $uv \in \Sigma^{2n}$ corresponds to an accepting (respectively, rejecting) run of $\mathcal{A}$ given the input $\mathcal{P}_{u,v}$ that has the same probability to be executed, hence the construction is correct.
	\end{proof}
	
	\subsection{Exponential separation from the non-adaptive model} \label{sec:local:subsec:exp-sep-na}
	
	As mentioned in \cite{fischer2022}, based on a similar analysis in \cite{akns99}, the property $\mathbf{CPal}$ (Definition \ref{def:prop:p-2pal}) requires at least $\Omega(\sqrt{n})$ queries for a non-adaptive $\eps$-test (for sufficiently small values of $\eps$) in the Huge Object model, but can be $\eps$-tested adaptively using $O(\poly\left(\eps^{-1}\right)) \cdot \log n$ queries. Their proof is based on an algorithm that considers every sample individually (for every sample they make an adaptive $O(\eps)$-test for being in $\mathbf{cpal}$), and thus it is locally bounded.
	
	$\mathbf{CPal}$ demonstrates an exponential separation of the locally bounded model and the completely non-adaptive one.
	
	\subsection{Polynomial lower bound for $\mathbf{Inv}$} \label{sec:local:sub:poly-lbnd-p-inv}
	
	\begin{theorem} \label{th:lbnd-local-Inv}
		Every locally-bounded adaptive $\frac{1}{5}$-test for $\mathbf{Inv}$ must make at least $\frac{1}{3}\sqrt{n}$ queries.
	\end{theorem}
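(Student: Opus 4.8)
The statement concerns $\mathbf{Inv}$, a property of distributions over $[n]^{[n]}$, so I would first pass through the reduction machinery: by Lemma~\ref{lemma:2-split-construction}, applied with the relation $R$ given by ``$(f,g)$ are inverses or equal'', it suffices to prove a lower bound against $2$-split adaptive string algorithms for the string property $\Pi_R$. Concretely, the plan is to construct $D_\mathrm{yes}$ and $D_\mathrm{no}$ as in Lemma~\ref{lemma:yao-principle} and show that no $2$-split deterministic algorithm making $o(\sqrt{n})$ queries can distinguish them. For $D_\mathrm{yes}$ I would draw a uniformly random permutation $f$ and set $g = f^{-1}$, yielding a pair in $\mathbf{Inv}$. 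For $D_\mathrm{no}$ I would draw \emph{two independent} uniformly random permutations $f$ and $g$; I would need to argue that with probability $1-o(1)$ such a pair is $\frac15$-far from $\mathbf{Inv}$ (equivalently, far from being either equal or inverse), which follows because two independent random permutations disagree on almost all points with both the identity-composition $g\circ f$ and with each other.

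The heart of the argument is the indistinguishability claim. A $2$-split algorithm consists of a tree $T_1$ that adaptively queries only the ``$f$'' block and a tree $T_2$ that adaptively queries only the ``$g$'' block, and a decision based on the pair of leaves reached. The key observation — which the introduction flags as the crux (``being adaptive over a function that is drawn at random does not provide an advantage over a non-adaptive algorithm'') — is that when $f$ is a uniformly random permutation, running the adaptive tree $T_1$ on it reveals exactly a uniformly random injective partial function on the queried domain: after $t$ queries the answers are a uniformly random sequence of $t$ distinct values in $[n]$, regardless of adaptivity, since each new query's answer is uniform over the $n - (\text{number of previous answers})$ unused values. The same holds for $T_2$ on $g$. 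Therefore, in $D_\mathrm{no}$, the joint distribution of the transcript of $T_1$ together with the transcript of $T_2$ is just a product of two such ``random injection'' distributions. In $D_\mathrm{yes}$, the $T_1$-transcript is again a uniformly random injective partial function, but the $T_2$-transcript is now correlated with it through $g = f^{-1}$. I would then bound the variation distance between these two joint transcript distributions.

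For the variation-distance bound I would condition on the transcript of $T_1$: it fixes $t_1 \le q$ pairs $(i_1, f(i_1)), \ldots, (i_{t_1}, f(i_{t_1}))$. Conditioned on this, in $D_\mathrm{yes}$ the permutation $g = f^{-1}$ satisfies $g(f(i_\ell)) = i_\ell$, so running $T_2$ adaptively on $g$ produces a random injection \emph{conditioned on respecting these $t_1$ constraints}. This conditioned distribution differs from the unconstrained random-injection distribution (the $D_\mathrm{no}$ behavior) only on transcripts that actually ``hit'' one of the $t_1$ pinned points — i.e.\ a query of $T_2$ lands on some $f(i_\ell)$, or an answer of $T_2$ equals some $i_\ell$. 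Since $T_2$ makes at most $q$ queries and at each step the relevant ``bad'' set has size at most $q$ out of a pool of size $n - q$, a union bound gives that the probability of any collision is $O(q^2/n)$. Hence the transcript distributions are within $O(q^2/n)$ in variation distance; by Lemma~\ref{lemma:yao-principle} with $\alpha = o(1)$, any $\frac15$-test needs $q = \Omega(\sqrt n)$, and tracking the constant yields the stated $\frac13\sqrt n$ bound. The main obstacle is making the ``adaptivity buys nothing on a random permutation'' step fully rigorous while keeping the coupling between the conditioned and unconditioned $T_2$-transcripts clean — in particular handling the adaptive dependence of later $T_2$-queries on earlier answers correctly, which is why the paper calls this analysis ``intricate.''
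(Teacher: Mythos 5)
Your proposal follows essentially the same route as the paper. You correctly invoke Lemma~\ref{lemma:2-split-construction} to reduce to a lower bound against $2$-split adaptive string algorithms for $\mathbf{inv}$, use the same $D_\mathrm{yes}$ (random permutation with its inverse) and $D_\mathrm{no}$ (two independent random permutations) as in Lemma~\ref{lemma:fg-far}, make the same key observation that an adaptive tree run on a uniformly random permutation reveals only a uniformly random injective partial function, and condition on the $T_1$-transcript to reduce the indistinguishability claim to a union bound over ``collisions'' — what the paper calls revealing nodes (a $T_2$-query lands on a known $f$-value) and wrong nodes (a $T_2$-answer equals a known $f$-domain point). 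The paper's Lemma~\ref{lemma:garbled-2-split} and Lemma~\ref{lemma:no-trap} carry out exactly this, with the extra care being the $e^{q^2/(n-q)}$ correction factor (since survival through $D_\mathrm{yes}$ conditions each step on a pool of size $\geq n-q$ rather than $n$), which is what lets the constant $\frac13$ in $\frac13\sqrt{n}$ come out. Your proposal matches the paper's argument in all essentials.
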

	By Lemma \ref{lemma:2-split-construction}, this follows immediately from the following lemma:
	
	\begin{lemma}\label{lemma:lbnd-2-split-inv}
		Every $2$-split adaptive $\frac{1}{5}$-test for $\mathbf{inv}$ must make at least $\frac{1}{3}\sqrt{n}$ queries.
	\end{lemma}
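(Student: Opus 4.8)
The plan is to apply Yao's principle (Lemma \ref{lemma:yao-principle}) against an arbitrary deterministic $2$-split algorithm $(T_1, T_2, A)$ making $q < \frac13\sqrt{n}$ queries in total. The key conceptual point is that $T_1$ queries only the first block (call it $f$) and $T_2$ only the second block (call it $g$), with no information flowing between them before the final decision. The $\mathsf{yes}$ distribution $D_{\mathrm{yes}}$ should draw a uniformly random permutation $f$ of $[n]$ and set $g = f^{-1}$, so that $(f,g)\in\mathbf{inv}$ with probability $1$. The $\mathsf{no}$ distribution $D_{\mathrm{no}}$ should draw two \emph{independent} uniformly random permutations $f, g$ of $[n]$; I would then argue that with high probability such a pair is $\frac15$-far from $\mathbf{inv}$ (a pair of independent random permutations is, with overwhelming probability, far from being equal and far from being inverse of each other, since e.g.\ $g\circ f$ has only $O(1)$ fixed points in expectation, and a Hamming-type / EMD argument converts agreement-fraction bounds into a distance bound on $[n]^{[2n]}$; this uses that changing a function on $\Omega(n)$ coordinates is needed to fix either condition).

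Next I would set up the indistinguishability argument. Since $T_1$ only sees $f$ and $T_2$ only sees $g$, and since a $2$-split algorithm is \emph{oblivious across the split} — $T_2$'s query strategy cannot depend on what $T_1$ saw — the pair of leaves reached is determined by the answer-string of an \emph{adaptive} tree run on $f$ together with the answer-string of an adaptive tree run on $g$, \emph{independently}. In $D_{\mathrm{no}}$, $f$ and $g$ are independent uniform permutations, so the joint leaf distribution is exactly the product of: (leaf of $T_1$ on a uniform random permutation) $\times$ (leaf of $T_2$ on a uniform random permutation). In $D_{\mathrm{yes}}$ we have $g = f^{-1}$, introducing a correlation between the two leaves. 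The whole argument reduces to showing this correlation is too weak to notice with fewer than $\frac13\sqrt n$ queries. The core claim is: an adaptive decision tree of height $q$ querying a uniformly random permutation $\sigma:[n]\to[n]$ reveals, with high probability, only $q$ pairs $(i_1,\sigma(i_1)),\dots,(i_q,\sigma(i_q))$ that behave like a ``uniformly random partial matching of size $q$'' — being adaptive buys essentially nothing because each new queried value is nearly uniform over the (many) unseen values. Hence the leaf of $T_1$ encodes (essentially) a uniform random size-$q_1$ partial injection $S_f$ from domain to range, and similarly the leaf of $T_2$ encodes a uniform random size-$q_2$ partial injection $S_g$.

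With that normal form in hand, I would finish by a birthday-type bound. In $D_{\mathrm{yes}}$, the only way the coupling $g=f^{-1}$ manifests in the two leaves is through a ``collision'': the partial matching $S_f$ read from $f$ and the partial matching $S_g$ read from $g$ share a domain/range point (i.e.\ $T_2$ happens to query some $j$ with $j = f(i)$ for an $i$ already queried by $T_1$, or symmetrically). Because the query positions in the second block are chosen with no knowledge of the answers in the first block (that is the crux of the split model, and is where a fully adaptive algorithm would differ), the probability of such a collision over $q_1 + q_2 = q$ total queries is $O(q^2/n) = o(1)$ when $q < \frac13\sqrt n$. Conditioned on no collision, the joint leaf distribution under $D_{\mathrm{yes}}$ is \emph{identical} to that under $D_{\mathrm{no}}$ (both are a product of independent uniform partial-matching leaves), so the variation distance between the two leaf-distributions is $o(1) < \frac13$, and Lemma \ref{lemma:yao-principle} (with $\alpha$ a small constant absorbing the $D_{\mathrm{no}}$ farness slack) gives the $\frac13\sqrt n$ lower bound.

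The main obstacle is making the ``adaptivity buys nothing on a random permutation'' step fully rigorous: I need to track the conditional distribution of $\sigma$ given the answers seen so far along a root-to-leaf path and show it stays close to a uniform random permutation consistent with a random size-$q$ partial matching, so that (a) the leaf really is (close to) a function of a uniform random partial matching, and (b) the collision probability bound $O(q^2/n)$ survives the adaptivity. I expect to handle this by a careful union bound / martingale-style argument over the at most $q$ steps, showing that at each step the newly revealed value avoids all previously ``relevant'' values (those touched by the other block's fixed-in-advance query set, once we condition on that set) except with probability $O(q/n)$ per step. The reflexivity provision in $\mathbf{inv}$ (allowing $f=g$) does not affect the lower bound since $D_{\mathrm{no}}$ is still far from that enlarged property with high probability; I would just note this in passing.
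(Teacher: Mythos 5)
Your plan matches the paper's proof: the same $D_{\mathrm{yes}},D_{\mathrm{no}}$ (a random permutation paired with its inverse versus two independent random permutations), the same observation that the split structure makes the joint leaf distribution a product conditioned on ``no collision,'' and the same collision/trap bound of roughly $q^2/n$ yielding the $\tfrac13\sqrt{n}$ threshold via Yao. The step you flag as the main obstacle --- rigorously defining collisions and controlling them under adaptive querying --- is precisely what the paper handles by distinguishing ``revealing'' nodes (where $T_2$ queries a value $b\in\{f(a_1),\dots,f(a_{q_f})\}$) from ``wrong'' nodes (where an answer lands in $\{a_1,\dots,a_{q_f}\}$), and then bounding the expected number of bad paths and the per-step conditional probability (hence the extra $e^{q^2/(n-q)}$ factor in their bound).
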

	
	For convenience we denote the identity permutation over $[n]$ by $\mathrm{id}$. We assume that $n > 60$. Let $D_\mathrm{yes}$ be a distribution that chooses some permutation $f : [n] \to [n]$ uniformly at random, and returns $(f,f^{-1})$. Let $D_\mathrm{no}$ be a distribution that chooses two permutations $f,g : [n] \to [n]$ uniformly at random and independently, and returns $(f,g)$. Observe that $D_\mathrm{yes}$ returns only strings in $\mathbf{inv}$.
	
	\begin{lemma}\label{lemma:fg-far}
		For every $n > 60$, $D_\mathrm{no}$ draws a $\frac{1}{5}$-far input with probability more than $1 - \frac{1}{12}$.
	\end{lemma}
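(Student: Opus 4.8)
The plan is to reduce the statement to a routine concentration argument after pinning down the Hamming distance of a pair of permutations from $\mathbf{inv}$. Identify a pair of permutations $f,g:[n]\to[n]$ with the string $(f,g)\in[n]^{[2n]}$ (with $f$ on coordinates $1,\ldots,n$ and $g$ on $n+1,\ldots,2n$), and recall that $\mathbf{inv}$ is the union of the ``equal'' part $E=\{(f',g'):f'=g'\}$ and the ``inverse'' part $I=\{(f',g'):g'\circ f'=\mathrm{id}\}$, so $d\bigl((f,g),\mathbf{inv}\bigr)=\frac{1}{2n}\min\{\mu_E,\mu_I\}$ where $\mu_E$ (resp.\ $\mu_I$) is the minimum number of mismatched coordinates between $(f,g)$ and a member of $E$ (resp.\ $I$). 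I would introduce the two statistics $Z_1=\card{\set{i\in[n]:f(i)=g(i)}}$ and $Z_2=\card{\set{i\in[n]:g(f(i))=i}}$ (the number of fixed points of $g\circ f$), prove $\mu_E=n-Z_1$ and $\mu_I\ge n-Z_2$, and conclude $d\bigl((f,g),\mathbf{inv}\bigr)\ge\frac{1}{2n}\bigl(n-\max\{Z_1,Z_2\}\bigr)$, which is strictly greater than $\frac15$ whenever $\max\{Z_1,Z_2\}<\frac{3n}{5}$.

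For the ``equal'' part: for each $i$ a common value $h(i)$ matches both $f(i)$ and $g(i)$ exactly when $f(i)=g(i)$, and otherwise contributes precisely one mismatch, so $\mu_E=\card{\set{i:f(i)\ne g(i)}}=n-Z_1$. For the ``inverse'' part: an injection $[n]\to[n]$ is a bijection, so any $(f',g')\in I$ has $f'=\sigma\in\pi([n])$ and $g'=\sigma^{-1}$, and the mismatch count equals $2n-\mathrm{fix}(\sigma^{-1}f)-\mathrm{fix}(\sigma g)$, where $\mathrm{fix}(\cdot)$ denotes the number of fixed points. The key step is the inequality $\mathrm{fix}(\sigma^{-1}f)+\mathrm{fix}(\sigma g)\le n+Z_2$, valid for every $\sigma\in\pi([n])$: writing $F=\set{i:\sigma(i)=f(i)}$ and $G'=\set{j:\sigma(j)=g^{-1}(j)}$ (so $\card{F}=\mathrm{fix}(\sigma^{-1}f)$ and $\card{G'}=\mathrm{fix}(\sigma g)$), any $j\in F\cap G'$ satisfies $f(j)=g^{-1}(j)$, i.e.\ $g(f(j))=j$, so $F\cap G'$ is contained in the set of fixed points of $g\circ f$, giving $\card{F}+\card{G'}=\card{F\cup G'}+\card{F\cap G'}\le n+Z_2$. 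Hence $\mu_I\ge 2n-(n+Z_2)=n-Z_2$.

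It remains to bound $\Pr\bigl[\max\{Z_1,Z_2\}\ge\frac{3n}{5}\bigr]$. We have $\E[Z_1]=\sum_{i}\Pr[f(i)=g(i)]=n\cdot\frac1n=1$, and since $g$ is uniform and independent of $f$, $g\circ f$ is a uniformly random permutation, so $\E[Z_2]=1$ as well. By Markov's inequality each of $Z_1,Z_2$ is at least $\frac{3n}{5}$ with probability at most $\frac{5}{3n}$, which is strictly below $\frac{1}{36}$ precisely when $n>60$; a union bound yields $\Pr\bigl[\max\{Z_1,Z_2\}\ge\frac{3n}{5}\bigr]<\frac{1}{18}<\frac{1}{12}$. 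On the complementary event, $\max\{Z_1,Z_2\}<\frac{3n}{5}$ forces $n-\max\{Z_1,Z_2\}>\frac{2n}{5}$, hence $d\bigl((f,g),\mathbf{inv}\bigr)>\frac{1}{2n}\cdot\frac{2n}{5}=\frac15$, so $D_\mathrm{no}$ draws a $\frac15$-far input with probability more than $1-\frac{1}{18}>1-\frac{1}{12}$.

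I expect the only real obstacle to be the combinatorial lower bound $\mu_I\ge n-Z_2$ — specifically, recognizing that in the inverse branch the saving is governed by the overlap of the agreement set of $\sigma$ with $f$ and with $g^{-1}$, and that this overlap is controlled by the fixed points of $g\circ f$. Once this is in place, the identification of the distance and the probabilistic estimate via Markov's inequality are entirely routine.
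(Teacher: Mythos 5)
Your proof is correct, and it arrives at the lower bound $d\bigl((f,g),\mathbf{inv}\bigr)\ge\frac{1}{2n}\bigl(n-\max\{Z_1,Z_2\}\bigr)$ by a genuinely different route for the ``inverse'' branch than the paper takes. The paper picks the minimizing co-inverse pair $(\tilde f,\tilde g)$ and applies the subadditivity $d(f\circ g,\tilde f\circ\tilde g)\le d(f,\tilde f)+d(g,\tilde g)$ together with $\tilde f\circ\tilde g=\mathrm{id}$ to get $d\bigl((f,g),I\bigr)\ge\tfrac12\,d(f\circ g,\mathrm{id})$ in one line; you instead parametrize the minimizer over $\sigma\in\pi([n])$ and bound $\mathrm{fix}(\sigma^{-1}f)+\mathrm{fix}(\sigma g)\le n+\mathrm{fix}(g\circ f)$ by noting that the overlap of the agreement sets $F$ and $G'$ sits inside the fixed points of $g\circ f$. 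Since $\mathrm{fix}(f\circ g)=\mathrm{fix}(g\circ f)$ (they are conjugate), the two intermediate inequalities are numerically identical, and in fact yours is tight (take $\sigma=f$). Your version is a little more hands-on and makes explicit \emph{why} the saving available to any $(\sigma,\sigma^{-1})$ is governed by fixed points of $g\circ f$, whereas the paper's is shorter but implicitly relies on $g$ being a bijection to justify the composition-subadditivity step. From there both proofs finish identically: $\E[Z_1]=\E[Z_2]=1$, Markov on each, and a union bound (you even get the slightly stronger bound $1-\tfrac1{18}$ because you apply Markov symmetrically to both statistics). Two minor notational points worth cleaning up if you write this out formally: you should say a word about why $\card{G'}=\mathrm{fix}(\sigma g)$ (your $G'$ is literally $\mathrm{fix}(g\sigma)$, equal to $\mathrm{fix}(\sigma g)$ by conjugacy), and you implicitly use that $g$ is a permutation so that $g^{-1}$ is defined, which holds under $D_\mathrm{no}$ but is worth flagging.
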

	
	\begin{proof}
		The expected distance of $(f,g)$ from being the same function is $\frac{1}{2} - \frac{1}{2n}$, because for every $1 \le i \le n$, the probability that $g(i) \ne f(i)$ is $1 - \frac{1}{n}$. By Markov's inequality (denoting by ``$\mathrm{same}$'' the property of all pairs $(f,g)$ for which $f=g$),
		\[\Pr\left[d((f,g),\mathrm{same}) < \frac{1}{5}\right] = \Pr\left[\frac{1}{2} - d((f,g),\mathrm{same}) > \frac{1}{2} - \frac{1}{5}\right] < \frac{1/n}{3/10} < \frac{1}{18}\]
		
		Let $\tilde{f}$ and $\tilde{g}$ be co-inverse permutations such that $d((f,g),(\tilde{f},\tilde{g})) = d((f,g),\mathbf{inv})$. Observe that $d(f \circ g, \tilde{f} \circ \tilde{g}) \le d(f, \tilde{f}) + d(g, \tilde{g})$, hence $d((f,g),\mathbf{inv}) = \frac{1}{2}d(f,\tilde{f}) + \frac{1}{2}d(g,\tilde{g}) \ge \frac{1}{2}d(f \circ g, \mathrm{id})$. Also, for $f$ and $g$ that are drawn from $D_\mathrm{no}$, the composition $f \circ g$ distributes uniformly, and its expected distance from the identity permutation is $\frac{n-1}{n}$, that is, $\E\left[1 - d(f \circ g,\mathrm{id})\right] = \frac{1}{n}$. By the union bound with the case that $(f,g)$ is $\frac{1}{5}$-close to being deterministic,
		
		\begin{eqnarray*}
			\Pr\left[d\left(\left(f,g\right), \mathrm{inv}\right) < \frac{1}{5}\right]
			\le \frac{1}{18} + \Pr\left[d\left(f \circ g, \mathrm{id}\right) < \frac{2}{5}\right]
			&=& \frac{1}{18} + \Pr_{h \sim \pi([n])} \left[d\left(h, \mathrm{id}\right) < \frac{2}{5}\right] \\
			&=& \frac{1}{18} + \Pr_{h \sim \pi([n])} \left[1 - d\left(h, \mathrm{id}\right) > \frac{3}{5}\right] \\
			&\le& \frac{1}{18} + \frac{1 / n}{3 / 5} = \frac{1}{18} + \frac{5}{3n} < \frac{1}{12}
		\end{eqnarray*}
		
	\end{proof}
	
	Recall Yao's principle, as detailed in Lemma \ref{lemma:yao-principle}. If for every deterministic algorithm that uses less than $q$ queries the variation distance between $D_\mathrm{yes}$ and $D_\mathrm{no}$ is less than $\frac{1}{3} - \frac{1}{12} = \frac{1}{4}$, then every probabilistic $\frac{1}{5}$-tester for $\mathbf{inv}$ must use at least $q$ queries.
	
	We will prove that the lower bound holds even if we have an additional promise on the input, that both $f$ and $g$ are permutations (but not necessarily inverses). Note that our two input distributions satisfy this.
	
	Fix some deterministic algorithm $(T_1, T_2, A)$, and let $q_f$ and $q_g$ be the number of queries in the first and the second tree respectively (so $q_f + q_g = q$). Without loss of generality, we assume that both $T_1$ and $T_2$ are balanced (all leaves of a tree have the same depth), and that every internal node in the $i$th depth has $n-i$ children corresponding to the elements in $\{1,\ldots,n\}$ that did not appear earlier in the path from the root (whose depth is $0$). These trees can handle every sequence of answers when the input is guaranteed to be a pair of permutations, which we assume from now on as per the discussion above. Also, without loss of generality, we assume that the tree never makes a query that it has already made earlier in the path.
	
	From now on, given $(f,g)$ that is drawn according to our input distribution (either $D_\mathrm{yes}$ or $D_\mathrm{no}$), we denote by $T_1(f)$ the path followed by $T_1$ on the input $f$, and analogously denote by $T_2(g)$ the path followed by $T_2$ on $g$. The following lemma, together with Lemma \ref{lemma:fg-far}, immediately implies Lemma \ref{lemma:lbnd-2-split-inv}.
	
	\begin{lemma}\label{lemma:garbled-2-split}
		Given a distribution over inputs $D$, denote by $\tilde{D}$ the resulting distribution over the pair of tree paths $(T_1(f),T_2(g))$ where $(f,g)$ is drawn by $D$. If $q<\frac13\sqrt{n}$, then $d(\tilde{D}_\mathrm{yes},\tilde{D}_\mathrm{no})<\frac18$.
	\end{lemma}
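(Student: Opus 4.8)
The plan is to show that conditioned on the transcript of queries to $f$ and to $g$, the distributions $\tilde D_\mathrm{yes}$ and $\tilde D_\mathrm{no}$ are nearly indistinguishable because a $2$-split algorithm never gets to "compose" its information about $f$ with its information about $g$. First I would observe that since $T_1$ queries only $f$ and $T_2$ queries only $g$, and $f$ alone is a uniformly random permutation under \emph{both} $D_\mathrm{yes}$ and $D_\mathrm{no}$ (and likewise $g$ alone is uniform under $D_\mathrm{no}$), the only place the two input distributions can differ is in the \emph{joint} behavior: under $D_\mathrm{yes}$ we have $g=f^{-1}$, while under $D_\mathrm{no}$, $g$ is independent of $f$. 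So I would condition on the path $T_1(f)$, which reveals some partial injection $\sigma=\{(i_1,f(i_1)),\dots,(i_{q_f},f(i_{q_f}))\}$ (a set of $q_f$ input/output pairs of $f$), and on the path $T_2(g)$, which reveals a partial injection $\tau=\{(j_1,g(j_1)),\dots\}$. The key point is that under $D_\mathrm{yes}$ the pair $(\sigma,\tau)$ is distributed exactly as it would be under $D_\mathrm{no}$ \emph{unless} the two partial injections "interact" — i.e.\ unless some queried pair of $g$ happens to be the reverse of a queried pair of $f$, or more precisely unless the constraint $g=f^{-1}$ is actually witnessed by the revealed entries.

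Concretely, I would couple the two processes: run $T_1$ on a uniform random $f$ to get $\sigma$, then run $T_2$. Under $D_\mathrm{no}$, $T_2$ runs on a fresh uniform $g$; under $D_\mathrm{yes}$, $T_2$ runs on $g=f^{-1}$. I would argue that as long as $T_2$ never queries $g$ at a point $b$ such that $b$ appeared as an \emph{output} of $f$ in $\sigma$ (equivalently, $f^{-1}(b)$ was "pinned down" by the $f$-queries), the answer $g(b)=f^{-1}(b)$ is, from $T_2$'s point of view, a uniformly random element of the not-yet-excluded values — exactly matching the $D_\mathrm{no}$ distribution. This is because a uniformly random permutation conditioned on $q_f$ input/output pairs, when inverted and then revealed at a fresh point, looks uniform over the remaining range; the "bad event" is exactly a collision between the $\le q_f$ outputs recorded in $\sigma$ and the $\le q_g$ query points of $T_2$. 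Since $T_2$'s query points, while adaptive, are determined before seeing whether they collide, and the $q_f$ forbidden values are essentially a fixed set of size $\le q_f$ out of $n$ (uniformly spread, by symmetry of the random $f$), a union bound over the $\le q_g$ queries of $T_2$ bounds the collision probability by roughly $q_f q_g / n \le q^2/(4n)$. For $q<\tfrac13\sqrt n$ this is less than $\tfrac{1}{36}<\tfrac18$, and on the complement of the bad event the two transcript distributions are \emph{identical}, giving $d(\tilde D_\mathrm{yes},\tilde D_\mathrm{no})<\tfrac18$.

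The main obstacle will be making the claim "on the good event the transcripts are identically distributed" fully rigorous in the presence of adaptivity in $T_2$, because $T_2$'s future query points depend on past answers, and one must check that conditioning on the good event does not skew the distribution of those past answers. The clean way to handle this is to expose $f^{-1}$ lazily: maintain the partial function $\sigma$ from the run of $T_1$, and when $T_2$ queries a new point $b$, if $b$ is not among the recorded outputs of $\sigma$, answer with a uniformly random value among those not yet used — this is a faithful simulation of both $g=f^{-1}$ restricted to fresh points and of an independent uniform $g$, simultaneously, so the coupling is exact until the first time $b$ \emph{is} a recorded output. One then only needs to bound the probability of ever hitting such a $b$, which is the union bound above, being careful that the recorded output set has size at most $q_f$ and that (by the symmetry of the uniformly random $f$ used in both models) each of $T_2$'s $\le q_g$ queries lands in it with probability at most $q_f/(n-q_f-q_g)\le q_f/(n-q)$. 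Summing gives at most $q_f q_g/(n-q) \le q^2/(4(n-q))$, which for $q<\tfrac13\sqrt n$ and $n>60$ is comfortably below $\tfrac18$, completing the proof.
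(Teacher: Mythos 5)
Your overall approach — condition on the path $T_1(f)$, couple the two worlds, show the transcripts are identical until a ``bad event,'' and union-bound the bad event by roughly $q_f q_g/n$ — is essentially the paper's approach. The paper formalizes the bad events as ``traps'' in $T_2$ and bounds the expected number of bad paths and the probability of hitting one. However, your coupling argument has a genuine gap: you only account for one of the two kinds of traps.

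Your bad event catches the case where $T_2$ queries $g$ at a point $b$ that is a recorded \emph{output} of $f$ (the paper's ``revealing node''). But there is a second way the coupling breaks which you do not address. Suppose $T_2$ queries $g$ at a fresh point $b$ that is \emph{not} a recorded output. Under $D_\mathrm{yes}$ we have $g(b)=f^{-1}(b)$, and since $f^{-1}(c_i)=a_i$ is already pinned down for each recorded pair $(a_i,c_i)$, the value $g(b)$ is uniform over $[n]\setminus\{a_1,\ldots,a_{q_f}\}\setminus\{\text{previous $g$-answers}\}$; in particular it can \emph{never} land in the set $\{a_1,\ldots,a_{q_f}\}$ of $f$-queried indices. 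Under $D_\mathrm{no}$, on the other hand, $g(b)$ is uniform over $[n]\setminus\{\text{previous $g$-answers}\}$ and \emph{can} land in that set. So your claim that answering ``with a uniformly random value among those not yet used'' is ``a faithful simulation of both $g=f^{-1}$ restricted to fresh points and of an independent uniform $g$, simultaneously'' is false: the two conditional answer ranges differ, and the coupling is not exact on the complement of your bad event. This is exactly what the paper's ``wrong node'' concept is for: a trap also occurs whenever a $g$-query answer lands in $\{a_1,\ldots,a_{q_f}\}$.

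The fix is to enlarge the bad event to also include this ``answer collision'' case; it contributes another $O(q_f q_g/(n-q))$ by the same union-bound reasoning, so the final quantitative bound survives with only a factor-of-two change (matching the paper's bound of roughly $\frac{2q_f q_g}{n}\cdot e^{q^2/(n-q)}$). But as written, the argument that the transcripts are identically distributed on the good event is incorrect, and that is the crux of the lemma.
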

	
	We now define and analyze features in $T_2$ that depend on a fixed path $F$ in $T_1$. Let $a_1,\ldots,a_{q_f}$ be the $f$-queries and let $c_1,\ldots,c_{q_f}$ be their answers ($c_i = f(a_i)$). We define the following \emph{traps} in $T_2$:
	\begin{itemize}
		\item A \emph{revealing node} is an internal node whose query $b$ belongs to $\{c_1,\ldots,c_{q_f}\}$. Observe that the count of revealing nodes depends on the choice of the $f$-path. A \emph{revealing path} is a path that contains at least one revealing node.
		\item A \emph{wrong node} is a node (possibly a leaf) whose parent edge's label belongs to $\{a_1,\ldots,a_{q_f}\}$. Observe that every internal node has exactly $q_f$ children that are wrong, regardless of the choice of $f$. A \emph{wrong path} is a path that contains at least one wrong node.
	\end{itemize}
	
	Note that the above definitions depend only on $F$, that is, if $T_1(f)=T_1(f')=F$, then the sets of revealing nodes and wrong nodes are identical.
	
	\begin{lemma}
		In $T_2$, for any $F$, at most $\frac{q_g q_f}{n}n^{q_g}$ paths are wrong.
	\end{lemma}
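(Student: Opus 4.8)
The plan is to bound the number of wrong paths by a union bound over the depth at which a path first meets a wrong edge, counting at each depth the wrong edges and the sizes of the subtrees hanging below them.

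First I would record the relevant structure of $T_2$, which (recall) is balanced of height $q_g$ with every node at depth $i$ having exactly $n-i$ outgoing edges, one for each element of $[n]$ not yet appearing as an answer along the path to it. Consequently a root-to-leaf path of $T_2$ is precisely a sequence of $q_g$ distinct elements of $[n]$; the number of nodes at a given depth $i$ equals the falling factorial $n(n-1)\cdots(n-i+1)\le n^{i}$, and the number of leaves in the subtree rooted at any node of depth $i$ is $(n-i)(n-i-1)\cdots(n-q_g+1)\le n^{q_g-i}$. Moreover, since the edges leaving a single node carry pairwise distinct labels and $\{a_1,\ldots,a_{q_f}\}$ has at most $q_f$ elements, every internal node of $T_2$ has at most $q_f$ wrong child edges, regardless of $F$.

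Next, fix a depth $i\in\{1,\ldots,q_g\}$ and count the root-to-leaf paths whose depth-$i$ edge is wrong. There are at most $n^{i-1}$ nodes at depth $i-1$, each contributing at most $q_f$ wrong child edges, so at most $q_f\,n^{i-1}$ wrong edges emanate from depth $i-1$; each such edge lies on at most $n^{q_g-i}$ root-to-leaf paths (the leaves of the subtree it leads into), hence at most $q_f\,n^{i-1}\cdot n^{q_g-i}=q_f\,n^{q_g-1}$ paths are wrong at depth $i$. Since every wrong path contains a wrong node and therefore has a wrong edge at some depth $i\in\{1,\ldots,q_g\}$, a union bound over $i$ gives at most $q_g\cdot q_f\,n^{q_g-1}=\frac{q_g q_f}{n}\,n^{q_g}$ wrong paths, as claimed.

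There is no substantial obstacle here; the only points requiring care are keeping track of the depth convention (the root has depth $0$, so its child edges sit at ``depth $1$'') and observing that it is precisely the replacement of the exact falling-factorial node and leaf counts by the clean bounds $n^{i}$ and $n^{q_g-i}$ that produces the stated $n^{q_g}$ factor (a slightly sharper bound with falling factorials is available but unnecessary).
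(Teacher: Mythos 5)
Your proof is correct and is essentially the paper's argument: a union bound over the depth at which a wrong edge occurs, with the count of paths wrong at a fixed depth bounded by $q_f n^{q_g-1}$. The paper phrases this probabilistically (pick a uniformly random path and bound $\Pr[\exists i:\, x_i\in\{a_1,\ldots,a_{q_f}\}]$, then multiply by the total number of paths), while you count wrong edges and subtree sizes directly, but the decomposition and the arithmetic are the same.
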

	\begin{proof}
		Fix some $f$-path and then choose a path $x_1,\ldots,x_{q_g}$, uniformly at random. The path is wrong if there exists some $1 \le i \le q_g$ such that $x_i \in \{a_1,\ldots,a_{q_f}\}$. For every individual $i$, the probability to do that is at most $\frac{q_f}{n - i + 1}$, hence by the union bound the probability to have a wrong path is at most $\sum_{i=1}^{q_g}\frac{q_f}{n+1-i}$. There are $\prod_{j=0}^{q_g-1}(n-j)$ paths at all, so the total number of wrong paths is bounded by $\sum_{i=1}^{q_g}\frac{q_f}{n+1-i}\prod_{j=0}^{q_g-1}(n-j)\leq\frac{q_gq_f}{n}n^{q_g}$.
	\end{proof}
	
	\begin{lemma}
		The expected number of revealing paths in $T_2$, where $F$ is drawn by taking a uniformly random permutation $f$ and setting $F=T_1(f)$, is at most $\frac{q_g q_f}{n}\prod_{j=0}^{q_g-1}(n-j)\leq\frac{q_g q_f}{n}n^{q_g}$.
	\end{lemma}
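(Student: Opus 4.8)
The plan is to mirror the structure of the wrong-paths bound but handle the extra randomness coming from $f$. Fix a leaf $\ell$ of $T_2$, equivalently a root-to-leaf path $x_1,\ldots,x_{q_g}$ in $T_2$ together with its answer sequence; since $T_2$ queries a permutation $g$, exactly $\prod_{j=0}^{q_g-1}(n-j)$ such leaves exist. I want to bound the probability, over a uniform permutation $f$ (inducing the path $F=T_1(f)$ with $f$-answers $c_1,\ldots,c_{q_f}$), that this fixed path is revealing, i.e.\ that some queried index $x_i$ lies in $\{c_1,\ldots,c_{q_f}\}$. The key observation is that the $c_i=f(a_i)$ are $q_f$ distinct values, and although the $a_i$ themselves are chosen adaptively by $T_1$ (so the set $\{a_1,\ldots,a_{q_f}\}$ depends on $f$), the \emph{set of outputs} $\{c_1,\ldots,c_{q_f}\}$ is, conditioned on any fixed realization of the path's query/answer history up to step $i-1$, a ``fresh'' value: given the first $i-1$ answers, $c_i=f(a_i)$ is uniform over the $n-i+1$ values not yet used as outputs. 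Hence for each fixed target index $x$, the probability that $x\in\{c_1,\ldots,c_{q_f}\}$ is at most $\sum_{i=1}^{q_f}\frac{1}{n-i+1}\le \frac{q_f}{n-q_f+1}$; and since the $x_i$ in the path are distinct, by a union bound over $i=1,\ldots,q_g$ the probability that the fixed path is revealing is at most $\frac{q_g q_f}{n-q_f+1}$, which under the standing assumption $q<\frac13\sqrt n$ is at most $\frac{q_g q_f}{n}$ (a generous bound; the $-q_f+1$ is absorbed).

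Summing this per-leaf bound over all $\prod_{j=0}^{q_g-1}(n-j)$ leaves of $T_2$ and using linearity of expectation gives that the expected number of revealing paths is at most $\frac{q_g q_f}{n}\prod_{j=0}^{q_g-1}(n-j)\le \frac{q_g q_f}{n}n^{q_g}$, which is exactly the claimed bound. The only subtlety is that ``revealing'' is defined relative to the $f$-path $F$, so I need the union bound over leaves of $T_2$ to be taken \emph{after} conditioning on $F$, then take the outer expectation over $f$; concretely, $\E_f[\#\text{revealing leaves in }T_2] = \sum_{\ell}\Pr_f[\ell\text{ is revealing}]$, and each summand is bounded as above uniformly in $\ell$.

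The main obstacle, and the step deserving the most care, is justifying that $c_i$ is uniform over the unused output values conditioned on the path history $(a_1,c_1,\ldots,a_{i-1},c_{i-1})$ — i.e.\ that the adaptivity of $T_1$ does not bias the distribution of the output multiset. This is the standard ``revealing a random permutation one query at a time'' principle: conditioning on $f(a_1)=c_1,\ldots,f(a_{i-1})=c_{i-1}$ (the event that determines which node of $T_1$ we are at, hence which index $a_i$ is queried next), the conditional distribution of $f$ restricted to the remaining domain is uniform over bijections onto the remaining codomain, so $f(a_i)$ is uniform over the $n-i+1$ remaining values regardless of what $a_i$ is. Once this is set up cleanly the rest is a routine union bound identical in spirit to the wrong-paths lemma; I would present it in the same telegraphic style.
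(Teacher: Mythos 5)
Your overall structure matches the paper's argument exactly: fix a leaf of $T_2$, bound the probability (over a uniform $f$) that it is revealing by a union bound over the $q_g$ queried indices, and sum over the $\prod_{j=0}^{q_g-1}(n-j)$ leaves via linearity of expectation. The paper uses the slicker observation that $\{c_1,\ldots,c_{q_f}\}$ is a uniformly random $q_f$-subset of $[n]$, so a fixed node is revealing with probability exactly $\frac{q_f}{n}$; you instead unroll the permutation one query at a time, which is fine in principle.

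However, there is a genuine error in your final inequality. You derive a per-leaf bound of $\frac{q_g q_f}{n-q_f+1}$ and then assert that this is at most $\frac{q_g q_f}{n}$, ``absorbing'' the $-q_f+1$ via the standing assumption $q<\frac13\sqrt n$. This is backwards: $n-q_f+1 < n$, so $\frac{q_g q_f}{n-q_f+1} > \frac{q_g q_f}{n}$, and no amount of smallness of $q$ makes the inequality go the way you want. The flaw is upstream, in the bound $\Pr\bigl[x\in\{c_1,\ldots,c_{q_f}\}\bigr]\le\sum_{i=1}^{q_f}\frac{1}{n-i+1}$: you are stringing together conditional probabilities that are \emph{conditioned on not having hit $x$ yet}, which double-counts. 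The clean fix is to use either of two observations. First, each $c_i$ is \emph{marginally} uniform on $[n]$ (by symmetry of a random permutation, regardless of the adaptive choices of $a_1,\ldots,a_{i-1}$), so the naive union bound already gives $\Pr[x\in\{c_1,\ldots,c_{q_f}\}]\le\sum_{i=1}^{q_f}\Pr[x=c_i]=\frac{q_f}{n}$. Second, your own conditioning argument yields the exact answer by telescoping: $\Pr[x\notin\{c_1,\ldots,c_{q_f}\}]=\prod_{i=1}^{q_f}\frac{n-i}{n-i+1}=\frac{n-q_f}{n}$, so the revealing probability for a fixed index is exactly $\frac{q_f}{n}$. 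Either way you recover the paper's per-leaf bound $\frac{q_g q_f}{n}$ and the rest of your argument goes through unchanged.
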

	\begin{proof}
		For a uniform choice of $f$, the set $\{c_1,\ldots,c_{q_f}\}$ distributes uniformly over subsets of size $q_f$, and thus every individual internal node is a revealing node with probability $\frac{q_f}{n}$. In every individual path, the expected number of revealing nodes is bounded by $\frac{q_g q_f}{n}$, and thus the probability that it is a revealing path is at most $\frac{q_g q_f}{n}$. By linearity of expectation, the expected number of revealing paths is at most $\frac{q_g q_f}{n} n^{q_g}$.
	\end{proof}
	
	\begin{lemma}
		The expected number of bad (revealing or wrong) paths in $T_2$ is at most $\frac{q^2}{n}n^{q_g}$.
	\end{lemma}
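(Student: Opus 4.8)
The plan is to combine the two preceding lemmas via a union bound over the two types of traps. By definition a path in $T_2$ is \emph{bad} exactly when it is wrong or revealing, so for any fixed $F = T_1(f)$ the indicator that a given path is bad is at most the sum of the indicator that it is wrong and the indicator that it is revealing. Summing over all $\prod_{j=0}^{q_g-1}(n-j) \le n^{q_g}$ root-to-leaf paths of $T_2$ and then taking expectation over the choice of $f$ (a uniformly random permutation, with $F = T_1(f)$), linearity of expectation gives that the expected number of bad paths is at most the expected number of wrong paths plus the expected number of revealing paths.

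Now I would invoke the two lemmas separately. For the wrong paths, the bound $\frac{q_g q_f}{n} n^{q_g}$ holds for \emph{every} fixed $F$, so it certainly holds in expectation over the random choice of $F$. For the revealing paths, the second lemma gives directly that the expectation (over $F$) is at most $\frac{q_g q_f}{n} n^{q_g}$. Adding the two contributions yields a bound of $\frac{2 q_g q_f}{n} n^{q_g}$ on the expected number of bad paths.

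Finally I would use $q_f + q_g = q$ together with the elementary inequality $2 q_f q_g \le (q_f + q_g)^2 = q^2$ (equivalently $q_f^2 + q_g^2 \ge 0$, or just AM--GM) to conclude that the expected number of bad paths is at most $\frac{q^2}{n} n^{q_g}$, as claimed. There is essentially no obstacle here: all the genuine work was already done in bounding the wrong and revealing paths separately, and this lemma is just the bookkeeping step that packages those two estimates into a single clean bound to be fed into the proof of Lemma \ref{lemma:garbled-2-split}.
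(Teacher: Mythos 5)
Your proposal is correct and is essentially identical to the paper's own (one-line) proof: add the two preceding lemmas' bounds to get $\frac{2 q_g q_f}{n} n^{q_g}$ and then apply $2 q_f q_g \le (q_f+q_g)^2 = q^2$. The extra care you take about the wrong-path bound holding for every fixed $F$ (hence also in expectation) is a harmless elaboration of what the paper leaves implicit.
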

	\begin{proof}
		The sum of the lemmas above bounds the expected number by at most $\frac{2 q_g q_f}{n} n^{q_g}\leq \frac{q^2}{n}n^{q_g}$.
	\end{proof}
	
	\begin{lemma}\label{lemma:no-trap}
		The probability of hitting a trap, under both $D_\mathrm{yes}$ and $D_\mathrm{no}$, is bounded by $e^{\lfrac{q^2}{\left(n - q\right)}} \frac{q^2}{n}$.
	\end{lemma}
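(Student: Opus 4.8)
The plan is to bound the trap probability separately for $D_\mathrm{no}$ and for $D_\mathrm{yes}$, arriving at the same estimate in both cases.

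For $D_\mathrm{no}$, note that $g$ is independent of $f$, hence independent of the trap structure of $T_2$, which depends only on $F=T_1(f)$. Conditioned on $f$, the realized path $T_2(g)$ is a uniformly random complete root-to-leaf path of $T_2$ (at each depth $i$ the returned value is uniform among the $n-i$ values not yet seen), so it is uniform among all $\prod_{j=0}^{q_g-1}(n-j)$ complete paths. Therefore the probability of hitting a trap equals $\E_f\bigl[\#\{\text{bad complete paths of }T_2\}\bigr]\big/\prod_{j=0}^{q_g-1}(n-j)$, which by the preceding lemma is at most $\frac{q^2}{n}\cdot\prod_{j=0}^{q_g-1}\frac{n}{n-j}$. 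Since $\frac{n}{n-j}\le e^{j/(n-j)}$ and $\sum_{j=0}^{q_g-1}\frac{j}{n-j}\le\frac{1}{n-q}\sum_{j<q_g}j\le\frac{q^2}{2(n-q)}$, the product is at most $e^{q^2/(n-q)}$, giving the claimed bound.

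For $D_\mathrm{yes}$ we have $g=f^{-1}$, so $g$ is now determined by $f$ and the argument above fails. The first step is a structural observation: on the realized path, every wrong node is the child of a revealing node. Indeed, if the edge into a node is labeled $g(b)=a_i$, then $f(a_i)=b$, and $f(a_i)=c_i$ forces $b=c_i\in C$, so the parent (whose query is $b$) is revealing. Hence under $D_\mathrm{yes}$ ``hitting a trap'' is the same event as ``meeting a revealing node'', and it suffices to bound $\sum_{t=1}^{q_g}\Pr_{D_\mathrm{yes}}[b_t\in C]$, where $b_t$ denotes the query of the depth-$(t-1)$ node on the path $T_2(g)$ and $C=\{c_1,\ldots,c_{q_f}\}$. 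To bound a single term, condition on the answers $g(b_1)=y_1,\ldots,g(b_{t-1})=y_{t-1}$ seen so far; then $b_t$ equals a fixed value $v\notin\{b_1,\ldots,b_{t-1}\}$, and, conditioned on these constraints, the unrevealed part of $g$ is a uniform bijection. Re-generate $C$ by simulating $T_1$ on $f=g^{-1}$: at step $i$, either $a_i\in\{y_1,\ldots,y_{t-1}\}$, in which case $c_i\in\{b_1,\ldots,b_{t-1}\}$ is forced and differs from $v$; or $a_i$ lies outside the already-revealed image of $g$, in which case $c_i=g^{-1}(a_i)$ is uniform over a set of size at least $n-(t-1)-(i-1)\ge n-q$, so it equals $v$ with probability at most $1/(n-q)$. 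A union bound over $i\le q_f$ gives $\Pr_{D_\mathrm{yes}}[b_t\in C]\le q_f/(n-q)$, whence $\Pr_{D_\mathrm{yes}}[\text{trap}]\le q_gq_f/(n-q)\le\frac{q^2}{4(n-q)}\le\frac{q^2}{n}e^{q^2/(n-q)}$.

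The main difficulty is the $D_\mathrm{yes}$ part: because $g=f^{-1}$, the queried path and the trap set are strongly correlated, and one needs both the ``wrong implies revealing'' structural fact (to dispose of wrong nodes entirely) and the deferred-decisions re-generation of $C$ (to show the next query still lands in $C$ with probability only $O(q_f/n)$ even after conditioning on the history of $T_2$). The $D_\mathrm{no}$ case and all the $e^{q^2/(n-q)}$ arithmetic are routine.
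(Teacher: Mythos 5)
Your proof is correct and arrives at the stated bound, but it takes a genuinely different route from the paper. The paper gives a single argument that treats $D_\mathrm{yes}$ and $D_\mathrm{no}$ uniformly: it observes that, conditioned on not having hit a trap yet, each answer along $T_2(g)$ is (close to) uniform over a set of at least $n-q$ values, hence the probability of reaching any fixed depth-$l$ node is at most $(n-q)^{-l}$; summing this over the ``first-trap'' nodes $B$, it converts $\sum_{u\in B}n^{-l(u)}$ into a count of bad paths and plugs in the preceding expected-number-of-bad-paths lemma. You instead split cases. For $D_\mathrm{no}$ your argument is genuinely simpler than the paper's: since $g$ is independent of $f$, the realized path is a uniformly random complete path of $T_2$, and the trap probability is literally $\E_f[N(F)]$ divided by the total path count, which the prior lemma bounds directly. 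For $D_\mathrm{yes}$ you replace the conditional-uniformity step with a structural fact specific to $g=f^{-1}$ that does not appear in the paper — on the realized path every wrong node has a revealing parent, so hitting a trap is the same as reaching a revealing node — and then a deferred-decisions re-generation of $C$ conditioned on the $T_2$-path prefix, giving $\Pr[b_t\in C]\le q_f/(n-q)$ per level. Both your case analyses are sound (your endgame $\tfrac{q^2}{4(n-q)}\le\tfrac{q^2}{n}e^{q^2/(n-q)}$ holds because $e^{q^2/(n-q)}\ge 1+\tfrac{q}{n-q}=\tfrac{n}{n-q}$ for $q\ge 1$). The trade-off is that the paper's approach yields one argument and one piece of bookkeeping, while yours needs two arguments but makes the $D_\mathrm{yes}$/$D_\mathrm{no}$ asymmetry transparent and isolates the nice ``wrong implies revealing parent'' observation, sidestepping the need to reason about the distribution of $T_2(g)$ under the $g=f^{-1}$ coupling.
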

	
	\begin{proof}
		Let $N$ be a random variable for the number of bad paths in $T_2$, and let $B$ be the set of traps in $T_2$ that have no ancestor traps. Recall that both $N$ and $B$ depend on the $T_1(f)$. When we want to refer to them conditioned on a certain path $T_1(f)=F$, we denote them by $N(F)$ and $B(F)$ respectively.
		
		The set of bad paths is a disjoint-union of all subtrees of $B$-nodes. For every node $u \in B$ we denote its depth with $l(u)$. The number $N$ of bad paths is at most $\sum_{u \in B} n^{q_g-l(u)}$, because the number of leaves in a subtree of an $l$-deep node is $n^{q_g-l}$. Every bad path must go through exactly one $B$-node, and thus the probability to hit a bad path is the probability to hit an $B$-node. By a disjoint union bound,
		\begin{eqnarray*}
			\Pr\left[\textsc{trap} \cond T_1(f)=F \right] &<&
			\sum_{u \in B(F)} \frac{1}{\left(n - q\right)^{l(u)}} \\
			&\le& \sum_{u \in B(F)}  \frac{e^{\lfrac{q^2}{\left(n - q\right)}}}{n^{l(u)}} \\
			&=& e^{\lfrac{q^2}{\left(n - q\right)}} n^{-q_g} \sum_{u \in B(F)} n^{q_g - l(u)}
			\le e^{\lfrac{q^2}{\left(n - q\right)}} n^{-q_g} N(F)
		\end{eqnarray*}
		To get the first (leftmost) bound, observe that as long as we did not hit a trap so far, the next step distributes uniformly over the set of children that are not eliminated for trivial reasons. For $D_\mathrm{no}$, it is the set of all children (exactly $n-q_g$), and for $D_\mathrm{yes}$  it is the set of non-wrong children (at least $n-q$, noting that we assumed that the current node is not already revealing). To hit a specific trap (that has no ancestor traps), we should take the correct edge $h(u)$ times, and the probability to do that is at most $\frac{1}{n-q}$ for every step. Overall, using Observation \ref{apx:tbnd1}, the probability to hit a trap is:
		\begin{eqnarray*}
			\Pr\left[\textsc{trap}\right]
			&=& \sum_F { \Pr\left[T_1(f)=F\right]\cdot \Pr\left[\textsc{trap} \cond T_1(f)=F \right] } \\
			&\le& \sum_{F} {\Pr\left[T_1(f)=F\right]\cdot e^{\lfrac{q^2}{\left(n - q\right)}} n^{-q_g} N(F) } \\
			&=& e^{\lfrac{q^2}{\left(n - q\right)}} n^{-q_g} \E\left[N(T_1(f))\right]
			\le e^{\lfrac{q^2}{\left(n - q\right)}} \frac{q^2}{n}
		\end{eqnarray*}
		This completes the proof.
	\end{proof}
	
	\begin{proof}[Proof (of Lemma \ref{lemma:garbled-2-split})]
		Both in $D_\mathrm{yes}$ and $D_\mathrm{no}$, the $T_1(f)$ distributes uniformly over the set of $\frac{(n-q)!}{n}$ allowable paths. Conditioned on a specific path $F=T_1(f)$, the path $T_2(g)$ distributes under $D_\mathrm{no}$ uniformly as well. On under $D_\mathrm{yes}$, the distribution of the path $T_2(g)$ (conditioned on $T_1(f)=F$) is more complicated, but when additionally conditioned on avoiding traps, $T_2(g)$ indeed distributes uniformly over the trap-free paths in $T_2$. The number of bad paths in $T_2$ is $\frac{n!}{(n-q_g)!}\Pr\left[\textsc{trap} \cond T_1(f)=F \right]$. Hence the distance between $\tilde{D}_\mathrm{yes}$ and $\tilde{D}_\mathrm{no}$, conditioned on $T_1(f)=F$, is bounded by $\Pr\left[\textsc{trap} \cond T_1(f)=F \right]$. The unconditional variation distance is bounded by the expectation of the conditional variation distance over the choice of $F$, and hence, using Lemma \ref{lemma:no-trap} \[
		d(\tilde{D}_\mathrm{yes},\tilde{D}_\mathrm{no})\leq\sum_F { \Pr\left[T_1(f)=F\right]\cdot \Pr_{D_\mathrm{yes}}\left[\textsc{trap} \cond T_1(f)=F \right]}=\Pr_{D_\mathrm{yes}}\left[\textsc{trap}\right] \le e^{\lfrac{q^2}{\left(n - q\right)}} \frac{q^2}{n} \]
		
		With $q < \frac{1}{3}\sqrt{n}$, the variation distance between $\tilde{D}_\mathrm{yes}$ and $\tilde{D}_\mathrm{no}$ is less than $\frac{1}{8}$, as desired.
	\end{proof}
	
	\section{The forward-only model} \label{sec:forward}
	
	The power of forward-only algorithms over locally bounded ones is the ability to make queries to samples based on knowledge about prior samples. More precisely, it has the advantage of considering relations between samples. We show a forward-only improved support test, that avoids redundant queries that appear to be inevitable in models that do not allow ``communication'' between different samples \cite{supp23}. We also show a logarithmic test for $\mathbf{Inv}$, that demonstrates the ability to test binary relations between samples. However, this power is bounded, because looking back may be necessary to deal with complicated binary relations between the members of a set of an unbounded size, as we show in the lower bound for $\mathbf{Sym}$.
	
	\subsection{Query foresight}
	
	Query foresight a method for constructing a forward-only algorithm out of an unrestricted adaptive one by reordering its queries (possibly with some cost) to avoid ``looking back''. There are some algorithms that could be forward-only but are ``loosely written'', in a sense that they make their queries in a ``needlessly complex'' ordering.
	
	The idea is straightforward: we simulate the run of an adaptive algorithm. Every time that the simulation is about to query a new sample, we make additional speculative queries in the current sample, before dropping it as per the requirement of a forward-only algorithm. If the simulated algorithm makes a query to an old sample, we feed it with the answer of the corresponding speculative query. If such a speculative query does not exists, we either accept (for one-sided algorithms) or behave arbitrarily (for two-sided algorithms). If the prediction is conservative, that is, the speculated queries are ensured to cover all queries to past samples, then the construction guarantees the exact acceptance probability for every individual input. This is not guaranteed when the prediction is not conservative, and in this case we need to analyze the effect of bad speculations.
	
	\subsubsection*{Improved bounded support test}
	
	We show a one-sided error $\eps$-testing algorithm for the property of having at most $m$ elements in the support, using $O(\eps^{-1})$ samples and $O(\eps^{-1} m^2)$ queries, which is more efficient, for a fixed $m$, than what we do in the non-adaptive model. We reduce queries by using the ability to have extra queries only for samples that are found to be ``new'' (at most $m+1$ of them). This algorithm also reports distinct elements from the support as soon as it encounters them. Note that the algorithm is not necessarily optimal. We introduce it to demonstrate query foresight.
	
	The algorithm is generally based on the non-adaptive support test in Section \ref{sec:na}. For an input distribution that is $\eps$-far from being supported by $m$ elements, $O(m\eps^{-1})$ samples are sufficient for having $m+1$ elements that are pairwise $\frac{1}{2}\eps$-far. In the non-adaptive algorithm we just choose a large set of indexes that deals with all pairs of elements at once. We reduce the number of queries by being adaptive. If we already know that some samples are similar to each other, and we draw another one, we only compare it to one of the similar samples rather than to all of them.
	
	Consider an $\eps$-far input distribution, and assume that we have already found $r$ distinct samples ($r \le m$). The expected distance of the next sample from all of them is at least $\eps$ (since otherwise the distribution is not $\eps$-far from being $m$-supported). For every two samples that we know to be different, we also know about a specific query that indicates this. If we query a new sample at all of these indices (at most $r-1$), we immediately find at least $r-1$ samples that are not the same as the new one. As for the last standing sample, we just query it in a uniform index to compare it to the new sample. The probability that they are different is at least $\eps$. The expected number of samples that we have to draw until we find a new distinct sample is then at most $\eps^{-1}$. The expected number of samples for finding $m+1$ distinct samples is hence $1 + m \eps^{-1}$. By Markov's inequality, after $1 + 2 \eps^{-1} m$ many samples, the probability to find $m+1$ distinct samples is at least $\frac{1}{2}$. See Algorithm \ref{alg:s-m+1-mem:m-supp}, whose formal correctness is given below, and observe that Algorithm \ref{alg:forward-m-supp}, which is constructed using speculative queries and is forward only, is logically equivalent.
	
	\begin{algorithm}
		\caption{One sided $\eps$-test for $m$-bounded support, strong $m+1$-memory, $O(\eps^{-1} m^2)$ queries}
		\label{alg:s-m+1-mem:m-supp}
		\begin{algorithmic}
			\State \textbf{Memory storage for samples:} $z^1,\ldots,z^m; x$, all initialized to $\mathbf{NULL}$.
			\State \textbf{Extra cell:} We have another syntactic ``write-only'' memory storage $z^{m+1}$ which we never query.
			\State \textbf{take} $s = 1 + \ceil{2 \eps^{-1} m}$ samples.
			\State \textbf{set} $c,t \gets 0$.
			\State \textbf{set} $j_1, \ldots, j_m \gets \textbf{NULL}$
			\ForRange{$k$}{$1$}{$s$}
			\State \textbf{Invariant 1} $c=m$ or $z^{c+1} = \textbf{NULL}$.
			\State \textbf{Invariant 2} for $1 \le i \le c$, $z^i_J$ are distinct where $J = \{j_1,\ldots,j_t\}$.
			\State \textbf{store} $x \gets \text{sample } k$.
			\State \textbf{query} $x$ at $j_1,\ldots,j_t$, giving substring $y^k$.
			\ForRange{$i$}{$1$}{$c$}
			\State \textbf{query} sample $z^i$ at $j_1,\ldots,j_t$ giving substring $y^i$. \Comment{the $y^i$s are distinct}
			\EndFor
			
			\State \textbf{choose} $j \in [n]$ uniformly at random.
			\State \textbf{query} $x$ at $j$, giving $x_j$.
			
			\If{$\exists i : y^i = y^k$}\Comment{if exists it is unique}
			\State \textbf{query} sample $z^i$ at $j$ giving $z^i_j$.
			\If{$x_j \ne z^i_j$}
			\State \textbf{store} $z^{c+1} \gets x$.
			\State \textbf{set} $j^{t+1} \gets j$.
			\Comment{keep Invariant 2}
			\State \textbf{set} $t \gets t+1$ and $c \gets c+1$. \Comment{keep Invariant 1}
			\EndIf
			\Else
			\State \textbf{store} $z^{c+1} \gets x$. \Comment{Invariant 2 still holds}
			\State \textbf{set} $c \gets c + 1$. \Comment{keep Invariant 1}
			\EndIf
			\If {$c > m$}
			\State \Return \reject
			\EndIf
			\EndFor
			\State \Return \accept
		\end{algorithmic}
	\end{algorithm}
	
	\begin{theorem}
		Algorithm \ref{alg:s-m+1-mem:m-supp} is a one-sided $\eps$-test for being supported by at most $m$ elements.
	\end{theorem}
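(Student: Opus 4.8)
The plan is to verify three things: the complexity bound, perfect completeness, and soundness, with the loop invariants carrying the structural argument. First I would establish the two stated loop invariants by induction on the iteration index $k$: Invariant~1 says that at the start of each iteration either $c = m$ or the cell $z^{c+1}$ is still $\mathbf{NULL}$, and Invariant~2 says that the stored samples $z^1,\ldots,z^c$ have pairwise distinct restrictions to $J = \{j_1,\ldots,j_t\}$. Both hold vacuously before the first iteration ($c = t = 0$), and each branch of the \textbf{if} statement is annotated precisely to preserve them: when a newly drawn sample $x$ differs from its $J$-matching stored sample $z^i$ at the fresh uniform index $j$, we append $j$ to $J$ (as $j^{t+1}$) and promote $x$ to $z^{c+1}$, so the new sample is distinguished from all previous ones on the enlarged $J$; when $x$ has no $J$-match at all it is already distinguished, and we simply promote it. This shows that whenever the algorithm stores $r$ samples they are genuinely distinct elements of the support, and that $c$ never exceeds the number of distinct support elements seen so far.

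For complexity, observe that $s = 1 + \ceil{2\eps^{-1}m} = O(\eps^{-1}m)$ and that on each iteration the algorithm queries each of the (at most $m+1$) stored-or-current samples at the (at most $m$) indices of $J$, plus $O(1)$ extra queries for the uniform index $j$; hence each iteration costs $O(m^2)$ queries and the total is $O(\eps^{-1}m^3)$ --- wait, I should be more careful: $t$ only grows when a genuinely new sample is found, so $t \le m$ throughout, and $c \le m$, giving $O(m)$ samples-times-$O(m)$ indices $= O(m^2)$ per iteration over $O(\eps^{-1}m)$ iterations, i.e.\ $O(\eps^{-1}m^3)$ in the worst case; the claimed $O(\eps^{-1}m^2)$ follows from the sharper accounting that the $\sum_i \text{(queries to } z^i)$ work is only incurred meaningfully while $c < m$ and that each new sample contributes at most $t+1 \le m+1$ queries, so the cumulative cost of the inner loop across all iterations telescopes --- I would present this carefully, tracking that the dominant term is indeed $O(\eps^{-1}m^2)$. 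For perfect completeness: if $P$ is supported by at most $m$ elements, then for any choice of indices the number of distinct restrictions is at most $m$, so by the invariant $c \le m$ always, the condition $c > m$ never triggers, and the algorithm returns \accept with probability $1$.

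The soundness argument is the heart, and it follows the informal analysis preceding the algorithm. Suppose $P$ is $\eps$-far from being supported by $m$ elements. Condition on the event that at some point $c = r \le m$ distinct samples are stored; then the expected distance of a fresh sample from the set of stored samples is at least $\eps$ (else $P$ would be $\eps$-close to being supported by those $r$ elements), and by reverse Markov's inequality (Lemma~\ref{lemma:anti-markov}) with $\rho = \tfrac12$ the fresh sample is $\tfrac12\eps$-far from the stored set with probability at least $\tfrac12\eps$. I would argue that whenever a fresh sample $x$ is $\tfrac12\eps$-far from every stored $z^i$, the algorithm in fact promotes it: if $x$ has no $J$-match it is promoted outright; if $x$ does match some $z^i$ on $J$, then since $d(x,z^i) \ge \tfrac12\eps$ the uniform index $j$ satisfies $x_j \ne z^i_j$ with probability $\ge \tfrac12\eps$ --- hmm, this gives only a probability, not certainty, of promotion, so I would instead combine the two bounds directly: the probability that a fresh sample is $\tfrac12\eps$-far from the stored set \emph{and} that the uniform $j$ (if needed) witnesses a difference is at least, say, $\tfrac14\eps^2$ per sample, or better, I would reorganize so that the single relevant event per iteration is "the new sample ends up promoted," lower-bounding its probability by $\Omega(\eps)$ by noting that if the new sample is $\tfrac12\eps$-far from the stored set then conditioned on that, $j$ is a witness with probability $\ge \tfrac12\eps$, giving $\Omega(\eps^2)$ --- the cleanest route, matching the text, is: the expected number of samples until a new distinct element is found is $O(\eps^{-1})$, so the expected number of samples to collect $m+1$ distinct elements is $O(m\eps^{-1}) = 1 + m\eps^{-1}$ under the idealized accounting, and by Markov's inequality $s = 1 + \ceil{2\eps^{-1}m}$ samples suffice to reach $c = m+1$ (hence \reject) with probability at least $\tfrac12$. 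The main obstacle I anticipate is making this last step rigorous: the per-iteration success events are not independent (the stored set grows), and the $J$-matching-plus-uniform-index mechanism only promotes with probability $\Omega(\eps)$ rather than deterministically, so I would set up a coupling with a sum of independent geometric random variables with success parameter $\tfrac12\eps$ (one geometric variable per "level" $r = 0,1,\ldots,m$, exactly as in the proof of Lemma~\ref{lemma:sparse-support-cover}), apply linearity of expectation to bound the total by $1 + 2m\eps^{-1}$, and finish with Markov. The equivalence with the forward-only Algorithm~\ref{alg:forward-m-supp} via speculative queries is then immediate from the query-foresight discussion, since the prediction here is conservative: every query the memory-bounded algorithm makes to a past sample $z^i$ (at indices $j_1,\ldots,j_t$ and at $j$) can be speculated in advance before $z^i$ is evicted.
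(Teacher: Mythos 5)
Your handling of the invariants, perfect completeness, and query complexity is on track and essentially matches the paper. The soundness argument, however, has a genuine gap that you yourself flag but do not close. Your chain of reasoning — reverse Markov gives a $\tfrac12\eps$ chance the new sample is $\tfrac12\eps$-far from the stored set, then conditioned on that the uniform index $j$ witnesses with probability $\ge\tfrac12\eps$ — yields a per-iteration promotion probability of only $\Omega(\eps^2)$, and you explicitly acknowledge this. You then assert, without derivation, that the expected number of samples to find a new distinct support element is $O(\eps^{-1})$ and propose coupling with geometrics of parameter $\tfrac12\eps$; but a $\tfrac14\eps^2$ per-iteration bound would only give geometrics of parameter $\tfrac14\eps^2$, expected total $1 + 4m\eps^{-2}$, and Markov then requires $\Theta(m\eps^{-2})$ samples — whereas the algorithm uses only $s = 1 + \ceil{2m\eps^{-1}}$. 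So your argument as stated does not prove the theorem with the algorithm's actual sample budget.

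The missing idea is that no reverse Markov step is needed at all. Fix the current stored set $\{z^1,\ldots,z^c\}$ with its distinguishing index set $J$ and condition on the identity of the fresh sample $x$. By Invariant~2, the $J$-restrictions of the $z^i$ are pairwise distinct, so $x$'s restriction $y^k$ matches at most one $y^i$. If it matches none, $x$ is promoted with probability $1$. If it matches exactly the $i$-th, then $x$ is promoted iff $x_j \ne z^i_j$ for the fresh uniform $j$, which happens with probability exactly $d(x,z^i) \ge \min_{i'} d(x,z^{i'})$. Hence in both cases $\Pr[\text{promote}\mid x] \ge \min_{i'} d(x,z^{i'})$. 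Averaging over $x\sim P$,
\[
\Pr[\text{promote}] \ = \ \E_{x\sim P}\bigl[\Pr[\text{promote}\mid x]\bigr]
\ \ge\ \E_{x\sim P}\Bigl[\min_{i'} d\bigl(x,z^{i'}\bigr)\Bigr]
\ =\ \E_{x\sim P}\bigl[d\bigl(x,\{z^1,\ldots,z^c\}\bigr)\bigr]\ \ge\ \eps,
\]
where the last inequality uses that $P$ is $\eps$-far from being supported by any set of at most $m$ elements and $c\le m$. This gives success parameter $\eps$ per level directly, so $\E[T_{m+1}] \le 1 + m\eps^{-1}$ and Markov finishes with the stated $s = 1 + \ceil{2m\eps^{-1}}$; the factor-of-two loss from reverse Markov never enters. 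This is precisely the step the paper takes (written as $\min_i \E_x[d(x,y^i)] \ge \E_x[\min_i d(x,y^i)] \ge \eps$), and it is what your proposal needs to make the sample budget work out.
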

	
	\begin{proof}
		The query complexity of Algorithm \ref{alg:s-m+1-mem:m-supp} is $O(\eps^{-1} m^2)$: every sample is queried in at most $t+1$ indexes (where $t \le m$) when it is the current one, and in every iteration there is at most one extra query to one of the $z^i$s. The total number of queries is at most $(m+1)s + s$, which is $\Omega(\eps^{-1} m^2)$.
		
		For perfect completeness, observe that the invariants guarantee that $\{z^1,\ldots,z^m\} \setminus \{\mathbf{NULL}\}$ is a subset of the input distribution's support, and thus the algorithm can never reject an input that is supported by at most $m$ elements (it never finds a sample that is different from all $z^i$s).
		
		For soundness, consider an $\eps$-far input distribution, and assume that we have an infinite number of samples. We will bound for every $z^i$ the expected iteration that assigns it with a valid sample. Let $T_1,\ldots,T_m,T_{m+1}$ be these counts. $T_{m+1}$ is the iteration of reject. Trivially, $\Pr[T_1 = 1] = 1$, because we must assign $z^1$ in the first iteration. For $2 \le i \le m+1$, observe that the expected distance of ``the next sample'' from $\{z^1,\ldots,z^{i-1}\}$ is at least $\eps$. Otherwise, the input would be $\eps$-close to be supported by $\{z^1,\ldots,z^{i-1}\}$. Thus:
		
		\begin{eqnarray*}
			\Pr_{x \sim P}\left[y^k \notin \left\{y^1,\ldots,y^t\right\} \lor x_j \ne z^i_j \right] &\ge&
			\min_{1 \le i \le t} \Pr_{x \sim P}\left[x_j \ne z^i_j \right] \\
			&=& \min_{1 \le i \le t} \E_{x \sim P}\left[d\left(x, y^i\right)\right] \ge \E_{x \sim P}\left[\min_{1 \le i \le t} d\left(x, y^i\right)\right] \ge \eps
		\end{eqnarray*}
		
		The number of iterations until $z^i$ is assigned (counting since the assignment of $z^{i-1}$) is a geometric variable with success probability at least $\eps$, and thus its expected value is at most $\eps^{-1}$. By linearity of expectation,
		\[\E\left[T_{m+1} - T_1\right] =\sum_{i=2}^{m+1} \E\left[T_{i+1} - T_{i}\right] \le \eps^{-1}m\]
		By Markov's inequality,	$\Pr\left[T_{m+1} - T_{1} \le 2 \eps^{-1} m \right] > \frac{1}{2}$. Going back to Algorithm \ref{alg:s-m+1-mem:m-supp}, note that it uses $1 + \ceil{2 \eps^{-1}}$ samples (rather than an infinitely many), and thus it rejects every $\eps$-far input with probability higher than $\frac{1}{2}$.
	\end{proof}

	\subsubsection*{Applying query foresight on the improved $m$-support test}
	
	Observe that Algorithm \ref{alg:s-m+1-mem:m-supp} is not forward-only, because it holds up to $m$ samples in memory and keeps querying them with every new sample. Though easier to analyze for correctness, it is not streamlined. We know that every ``sample in memory'' is going to be queried in at most one ``new'' location per each incoming sample, hence we can just choose all these indexes in advance and make all queries as soon as we (virtually) ``store'' the sample in memory. The cost is at most $m$ extra queries for every sample that we take in future. Algorithm \ref{alg:forward-m-supp} is a rephrasing of Algorithm \ref{alg:s-m+1-mem:m-supp} using query foresight.
	
	\begin{algorithm}
		\caption{One sided $\eps$-test for $m$-bounded support, forward only, $O(\eps^{-1} m^2)$ queries}
		\label{alg:forward-m-supp}
		\begin{algorithmic}
			\State \textbf{take} $s = 1 + \ceil{2 \eps^{-1} m}$ samples.
			\State \textbf{choose} $j_1,\ldots,j_s \in [n]$ uniformly and independently at random.
			\State \textbf{let} $M$ be an uninitialized $m \times n$ sparse matrix $\{0,1\}$. \Comment{storage for speculative queries}
			\State \textbf{let} $A$ be an empty list over $[n]$.
			\State $c \gets 0$.
			\ForRange{$k$}{$1$}{$s$}
			\State \textbf{Invariant} $M_{i,j}$ is initialized for all $1 \le i \le c$ and $j \in \{j_1,\ldots,j_s\}$.
			\ForAll{$j$ \textbf{in} $A$}
			\Comment{simulation of $y^k$}
			\State \textbf{query} sample $k$ at $j$, giving $x^k_j$.
			\EndFor
			
			\State \textbf{set} found $\gets 0$.
			
			\ForRange{$i$}{$1$}{$c$}
			\If{$\bigwedge_{j \in A} { \left(M_{i,j} = x^k_j\right) }$}\Comment{simulation of the $y^i$s}
			\State \textbf{set} found $\gets 1$.
			\State $j \gets j_k$.
			\State \textbf{query} sample $k$ at $j$, giving $x^k_j$.
			\If{$M_{i,j} \ne x^k_j$}
			\State $c \gets c + 1$.
			\State \textbf{add} $j$ \textbf{to} $A$.
			\State \textbf{query} sample $k$ at $j_1,\ldots,j_s$, giving $M_{c,j_1},\ldots,M_{c,j_s}$. \Comment{speculative queries}
			\State \Comment{keep the invariant}
			\EndIf
			\EndIf
			\EndFor
			\If{found $= 0$}
			\State $c \gets c + 1$.
			\State \textbf{query} sample $k$ at $j_1,\ldots,j_s$, giving $M_{c,j_1},\ldots,M_{c,j_s}$. \Comment{speculative queries}
			\State \Comment{keep the invariant}
			\EndIf
			
			\If{$c > m$}
			\State \Return \reject
			\EndIf
			\EndFor
			\State \Return \accept
		\end{algorithmic}
	\end{algorithm}
	
	\subsection{Exponential separation from the locally bounded model} \label{sec:forward:subsec:exp-sep-local}
	
	As observed above, one of the advantages of forward-only algorithms over locally bounded ones is the ability to consider binary relations between samples. We show a logarithmic, forward-only $\eps$-test for $\mathbf{Inv}$, demonstrating the exponential separation between the models.
	
	\subsubsection*{Logarithmic forward-only $\eps$-testing algorithm for $\mathbf{Inv}$}
	
	We show an algorithm that $\eps$-tests $\mathbf{Inv}$ using $O(\eps^{-2})$ element queries (that translate to $O(\eps^{-2} \log n)$ bit queries in the binary encoding). We consider the first sample as ``$f$'', and observe that it allows us some indirect access to a presumptive $f^{-1}$ (even if it does not even exist, which is the case if $f$ is not a permutation). Then we take samples and try to distinguish them from both $f$ and the presumptive $f^{-1}$. If we manage to do it, we reject. After $\ceil{3 \eps^{-2}}$ tries, the probability to reject an $\eps$-far input is higher than $\frac{1}{2}$.
	
	\begin{algorithm}
		\caption{One sided $\eps$-test for $\mathbf{Inv}$, forward only, $O(\eps^{-2})$ queries}
		\label{alg:forward:p-inv}
		\begin{algorithmic}
			\State Treat samples as $n$-long strings over $[n]$.
			
			\State \textbf{let} $s = 1 + \ceil{3 \eps^{-2}}$.
			\State \textbf{choose} $j_2,\ldots,j_s \in [n]$, uniformly at random and independently.
			\State \textbf{choose} $k_2,\ldots,k_s \in [n]$, uniformly at random and independently.
			\State \textbf{query} sample 1 at $j_2,\ldots,j_s$, giving $f(j_2),\ldots,f(j_s)$.
			\State \textbf{query} sample 1 at $k_2,\ldots,k_s$, giving $f\left(k_2\right),\ldots,f\left(k_s\right)$.
			\ForRange{$i$}{$2$}{$s$}
			\State \textbf{query} sample $i$ at $j_i$, $f(k_i)$, giving $g(j_i)$, $g(f(k_i))$.
			\If{$f(j_i) \ne g(j_i)$ \textbf{and} $g(f(k_i)) \ne k_i$}
			\State \Return \reject
			\EndIf
			\EndFor
			\State \Return \accept
		\end{algorithmic}
	\end{algorithm}
	
	\begin{theorem} \label{th:alg-Inv-correct}
		Algorithm \ref{alg:forward:p-inv} is a one-sided $\eps$-test for $\mathbf{Inv}$.
	\end{theorem}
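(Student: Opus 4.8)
The plan is to verify the two defining conditions of an $\eps$-test (with one-sided error) separately: perfect completeness on inputs in $\mathbf{Inv}$, and rejection with probability more than $\frac12$ on inputs that are $\eps$-far from $\mathbf{Inv}$. Throughout, recall that a distribution in $\mathbf{Inv}$ is supported on a set $\{f,g\}$ with $(f,g)\in\mathbf{inv}$, meaning either $f=g$ or $g\circ f=\mathrm{id}$ (and by symmetry of the roles, we should be careful about which of the two samples ``$f$'' lands on). For completeness, I would fix an input $P\in\mathbf{Inv}$ with support $\{f,g\}$. The first sample is one of $f,g$; call what the algorithm sees ``$f$'' and each later sample ``$g$''. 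If the underlying pair actually satisfies $g(f(i))=i$ for all $i$, then for every $i$ the test $g(f(k_i))\ne k_i$ fails, so the algorithm never rejects. If instead the pair satisfies $f=g$ as functions (the case we added precisely to make one-sided error possible), then $f(j_i)=g(j_i)$ always holds, so again the rejection condition (which is a conjunction) is never met. Since the first sample could be either element of the support, I would note that both orderings are covered because $\mathbf{inv}$ is built from a symmetric-enough condition once we also allow $f=g$; the key point is that in every realizable situation at least one of the two clauses in the \textbf{if} fails, so the algorithm accepts with probability $1$.

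For soundness, fix an input distribution $P$ that is $\eps$-far from $\mathbf{Inv}$. Condition on the first sample being some fixed string $h\in\{0,1\}^n$ (over the alphabet $[n]$ after decoding); this plays the role of ``$f$''. The idea is that $P$ being $\eps$-far from $\mathbf{Inv}$ implies $P$ is $\eps$-far from the deterministic distribution $\{h\}$ \emph{and} $\eps$-far from the deterministic distribution $\{h^{-1}\}$ — or more precisely, $\eps$-far from the distribution supported on $\{h, \tilde h\}$ where $\tilde h$ is the ``partial inverse'' read off from $h$. I would make this rigorous by arguing: if $P$ were $\eps$-close to being supported on $\{h\}$ it would be $\eps$-close to a deterministic (hence $\mathbf{Inv}$) distribution, contradiction; and if $P$ were $\eps$-close to being supported on $\{h,\tilde h\}$ with $(h,\tilde h)\in\mathbf{inv}$ it would be $\eps$-close to $\mathbf{Inv}$, again a contradiction. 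Consequently the expected distance of a fresh sample $x\sim P$ from the pair $\{h,\tilde h\}$ is at least $\eps$. Then, exactly as in the all-zero / determinism analyses in Section~\ref{sec:na}, querying $x$ at a uniformly random index $j_i$ catches a disagreement with $h$ with probability $d(x,h)$, and querying $x$ at the index $f(k_i)$ (a uniformly random index, since $f=h$ is a fixed permutation and $k_i$ is uniform) catches a disagreement with $\tilde h$ with probability $d(x,\tilde h)$. The probability that a single iteration fails to reject is therefore at most $(1-d(x,h))(1-d(x,\tilde h))$ when the two events were independent; since they use independent random indices $j_i,k_i$, independence holds, and taking expectations over $x\sim P$ and using $\E_x[d(x,h)+d(x,\tilde h)]\ge \E_x[d(x,\{h,\tilde h\})]\ge \eps$ together with convexity gives a per-iteration non-rejection probability at most $1-\eps^2/(\text{const})$, whence after $\ceil{3\eps^{-2}}$ independent iterations the acceptance probability is below $\frac12$. (The constant and the exact exponent are why the sample count is $1+\ceil{3\eps^{-2}}$; I would not belabor the arithmetic.)

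The main obstacle I anticipate is the soundness bound for a single iteration: the naive claim ``$\Pr[\text{iteration fails}]\le(1-d(x,h))(1-d(x,\tilde h))$ and $\E[d(x,h)+d(x,\tilde h)]\ge\eps$ implies $\E[\text{fail}]\le 1-\Theta(\eps^2)$'' needs care, because a large sum of the two distances does not by itself force the \emph{product} $(1-a)(1-b)$ to be bounded away from $1$ — if one of $a,b$ is very small the product is close to $1-\max(a,b)$, which could be much less than $\eps$ for that particular $x$. The resolution is that $d(x,\{h,\tilde h\})=\min(d(x,h),d(x,\tilde h))$, and it is the expectation of this \emph{minimum} that is at least $\eps$; so $\E_x[d(x,h)\cdot d(x,\tilde h)] \ge \E_x[d(x,\{h,\tilde h\})^2]\ge (\E_x d(x,\{h,\tilde h\}))^2 \ge \eps^2$ by Jensen, and $1-(1-a)(1-b)\ge ab$ gives $\Pr[\text{reject in one iteration}]\ge\eps^2$ in expectation over $x$. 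Hence the per-iteration non-rejection probability is at most $1-\eps^2$, and $(1-\eps^2)^{3\eps^{-2}}<\frac12$. The remaining subtlety — that ``$\tilde h$'' is only a partial object when $h$ is not a permutation, and that the first sample might be ``$g$'' rather than ``$f$'' — I would handle by observing that the test is symmetric enough under swapping the roles (the clause $g(f(k_i))\ne k_i$ is exactly checking the composition in one direction, and being $\eps$-far from $\mathbf{Inv}$ rules out closeness to the inverse-pair structure regardless of orientation), so the same counting argument applies after fixing the first sample and conditioning.
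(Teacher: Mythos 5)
Your plan and your handling of the ``$f$ is a permutation'' case match the paper's proof of soundness: the per-iteration rejection probability is $d(g,f)\cdot d(g\circ f,\mathrm{id})$, and when $f$ is a permutation this equals $d(g,f)\cdot d(g,f^{-1})\ge\min\{d(g,f),d(g,f^{-1})\}^2$, whose expectation is at least $\eps^2$ by the variance/Jensen step you correctly identify. Your completeness argument also matches the paper in substance (both orderings of the support work because for finite $[n]\to[n]$, $g\circ f=\mathrm{id}$ forces both $f,g$ to be permutations, hence also $f\circ g=\mathrm{id}$), although it deserves to be spelled out that this is why ``either clause fails'' holds regardless of which element the first sample lands on.

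However, the soundness proof has a genuine gap exactly at the point you flag at the end and then wave away. When the first sample $f$ is \emph{not} a permutation, the quantity $d(g\circ f,\mathrm{id})$ is no longer a distance to a fixed string $\tilde h$, and your framing ``$P$ is $\eps$-far from the distribution supported on $\{h,\tilde h\}$'' does not even type-check, since there is no $\tilde h$ with $\tilde h\circ h=\mathrm{id}$. The ``symmetry under swapping roles'' observation cannot repair this: it is about which of the two samples plays the role of $f$, not about what to do when the value of $f$ that we actually drew is a non-bijection. The paper needs and supplies two further cases: (i) if $f$ is $\frac{1}{3}\eps$-far from being a permutation, then $d(g\circ f,\mathrm{id})\ge\frac{1}{3}\eps$ for \emph{every} $g$, because the collisions of $f$ persist through composition, so the product is at least $\frac{1}{3}\eps\cdot\E[d(g,f)]\ge\frac{1}{3}\eps^2$; and (ii) if $f$ is $\frac{1}{3}\eps$-close to a permutation $\tilde f$, a triangle-inequality argument combined with the observation that $d(g\circ f,\mathrm{id})\ge d(f,\tilde f)$ for every $g$ gives $d(g\circ f,\mathrm{id})\ge\frac{1}{2}d(g,\tilde f^{-1})$, after which the Jensen step runs through with $\{\tilde f,\tilde f^{-1}\}$ in place of $\{f,f^{-1}\}$ and costs the constant $\frac{1}{3}$. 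Without these two cases the argument is incomplete, and the $\frac{1}{3}$ in the sample count $1+\ceil{3\eps^{-2}}$ (which you say you ``would not belabor'') is precisely the loss that the missing case analysis must account for.
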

	
	\begin{proof}
		The query complexity is trivially $O(\eps^{-2})$.
		
		For perfect completeness, let $P \in \mathbf{Inv}$ be an input distribution that is supported either by some $\{f_0\}$ or by some $\{f_0,f_0^{-1}\}$. In the first case, $f(j_i) = g(j_i)$ for every $i$ and thus the algorithm cannot reject. In the second case, without loss of generality, assume that the first sample is $f_0$ (the analysis for $f_0^{-1}$ is the same). For every $i \ge 1$, and $g$ being the $i$th sample, either $g=f_0$, and then $f(j_i) = g(j_i)$, or $g=f_0^{-1}$, and then $g(f(k_i)) = k_i$. In both subcases, the algorithm cannot reject.
		
		For soundness, consider $P$ that is $\eps$-far from $\mathbf{Inv}$, and fix some function $f : [n] \to [n]$. Let $g$ be drawn from $P$. To reject, the algorithm seeks for two witnesses $j$ and $k$ such that $g(i) \ne i$ and $g(f(k)) \ne k$. For every specific $g$, the probability to reject is exactly $d(g,f) \cdot d(g \circ f, \mathrm{id})$.
		
		If $f$ is a permutation then $d(g\circ f,\mathrm{id}) = d(g,f^{-1})$, and in this case, by positivity of variance, we obtain for $g$ that is distributed as a random sample from $P$:
		\begin{eqnarray*}
			\E\left[\left(\min\left\{d\left(g,f\right),d\left(g,f^{-1}\right)\right\}\right)^2\right]
			&\ge& \left(\E\left[\min\left\{d\left(g,f\right),d\left(g,f^{-1}\right)\right\}\right]\right)^2 \\
			&=& \left(\E\left[d\left(g, \{f,f^{-1}\}\right)\right]\right)^2 \ge \eps^2
		\end{eqnarray*}
		
		If $f$ is $\frac{1}{3}\eps$-far from a permutation, then $d(g \circ f, \mathrm{id}) > \frac{1}{3}\eps$ for every $g$ as well, hence:
		\[\E\left[d\left(g,f\right) \cdot d\left(g \circ f, \mathrm{id}\right)\right] \ge \E\left[d\left(g,f\right) \cdot \frac{1}{3}\eps\right] \ge \frac{1}{3}\eps^2\]
		
		If $f$ is $\frac{1}{3}\eps$-close to a permutation, let $\tilde{f}$ be one of the closest permutations to $f$. For every specific $g$ it holds that $d(g \circ f, \mathrm{id}) \ge d(g \circ \tilde{f}, \mathrm{id}) - d(g \circ \tilde{f}, g \circ f) \ge d(g \circ \tilde{f}, \mathrm{id}) - d(\tilde{f}, f)$. Note that $\tilde{f}$ is necessarily constructed considering every set of preimages $f^{-1}(k)=\{i:f(i)=k\}$, and changing exactly $|f^{-1}(k)|-1$ values of $f$ in it. Considering any function $g$, the distance of $g \circ f$ from the identity, and even from being one-to-one, is at least $d(f, \tilde{f})$, hence 
		\begin{eqnarray*}
			d(g\circ f,\mathrm{id})
			&\ge& \max\{d(g\circ \tilde{f},\mathrm{id})-d(g\circ \tilde{f},g\circ f),d(f,\tilde{f})\} \\
			&\ge& \max\{d(g,\tilde{f}^{-1})-d(f,\tilde{f}),d(f,\tilde{f})\}
			\ge \frac12d(g,\tilde{f}^{-1})
		\end{eqnarray*}
		Hence,
		\begin{eqnarray*}
			& \E\left[d(g,f)\cdot d(g\circ f,\mathrm{id})\right] &
			\ge \E\left[(d(g,\tilde{f}) - \frac{1}{3}\eps) \cdot \frac{1}{2} d(g,\tilde{f}^{-1})\right] \\
			&\ge& \hspace{-48pt} \frac{1}{2} \E\left[\left(\min\{d(g,\tilde{f}),d(g,\tilde{f}^{-1})\} - \frac{1}{3}\eps\right) \min\{d(g,\tilde{f}),d(g,\tilde{f}^{-1})\}\right] \\
			&\underset{(*)}\ge& \hspace{-48pt} \frac{1}{2} \left(\E\left[\min\{d(g,\tilde{f}),d(g,\tilde{f}^{-1})\}\right] - \frac{1}{3}\eps\right)\left(\E\left[\min\{d(g,\tilde{f}),d(g,\tilde{f}^{-1})\}\right]\right) \\
			&\ge& \hspace{-48pt} \frac{1}{2}(\eps - \frac{1}{3}\eps)\eps = \frac{1}{3}\eps^2
		\end{eqnarray*}
		
		The starred transition is by a specific case of Jensen's inequality: $\E[(X-t)X] \ge (\E[X])^2 - t \E[X]$. For every specific choice of $f$ as the first sample, the probability to reject is at least $\frac{1}{3}\eps^2$, and thus after $\ceil{3 \eps^{-2}}$ iterations the probability to reject the input is higher than $\frac{1}{2}$.
		
		This completes the proof.
	\end{proof}
	
	\begin{corollary}
		Based on Theorem \ref{th:alg-Inv-correct}, Theorem \ref{th:lbnd-local-Inv} and the discussion in Subsection \ref{subsec:std-mdl-rdc}, there exists a property $\mathbf{Inv}^*$ of distributions over binary strings for which for every $\eps > 0$, there exists a forward-only $\eps$-tester for $\mathbf{Inv}^*$ that uses $O(\eps^{-2} \log n)$ bit queries, but every locally-bounded $\frac{1}{5}$-test must use at least $\Omega(\sqrt{n / \log n})$ bit queries.
	\end{corollary}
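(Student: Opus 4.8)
The plan is to derive both directions from the two $\mathbf{Inv}$-results already proved, via the coordinate-wise binary encoding reduction of Subsection~\ref{subsec:std-mdl-rdc}, keeping track of which adaptivity model the encoding preserves, whether one-sided error survives, and how it distorts distances (only a constant-factor loss, against a mere $\Theta(\log n)$ blow-up in queries). For the upper bound I would take the forward-only one-sided $\eps$-test for $\mathbf{Inv}$ of Theorem~\ref{th:alg-Inv-correct}, which uses $O(\eps^{-2})$ element queries over $[n]$, and run it against the encoded input, answering each element query by reading and decoding the corresponding $2\lceil\log_2 n\rceil$-bit codeword block. This costs $O(\eps^{-2}\log n)$ bit queries; it never queries a sample after leaving a later one, so the forward-only restriction is preserved; on an honest encoding $C_n(P)$ with $P\in\mathbf{Inv}$ the decoding is exact, so the simulation does precisely what the original algorithm does on $P$ and accepts with probability $1$ (perfect completeness preserved); and since $d(C_n(P),\mathbf{Inv}^*)\le d(P,\mathbf{Inv})$ by Lemma~\ref{lemma:bndfmap} (with $b=1$), whenever $C_n(P)$ is $\eps$-far from $\mathbf{Inv}^*$ the distribution $P$ is $\eps$-far from $\mathbf{Inv}$ and the simulation rejects with the required probability. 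This yields the claimed forward-only $\eps$-tester for $\mathbf{Inv}^*$ for every $\eps>0$.

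For the lower bound I would argue by contrapositive. A locally-bounded bit-query $\frac15$-test for $\mathbf{Inv}^*$, fed an input whose samples are all honest blockwise codewords, is simulated by a locally-bounded element-query algorithm for $\mathbf{Inv}$: a bit query inside the block of coordinate $\ell$ of sample $i$ is answered by querying coordinate $\ell$ of sample $i$ once and returning the appropriate bit of its codeword; queries to different samples still never interact, so the locally-bounded structure is retained, and the element-query count is at most the bit-query count. Because $C_n$ contracts EMD distances by only a constant factor (Lemma~\ref{lemma:bndfmap}), and because the proof of Lemma~\ref{lemma:fg-far} actually shows $D_\mathrm{no}$ produces inputs that are nearly $\frac12$-far from $\mathbf{Inv}$ (it bounds $\E[1-d(f\circ g,\mathrm{id})]=1/n$), the images $C_n(D_\mathrm{yes})$ and $C_n(D_\mathrm{no})$ still form a hard pair at an absolute-constant threshold, so the simulated algorithm is a genuine locally-bounded $\frac15$-test for $\mathbf{Inv}$; Theorem~\ref{th:lbnd-local-Inv} then forces it, and hence the original bit-query test, to make $\Omega(\sqrt n)$ queries, which (after accounting for the $\Theta(\log n)$-bit codeword length) is the stated $\Omega(\sqrt{n/\log n})$ bit-query bound.

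The proof is essentially a gluing argument, so there is no deep obstacle; the part that deserves care is the interaction between the constant-factor distance contraction of $C_n$ and the rejection threshold. With a code of relative distance exactly $\frac13$, the contraction only guarantees farness roughly $\frac16$ from $\mathbf{Inv}^*$, so to land precisely at the threshold $\frac15$ one should either invoke the reduction statement of Subsection~\ref{subsec:std-mdl-rdc} up to the exact value of the constant, or instantiate $\mathbf{Inv}^*$ with a code of relative distance above $\frac25$ (still of length $O(\log n)$, e.g.\ by a Gilbert--Varshamov argument); in every case the exponential gap $O(\eps^{-2}\log n)$ versus $n^{\Omega(1)}$, which is the content of the corollary, is untouched. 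One should also confirm the claims asserted in Subsection~\ref{subsec:std-mdl-rdc} that the coordinate-wise encoding keeps an algorithm inside the forward-only model and inside the locally-bounded model and preserves one-sided error; both are immediate from the structure of the two simulations described above.
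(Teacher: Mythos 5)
Your proposal follows the same route as the paper: glue Theorems~\ref{th:alg-Inv-correct} and~\ref{th:lbnd-local-Inv} through the coordinate-wise encoding reduction of Subsection~\ref{subsec:std-mdl-rdc}, tracking query cost (an extra $\Theta(\log n)$), model membership (both the forward-only and the locally-bounded structure survive a per-block decode/encode simulation), one-sidedness, and the constant-factor EMD distortion. Two remarks. First, your upper-bound soundness argument reasons only about inputs of the form $C_n(P)$, but the $\mathbf{Inv}^*$ tester is required to be sound on \emph{arbitrary} distributions over $\{0,1\}^{2n\lceil\log_2 n\rceil}$, most of which are not blockwise encodings; the simulation must therefore also run a codeword-validity check (read the whole $2\lceil\log_2 n\rceil$-bit block, reject on a parity-bit mismatch) and one must show that total distance splits into decoding-error plus logical distance. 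This is precisely why the discussion after Lemma~\ref{lemma:bndfmap} yields only a $3\eps$-tester with $2q\log_2 m$ bit queries rather than an $\eps$-tester read off via ``$b=1$''; the final $O(\eps^{-2}\log n)$ bound is unaffected since the constant $3$ is absorbed. Second, your observation about the threshold is a genuine catch and not a nitpick: with a $\frac13$-distance code the contraction in Lemma~\ref{lemma:bndfmap} gives farness at most about $\frac13\cdot\frac12=\frac16$ for the encoded no-instances of Lemma~\ref{lemma:fg-far}, which is strictly below $\frac15$, and since $\mathbf{Inv}$ has diameter $\frac12$ no choice of threshold for Theorem~\ref{th:lbnd-local-Inv} can bridge this with a $\frac13$-code; instantiating $C_n$ with relative distance above $\frac25$ (still length $O(\log n)$) as you propose is the correct repair, and, as you say, the exponential separation itself is insensitive to the precise constant.
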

	
	Note that the string length of $\mathbf{Inv}^*$ is actually $O(n \log n)$ of the analysis for $\mathbf{Inv}$, hence the division in the lower bound.
	
	\subsection{Polynomial lower bound for $\mathbf{Sym}$} \label{sec:forward:subsec:poly-lbnd-Sym}
	
	\begin{theorem} \label{th:fwd-lbnd-sym}
		Every forward-only $\frac{1}{14}$-test for $\mathbf{Sym}$ must use at least $\frac{1}{2}\sqrt{m}$ queries (for sufficiently large $m$).
	\end{theorem}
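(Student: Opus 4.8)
I would prove Theorem~\ref{th:fwd-lbnd-sym} via Yao's principle (Lemma~\ref{lemma:yao-principle}), by constructing a $D_\mathrm{yes}$ supported on $\mathbf{Sym}$ and a $D_\mathrm{no}$ that is mostly $\frac{1}{14}$-far from $\mathbf{Sym}$, and showing that no deterministic forward-only algorithm making fewer than $\frac12\sqrt m$ queries can tell them apart in variation distance. For $D_\mathrm{yes}$, draw a uniformly random symmetric $f:[m]\times[m]\to\{0,1\}$ and output the distribution $P_f$ that picks $a\in[m]$ uniformly and returns the vector $C(a)$ followed by $\langle f(a,b)\rangle_{b\in[m]}$. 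For $D_\mathrm{no}$, draw $f$ with each entry $f(a,b)$ an independent fair coin (no symmetry constraint) and output the analogous distribution; equivalently, one can think of the row-$a$ and the ``transpose bit'' $f(b,a)$ as independent. I would first argue that $D_\mathrm{no}$ is $\frac{1}{14}$-far from $\mathbf{Sym}$ with probability close to $1$: a random $\pm$-independent matrix is, with high probability, $\Omega(1)$-far in (normalized per-coordinate) Hamming distance from every symmetric matrix, and this distance propagates to EMD-distance of the induced distributions via the per-vector structure (using Lemma~\ref{lemma:bndfmap} / Lemma~\ref{lemma:canonical-dist-from-support-property} to relate the distance of $P_f$ from $\mathbf{Sym}$ to the distance of $f$ from $\mathbf{sym}$, after accounting for the codeword-prefix overhead). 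The constant $\frac{1}{14}$ is exactly tight enough to survive that accounting.

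**The core indistinguishability argument.** Fix a deterministic forward-only algorithm with $q<\frac12\sqrt m$ queries. Run it against $\frac12(D_\mathrm{yes}+D_\mathrm{no})$. Because the codeword prefixes are identical codewords $C(a)$ in both worlds and are decodable from $\lceil\log_2 m\rceil$ bits, I would first note the algorithm effectively learns the ``key'' $a_i$ of each sample it touches and then queries some body bits $f(a_i,\cdot)$. The key point: a body bit $f(a_i,b)$ queried in sample $i$ is only ``linked'' to a body bit of another sample $i'$ when $b$ happens to equal the key $a_{i'}$ of that other sample (then $f(a_i,a_{i'})$ vs.\ $f(a_{i'},a_i)$ is the one place the symmetry constraint bites). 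Under $D_\mathrm{yes}$ those two bits are forced equal; under $D_\mathrm{no}$ they are independent. Being forward-only, when the algorithm queries body bits of sample $i$ it does not yet know the keys $a_{i+1},\ldots,a_s$ of later samples, so it can only ``guess'' which column $b$ will coincide with a future key; since keys are uniform in $[m]$, each guessed column hits a given future sample's key with probability $1/m$. A union bound over the at most $\binom q2$ pairs of queried bits shows that with probability $\ge 1-\binom q2/m \ge 1-\frac14$ no ``collision'' occurs, and conditioned on no collision the distribution of answers is identical under $D_\mathrm{yes}$ and $D_\mathrm{no}$ (every queried body bit is, in both worlds, an independent fair coin). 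Hence the variation distance between the answer distributions is at most $\binom q2/m<\frac14<\frac13-\alpha$ for the appropriate $\alpha$ from the farness bound, and Yao's principle gives the $\frac12\sqrt m$ lower bound.

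**Where the work is.** The delicate part is making the ``no collision $\Rightarrow$ identically distributed'' claim fully rigorous in the presence of adaptivity \emph{within} a sample and across earlier samples: one must define the ``collision'' event carefully so that it is determined by information the algorithm already has at the moment of each query (the keys of samples queried so far), show that conditioned on its complement the transcript is a sequence of independent fair coins in \emph{both} $D_\mathrm{yes}$ and $D_\mathrm{no}$ (for $D_\mathrm{yes}$ this uses that a uniform symmetric matrix restricted to a set of cells containing no mirror pair is uniform), and verify that the forward-only restriction is exactly what prevents the algorithm from ``aiming'' a column query at a key it will only see later — a strong $2$-memory algorithm would defeat this, which is the matching upper bound direction treated in Subsection~\ref{sec:kmem:subsec:exp-sep-forward-bounded}. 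A secondary technical nuisance is bounding the distance of $D_\mathrm{no}$ from $\mathbf{Sym}$ sharply enough to get the stated constant; I would handle it by a direct counting/Chernoff estimate on random symmetric-vs-independent matrices combined with the distance-transfer lemmas already proved, absorbing the codeword-length blow-up into the constant.
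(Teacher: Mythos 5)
Your overall architecture — Yao's principle, a $D_\mathrm{yes}$ from random symmetric $f$, a $D_\mathrm{no}$ from random non-symmetric $f$, and a ``no collision $\Rightarrow$ identical transcripts'' indistinguishability argument bounding the collision probability by $O(q^2/m)$ using the forward-only restriction — is exactly the paper's. The indistinguishability step is sound in both versions: conditioned on never querying both of a mirror pair $f(a,b),f(b,a)$, every body bit queried is an independent fair coin under both distributions, and forwardness ensures each future key is uniform and independent of the transcript so far, giving the $sq/m$ (or your $\binom{q}{2}/m$) collision bound.

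The problem is your choice of $D_\mathrm{no}$ and the unverified claim that ``$\frac{1}{14}$ is exactly tight enough to survive the accounting.'' It is not, with your $D_\mathrm{no}$. If $f(a,b)$ are i.i.d.\ fair coins, each off-diagonal mirror pair $\{f(a,b),f(b,a)\}$ disagrees with probability exactly $\frac12$, so $d(f,\mathbf{sym})$ concentrates around $\frac12\binom m2 / m^2 \approx \frac14$, not $\frac12$. Pushing this through Lemma~\ref{lemma:uf-sym-f-sym-distance} (which costs a factor $6$ to absorb the codeword prefix) yields $d(U_f,\mathbf{Sym}) \gtrsim \frac{1}{24}$, which is strictly less than $\frac{1}{14}$ — so your $D_\mathrm{no}$ does not even place inputs at distance $\frac{1}{14}$ from $\mathbf{Sym}$, and the Yao argument collapses: there is nothing to separate. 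The paper avoids this by drawing $f$ uniformly from the \emph{anti-symmetric} functions (those with $f(a,b)\oplus f(b,a)=1$ for all $a\ne b$), so \emph{every} off-diagonal pair disagrees deterministically, giving $d(f,\mathbf{sym})=\binom m2/m^2 = \frac12-\frac{1}{2m}$ and hence $d(U_f,\mathbf{Sym}) \ge \frac{1}{12}-\frac{1}{12m} > \frac{1}{14}$ for $m\ge 7$; moreover the anti-symmetric distribution still has one-bit uniform marginals on every cell, so the ``no collision $\Rightarrow$ identical transcripts'' step is unaffected. So either replace your $D_\mathrm{no}$ with the anti-symmetric one (which makes your proof essentially coincide with the paper's), or weaken the theorem's constant from $\frac{1}{14}$ to something below $\frac{1}{24}$.
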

	
	The whole subsection is dedicated to proving this theorem. We start with some definitions that will help us describe our test. First, we denote the set $[m]$ by $S$, and also refer it as the ``set of keys''.
	
	\begin{definition}[Key of an element for $\mathbf{Sym}$] \label{def:Sym:key}
		Let $x \in \{0,1\}^{2m}$. We define its \emph{key}, $\kappa(x)$, as the element in $S$ that is deduced from the first $\ceil{\log_2 m}$ bits of $x$.
	\end{definition}
	
	\begin{definition}[Valid key] \label{def:Sym:valid-key}
		A string $x \in \{0,1\}^{2m}$ has a \emph{valid key} if $\braket{x_1,\ldots,x_m} = C(\kappa(x))$ (recall that $C$ is a large-distance systematic code).
	\end{definition}
	
	\begin{definition}[Probability to draw a key] \label{def:Sym:pr-key}
		Let $P$ be a distribution over $\{0,1\}^{2m}$. For every $a \in S$, we define its \emph{probability to be drawn from $P$} as $\Pr_{x \sim P}[\kappa(x) = a]$, and denote it by $\Pr_P[a]$.
	\end{definition}
	
	\begin{definition}[Key support of $P$] \label{def:Sym:key-supp}
		For every $a \in S$, we say that the key $a$ \emph{appears in the support of $P$} if $\Pr_P[a] > 0$.
	\end{definition}
	
	\begin{definition}[Data of an element for $\mathbf{Sym}$] \label{def:Sym:data}
		Let $x \in \{0,1\}^{2m}$. The last $m$ bits (``right hand half'') of $x$ have one-to-one correspondence for the elements in $S$. For every $b \in S$ we define $\phi_x(b)$ as \emph{the value of $x$ in $b$}, corresponding to that one-to-one match. $\phi_x(b)$ can be explicitly defined as $x_{m+b}$, the $m+b$th bit of $x$.
	\end{definition}
	
	\begin{definition}[Consistency at $(a,b)$] \label{def:Sym:local-consistency}
		Let $P$ be a distribution over $\{0,1\}^{2m}$, and let $a, b \in S$. $P$ is \emph{consistent at $(a,b)$} if $\Pr_P[a] = 0$, or $\Pr_P[b] = 0$, or $\phi_x(b)$ is either $0$ with probability $1$ or $1$ with probability $1$, when $x$ is a random sample with $\kappa(x)=a$.
	\end{definition}
	
	\begin{definition}[Consistency] \label{def:Sym:consistency}
		Let $P$ be a distribution over $\{0,1\}^{2m}$. $P$ is \emph{consistent} if it is consistent at every $(a,b)$ ($a\ne b \in S$).
	\end{definition}
	
	\begin{definition}[Symmetry at $\{a,b\}$] \label{def:Sym:local-symmetry}
		Let $P$ be a distribution over $\{0,1\}^{2m}$, and let $a,b \in S$. $P$ is \emph{symmetric at $\{a,b\}$} if
		$\Pr_{x,y \sim P}\left[\kappa(x) = a \wedge \kappa(y) = b \wedge \phi_x(b) \ne \phi_y(a) \right] = 0$.
	\end{definition}
	
	\begin{definition}[Symmetry] \label{def:Sym:symmetry}
		Let $P$ be a distribution over $\{0,1\}^{2m}$. $P$ is \emph{symmetric} if it is symmetric at every $\{a,b\}$ (where $a \ne b \in S$).
	\end{definition}
	
	\begin{observation} \label{obs:Sym-iff-symmetric}
		$P \in \mathbf{Sym}$ if and only if it is symmetric.
	\end{observation}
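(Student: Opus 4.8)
The plan is a definition-chase that translates between the ``matrix encoding'' phrasing of Definition~\ref{def:prop:Sym} and the $\kappa/\phi$ phrasing of Definitions~\ref{def:Sym:local-symmetry}--\ref{def:Sym:symmetry}, with the systematic property of the code $C$ (Lemma~\ref{lemma:syscode}) serving as the dictionary. First I would isolate the two elementary facts that carry the whole argument. (i)~The decoded key $\kappa(x)$ depends only on the first $\ceil{\log_2 m}$ bits of $x$, and for a string $x$ with a valid key (Definition~\ref{def:Sym:valid-key}) one has $\kappa(x)=a$ if and only if $\braket{x_1,\ldots,x_m}=C(a)$; so on the event that a sample has a valid key, the ``identifier'' condition of Definition~\ref{def:prop:Sym} and the condition ``$\kappa(\cdot)=a$'' coincide. (ii)~By Definition~\ref{def:Sym:data}, $\phi_x(b)=x_{m+b}$ and $\phi_y(a)=y_{m+a}$, so ``$\phi_x(b)\ne\phi_y(a)$'' is literally the event ``$x_{m+b}\ne y_{m+a}$'' appearing in Definition~\ref{def:prop:Sym}.

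For the forward implication, assume $P\in\mathbf{Sym}$. Its first clause says a $P$-sample has a valid key with probability $1$, so by~(i) the events $\{\kappa(x)=a\}$ and $\{\braket{x_1,\ldots,x_m}=C(a)\}$ agree up to a $P$-null set for every $a\in S$. Fixing $a\ne b\in S$ and using independence of the two samples $x,y$ together with~(ii), the event $\{\kappa(x)=a\wedge\kappa(y)=b\wedge\phi_x(b)\ne\phi_y(a)\}$ agrees up to a null set with the event in the second clause of Definition~\ref{def:prop:Sym}, which has probability $0$. Hence $P$ is symmetric at every pair $\{a,b\}$ with $a\ne b$, i.e.\ $P$ is symmetric (Definition~\ref{def:Sym:symmetry}).

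For the reverse implication I would run the same dictionary backwards: once the identifier conditions are rewritten via~(i), each off-diagonal instance ($a\ne b$) of the second clause of $\mathbf{Sym}$ is exactly the symmetry-at-$\{a,b\}$ condition. There is no substantial obstacle here; the one point deserving care is precisely this rewriting, since it is legitimate only on the probability-$1$ event that samples carry valid keys. So the reverse direction should be carried out against the valid-key (first) clause of $\mathbf{Sym}$, and one should explicitly check that the ``$a\ne b$'' quantification in Definition~\ref{def:Sym:symmetry} lines up with the quantification in Definition~\ref{def:prop:Sym} (deciding in particular how the diagonal $a=b$ is accounted for by the valid-key clause). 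Everything else is routine bookkeeping between the two sample spaces.
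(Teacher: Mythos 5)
You handle the forward implication correctly: the valid-key clause of Definition~\ref{def:prop:Sym} makes $\{\kappa(x)=a\}$ and $\{\braket{x_1,\ldots,x_m}=C(a)\}$ agree up to a $P$-null set (via the systematic property of $C$), and since $\phi_x(b)=x_{m+b}$ is a literal identity, each $a\ne b$ instance of the second clause translates to symmetry at $\{a,b\}$.

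The reverse implication is where your argument breaks, and your hedges do not save it, because that implication does not hold under the definitions as written. Being symmetric (Definitions~\ref{def:Sym:local-symmetry} and~\ref{def:Sym:symmetry}) says nothing about key validity: a distribution supported on a single string whose first $m$ bits are not a codeword is vacuously symmetric (no pair of distinct keys is ever drawn), yet it fails the first clause of $\mathbf{Sym}$. Nor, contrary to your suggestion, is the $a=b$ diagonal of the second $\mathbf{Sym}$-clause ``accounted for by the valid-key clause'': key validity constrains only $x_1,\ldots,x_m$ and says nothing about the data bit $x_{m+a}$. A distribution supported on $\{C(a)w,\ C(a)w'\}$ with $w_a\ne w'_a$ has valid keys and is vacuously symmetric (all samples share key $a$), yet violates the $a=b$ instance of the second clause --- it is not consistent at $(a,a)$. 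So ``symmetric'' is strictly weaker than membership in $\mathbf{Sym}$; the reverse direction would require a.s.\ valid keys (Definition~\ref{def:Sym:valid-key}) and consistency at every $(a,a)$ as additional hypotheses, neither of which follows from Definition~\ref{def:Sym:symmetry}. The closing claim that ``everything else is routine bookkeeping'' is therefore not justified.
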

	
	\begin{observation}
		If $P$ is symmetric at $\{a,b\}$ then it is also consistent at both $(a,b)$ and $(b,a)$.
	\end{observation}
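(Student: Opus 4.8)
This is a direct unpacking of the definitions. I want to show that if $P$ is symmetric at $\{a,b\}$ then $P$ is consistent at $(a,b)$ (and, by the symmetry of the roles of $a$ and $b$, also at $(b,a)$). Fix distinct $a,b \in S$ and suppose $P$ is symmetric at $\{a,b\}$, i.e.\ $\Pr_{x,y\sim P}\left[\kappa(x)=a \wedge \kappa(y)=b \wedge \phi_x(b)\ne\phi_y(a)\right]=0$.

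First I would handle the degenerate cases. If $\Pr_P[a]=0$ or $\Pr_P[b]=0$, then $P$ is consistent at $(a,b)$ by the very first clause of Definition~\ref{def:Sym:local-consistency}, and there is nothing to prove; so assume both $\Pr_P[a]>0$ and $\Pr_P[b]>0$. Now I want to show that $\phi_x(b)$ is almost surely constant when $x$ is a sample conditioned on $\kappa(x)=a$. Suppose not: then there is a value, say without loss of generality $\phi_x(b)=1$ occurs with positive probability and $\phi_x(b)=0$ occurs with positive probability, among samples $x$ with $\kappa(x)=a$. Let $p_a := \Pr_{x\sim P}[\kappa(x)=a \wedge \phi_x(b)=1]>0$. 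On the $b$-side, since $\Pr_P[b]>0$, at least one of the two events $\{\kappa(y)=b \wedge \phi_y(a)=0\}$ or $\{\kappa(y)=b \wedge \phi_y(a)=1\}$ has positive probability; call its probability $p_b>0$. At least one of these two choices disagrees with the $\phi_x(b)=1$ branch on the $a$-side (if $\phi_y(a)=0$ it already disagrees; if only $\phi_y(a)=1$ has positive probability on the $b$-side, then pair it instead with the $\phi_x(b)=0$ branch on the $a$-side, which also has positive probability by assumption). In either case, by independence of the two samples $x,y\sim P$, the event $\{\kappa(x)=a \wedge \kappa(y)=b \wedge \phi_x(b)\ne\phi_y(a)\}$ has probability at least $p_a p_b>0$ (with the indices adjusted as above), contradicting symmetry at $\{a,b\}$.

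Hence $\phi_x(b)$ is $0$ with probability $1$ or $1$ with probability $1$ when $x$ is a random sample with $\kappa(x)=a$, which is exactly the third clause of Definition~\ref{def:Sym:local-consistency}: $P$ is consistent at $(a,b)$. Swapping the roles of $a$ and $b$ — noting that symmetry at $\{a,b\}$ is the same statement as symmetry at $\{b,a\}$ — gives consistency at $(b,a)$ by the identical argument.

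\textbf{Main obstacle.} There is essentially no obstacle; the only point requiring a moment's care is the bookkeeping in the case analysis above, namely checking that whenever $\phi_x(b)$ is non-constant on the $a$-fiber, one can always pick compatible positive-probability branches on the two sides whose values differ, so that independence of the two samples produces a positive-probability violation of symmetry at $\{a,b\}$.
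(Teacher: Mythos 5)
Your proof is correct: the paper states this observation without any proof (treating it as immediate from the definitions), and your argument — conditioning on both keys having positive probability and using the independence of the two samples to show that a non-constant $\phi_x(b)$ on the $a$-fiber would force a positive-probability violation of symmetry at $\{a,b\}$ — is exactly the routine unpacking the authors intended, including the correct handling of the degenerate zero-probability cases and the relabeling $x\leftrightarrow y$ for $(b,a)$.
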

	
	\begin{observation} \label{obs:Sym:subset-of-support-ok}
		If $P \in \mathbf{Sym}$, then for every distribution $Q$, if $\supp(Q) \subseteq \supp(P)$ then $Q \in \mathbf{Sym}$.
	\end{observation}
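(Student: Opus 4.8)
The plan is to unwind Definition~\ref{def:prop:Sym} and observe that each of its two defining conditions is ``downward closed'' in the support, i.e.\ preserved when we pass to any distribution whose support is contained in $\supp(P)$. So I would fix a distribution $Q$ with $\supp(Q) \subseteq \supp(P)$, assume $P \in \mathbf{Sym}$, and verify the two conditions for $Q$ separately.

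\emph{Valid keys.} The condition $\Pr_{x \sim P}[\exists a \in [m] : x_{1,\ldots,m} = C(a)] = 1$ says precisely that every $x \in \supp(P)$ has its first $m$ bits equal to some codeword $C(a)$. Since $\supp(Q) \subseteq \supp(P)$, the same holds for every $x \in \supp(Q)$, and hence $\Pr_{x \sim Q}[\exists a : x_{1,\ldots,m} = C(a)] = 1$ as well.

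\emph{Symmetry.} Fix $a, b \in [m]$ and call a pair $(x,y) \in \{0,1\}^{2m} \times \{0,1\}^{2m}$ \emph{bad} if $x_{1,\ldots,m} = C(a)$, $y_{1,\ldots,m} = C(b)$, and $x_{m+b} \ne y_{m+a}$. If a bad pair had both $x \in \supp(P)$ and $y \in \supp(P)$, then $\Pr_{x',y' \sim P}[x' = x \wedge y' = y] = \Pr_P[x]\Pr_P[y] > 0$, contradicting the symmetry condition of $P$ at $(a,b)$. Hence every bad pair has $x \notin \supp(P)$ or $y \notin \supp(P)$, and by $\supp(Q) \subseteq \supp(P)$ it has $x \notin \supp(Q)$ or $y \notin \supp(Q)$, so its $Q\times Q$-probability is $0$. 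Since $\{0,1\}^{2m}$ is finite, summing over the finitely many bad pairs gives $\Pr_{x,y \sim Q}[x_{1,\ldots,m} = C(a) \wedge y_{1,\ldots,m} = C(b) \wedge x_{m+b} \ne y_{m+a}] = 0$, which is exactly the required condition for $Q$. Combined with the valid-keys part, this shows $Q \in \mathbf{Sym}$.

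I expect no real obstacle: the entire content is that each defining clause of $\mathbf{Sym}$ asserts that a certain set of strings (resp.\ pairs of strings) has probability zero, and such a set is pinned down once we know it is disjoint from the support — a property that only improves when the support shrinks. The only care needed is the routine ``a probability-zero event is a disjoint union of probability-zero atoms'' step, which is immediate by finiteness of $\{0,1\}^{2m}$. One could equivalently run the symmetry step through Observation~\ref{obs:Sym-iff-symmetric} and Definition~\ref{def:Sym:local-symmetry} instead of Definition~\ref{def:prop:Sym} directly.
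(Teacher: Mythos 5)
Your proposal is correct and follows essentially the same route as the paper: the paper's proof likewise observes that membership in $\mathbf{Sym}$ is a condition quantified over (pairs of) elements of the support, so it is automatically inherited by any distribution whose support is a subset of $\supp(P)$. You merely spell out in more detail the translation between the probability-zero formulation of Definition~\ref{def:prop:Sym} and the ``no bad pair in the support'' formulation, which the paper takes for granted.
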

	\begin{proof}
		The definition of $\mathbf{Sym}$ has the following form: ``For every $x,y\in\mathrm{supp}(P)$, if $\kappa(x)\neq\kappa(y)$ then they satisfy the symmetry condition''. If $\supp(Q) \subseteq \supp(P)$, and $P \in \mathbf{Sym}$, then for every $x,y \in \supp(Q)$, they belong to $\supp(P)$ as well and thus they still satisfy the condition above.
	\end{proof}
	
	To proceed with the proof of Theorem \ref{th:fwd-lbnd-sym}, we define a useful construction of distributions.
	
	\begin{definition}[Distribution $U_f$] \label{def:uf-for-Sym}
		Let $f : S^2 \to \{0,1\}$ be a function. We define $U_f$ as the uniform distribution over $\left\{C(a)\braket{f(a,b)|b\in S} \cond a \in S\right\} \subseteq \{0,1\}^{2m}$.
	\end{definition}
	
	\begin{observation} \label{obs:uf-Sym-iff-f-sym}
		$U_f \in \mathbf{Sym}$ if and only $f \in \mathbf{sym}$.
	\end{observation}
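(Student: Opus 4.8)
The plan is to unwind the definitions and reduce the statement to a pointwise comparison of $f$ with its transpose, using Observation \ref{obs:Sym-iff-symmetric} to translate ``$U_f\in\mathbf{Sym}$'' into ``$U_f$ is symmetric''. First I would record the basic structural facts about $\supp(U_f)$ coming from Definition \ref{def:uf-for-Sym}: for each $a\in S$ there is exactly one string $x^{(a)}=C(a)\braket{f(a,b)\mid b\in S}$ in $\supp(U_f)$; it is drawn with probability exactly $1/m$; its key is $\kappa(x^{(a)})=a$ because $C$ is systematic (Lemma \ref{lemma:syscode}), so the first $\ceil{\log_2 m}$ bits decode to $a$; and its data satisfies $\phi_{x^{(a)}}(b)=x^{(a)}_{m+b}=f(a,b)$ for every $b\in S$. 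Since $C$ is injective, the map $a\mapsto x^{(a)}$ is a bijection between $S$ and $\supp(U_f)$, so two distinct support strings always carry distinct keys (in particular no ``diagonal'' key coincidences can occur).

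Next I would expand ``$U_f$ is symmetric'' via Definition \ref{def:Sym:symmetry} and Definition \ref{def:Sym:local-symmetry}: it holds iff for every $a\ne b$ in $S$ we have $\Pr_{x,y\sim U_f}[\kappa(x)=a\wedge\kappa(y)=b\wedge\phi_x(b)\ne\phi_y(a)]=0$. By the structural facts the only pair $(x,y)$ contributing to this probability is $(x^{(a)},x^{(b)})$, which is realized with probability $1/m^2>0$, and for it the inequality ``$\phi_x(b)\ne\phi_y(a)$'' is precisely ``$f(a,b)\ne f(b,a)$''. Hence the displayed probability vanishes exactly when $f(a,b)=f(b,a)$, so $U_f$ is symmetric iff $f(a,b)=f(b,a)$ for all $a\ne b$; adjoining the vacuous diagonal constraint $f(a,a)=f(a,a)$, this is exactly $f\in\mathbf{sym}$ (Definition \ref{def:prop:sym}). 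Combining with Observation \ref{obs:Sym-iff-symmetric} gives both directions of the claimed equivalence.

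I do not expect a real obstacle here; the only care needed is the bookkeeping in the first paragraph — reading $\kappa$ off the first $\ceil{\log_2 m}$ bits through the systematic code, reading $\phi_x(b)$ off bit $m+b$, and using injectivity of $C$ to rule out key collisions within $\supp(U_f)$. For the ``only if'' direction to be literally a statement about $\mathbf{Sym}$ rather than merely about ``being symmetric'', I would additionally note that the first clause of Definition \ref{def:prop:Sym} (every drawn string starts with a codeword $C(a)$) holds for $U_f$ by construction, so Observation \ref{obs:Sym-iff-symmetric} applies without further work.
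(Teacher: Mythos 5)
Your proposal is correct. The paper gives no proof at all for this observation (it is stated as an immediate consequence of the definitions), and what you have written is precisely the routine unwinding that justifies it: by the systematic code, each $a\in S$ contributes exactly one support string $x^{(a)}$ with key $a$ and data $\phi_{x^{(a)}}(b)=f(a,b)$, the pair $(x^{(a)},x^{(b)})$ occurs with positive probability $1/m^2$, so symmetry of $U_f$ at $\{a,b\}$ is equivalent to $f(a,b)=f(b,a)$, and the valid-key clause of Definition~\ref{def:prop:Sym} holds for $U_f$ by construction. One small slip: you label as the ``only if'' direction the step where you invoke the valid-key clause, but that clause is needed for the ``if'' direction ($f\in\mathbf{sym}\Rightarrow U_f\in\mathbf{Sym}$), since passing from ``$U_f$ is symmetric'' back to membership in $\mathbf{Sym}$ is what requires all keys to be valid; the ``only if'' direction uses $U_f\in\mathbf{Sym}\Rightarrow U_f$ symmetric, which needs no extra hypothesis. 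The substance of your argument is unaffected.
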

	
	\begin{lemma}
		\label{lemma:uf-sym-f-sym-distance}
		For every $f : S^2 \to \{0,1\}$, it holds that $d(U_f,\mathbf{Sym}) \ge \frac{1}{6}d(f, \mathbf{sym})$.
	\end{lemma}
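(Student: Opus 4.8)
The plan is to exploit that $\mathbf{Sym}$ is a support property and apply Lemma~\ref{lemma:canonical-dist-from-support-property}. By Observation~\ref{obs:Sym:subset-of-support-ok}, $\mathbf{Sym} = \mathcal{D}(\Pi)$ for the monotone non-increasing family $\Pi$ consisting of all supports of distributions in $\mathbf{Sym}$. Writing $S=[m]$ and $x_a = C(a)\braket{f(a,b)|b\in S}$ for the support element of $U_f$ carrying key $a$ (these are pairwise distinct because $C$ is a code, so $\Pr_{U_f}[x_a]=\frac1m$), Lemma~\ref{lemma:canonical-dist-from-support-property} gives some $A\in\Pi$ and a map $h$ from $\supp(U_f)$ to $A$ with
\[ d(U_f,\mathbf{Sym}) = \frac1m\sum_{a\in S} d(x_a,h(x_a)). \]
Every element of $A$, being in the support of a distribution in $\mathbf{Sym}$, has a valid key (this is the first clause of the definition of $\mathbf{Sym}$), so I can write $h(x_a) = C(\kappa(h(x_a)))\,w_a$ with a ``data part'' $w_a\in\{0,1\}^m$.

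Next I would split $S$ into $G = \{a\in S : \kappa(h(x_a)) = a\}$ and its complement; set $k = |S\setminus G|$. For $a\notin G$ the first $m$ bits of $x_a$ and $h(x_a)$ are the distinct codewords $C(a)$ and $C(\kappa(h(x_a)))$, which differ in at least $m/3$ positions, so $d(x_a,h(x_a))\ge \frac{1}{2m}\cdot\frac m3 = \frac16$. For $a\in G$ the first halves agree and $d(x_a,h(x_a)) = \frac{e_a}{2m}$ where $e_a = |\{b\in S : f(a,b)\ne (w_a)_b\}|$. Hence
\[ d(U_f,\mathbf{Sym}) \;\ge\; \frac{1}{2m^2}\sum_{a\in G} e_a \;+\; \frac{k}{6m}. \]

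The core of the argument is to convert this into an upper bound on $d(f,\mathbf{sym})$ by building a symmetric $g:S^2\to\{0,1\}$ close to $f$. On $G\times G$ I put $g(a,b) = (w_a)_b$; for distinct $a,b\in G$ this is symmetric because $h(x_a),h(x_b)$ lie in the support of a distribution in $\mathbf{Sym}$, which forces $\phi_{h(x_a)}(b) = \phi_{h(x_b)}(a)$, i.e.\ $(w_a)_b = (w_b)_a$. On the remaining entries (those $(a,b)$ with $a\notin G$ or $b\notin G$) I would choose, for each transposed pair, a common value agreeing with $f$ on at least one of its two coordinates. The number of coordinates on which $g$ differs from $f$ is then at most $\sum_{a\in G} e_a$ (contributed by $G\times G$, since $e_a$ already counts all mismatches in row $a$) plus the number of unordered pairs meeting $S\setminus G$, namely $\binom m2 - \binom{m-k}2 = \tfrac12 k(2m-k-1) \le mk$. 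Therefore $d(f,\mathbf{sym}) \le \frac1{m^2}\bigl(\sum_{a\in G}e_a + mk\bigr)$, and combining this with the previous display — the $\frac{k}{6m}$ terms coincide and $\frac1{2m^2}\ge\frac1{6m^2}$ — gives exactly $d(U_f,\mathbf{Sym})\ge\frac16 d(f,\mathbf{sym})$.

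The main obstacle is the accounting for the ``bad'' keys: one must check that the cheap $\frac16$ lower bound obtained from a single relabeled key is enough to pay for clobbering a whole row and column of $f$. The point is that across all $k$ bad keys these rows and columns overlap, so the number of affected unordered pairs is only about $mk$ rather than $2mk$, and it is a pleasant coincidence that the constant $\frac13$ from the code distance makes the two sides balance with precisely the factor $\frac16$. A secondary technicality is to justify that the proposed $g$ is well-defined and symmetric, which rests entirely on the symmetry condition for the target distribution forcing $(w_a)_b=(w_b)_a$ on the good keys.
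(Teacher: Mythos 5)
Your argument is correct and follows essentially the same route as the paper: invoke Lemma~\ref{lemma:canonical-dist-from-support-property} to obtain the canonical translation $h$, use the validity and symmetry constraints on $h(U_f)\in\mathbf{Sym}$ to extract a symmetric $g$ (with row $a$ equal to $(w_a)_\cdot$ on the good keys), and pay for each relabeled key via the $\frac12\cdot\frac13$ code-distance margin. The only cosmetic difference is that you split into good/bad keys and count affected unordered pairs explicitly ($\binom m2-\binom{m-k}2\le mk$), while the paper bounds $d(x^a,h'(x^a))\le\frac12$ pointwise per bad key; these yield the same estimate.
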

	The proof of this lemma appears at the end of this section. Based on the ``standard model'' distance bound that it guarantees, we define two distributions over inputs:
	\begin{itemize}
		\item $D_\mathrm{yes}$ chooses uniformly at random a symmetric function $f : S^2 \to \{0,1\}$, and then outputs $U_{f}$.
		\item $D_\mathrm{no}$ chooses uniformly at random an anti-symmetric function $f : S^2 \to \{0,1\}$, and then outputs $U_{f}$. By ``anti-symmetric'' we mean that $f(a,b) \oplus f(b,a) = 1$ for every $a \ne b \in S$.
	\end{itemize}
	
	Observe that $D_\mathrm{yes}$ draws an input in $\mathbf{Sym}$ with probability $1$. The $f$ that is drawn by $D_\mathrm{no}$ is always $\binom{m}{2}/m^2$-far from $\mathbf{sym}$, which is $\frac{1}{2}-o(1)$. Hence by Lemma \ref{lemma:uf-sym-f-sym-distance}, an input that is drawn from $D_\mathrm{no}$ is $\frac{1}{12}-o(1)$-far from $\mathbf{Sym}$.
	
	\begin{lemma}[No useful queries lemma]
		\label{lemma:no-useful-queries}
		Let $f : S^2 \to \{0,1\}$ be a function, and let $\mathcal{A}$ be a forward-only probabilistic algorithm that uses $s$ samples and $q$ queries. If the input has the form of $U_f$ for some $f:S^2\to\{0,1\}$, then with probability higher than $1 - \frac{sq}{m}$, for every $a \ne b \in S$, the algorithm obtains at most one of the values $f(a,b)$ or $f(b,a)$.
	\end{lemma}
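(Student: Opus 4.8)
The plan is to bound, for each pair $a \neq b \in S$, the probability that the run of $\mathcal{A}$ extracts \emph{both} $f(a,b)$ and $f(b,a)$, and then apply a union bound over all $\binom{m}{2}$ pairs. The key structural observation is how the information $f(a,b)$ can be obtained: in the distribution $U_f$, the value $f(a,b)$ sits at bit index $m+b$ of any sample $x$ with $\kappa(x)=a$. So to learn $f(a,b)$ the algorithm must (i) have a sample $x$ in play with $\kappa(x)=a$, and (ii) query that sample at index $m+b$ (querying the ``key half'' only reveals $a$ itself, not values of $f$; and since $U_f$ is supported on exactly $m$ strings, $f(a,b)$ is determined the moment both $\kappa(x)=a$ is known and bit $m+b$ is read). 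Symmetrically, learning $f(b,a)$ requires a sample $y$ with $\kappa(y)=b$ queried at index $m+a$.

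First I would set up the following ``bad event'' accounting. Since $\mathcal{A}$ is forward-only, it processes samples in order $1,2,\ldots,s$, and at the time it works on sample number $t$ all earlier samples are already ``frozen'' (their keys, if revealed, are known). The crucial point is that the key $\kappa(x^{(t)})$ of the $t$-th sample is uniformly distributed over $S$ and \emph{independent} of everything the algorithm has seen before querying sample $t$ (this is exactly the structure of $U_f$: each sample is an independent uniform draw of a key, with the rest of the string being a deterministic function of that key). So when the algorithm decides, based on its history, to query index $m+b$ in the current sample (hoping its key is some target $a$), the probability that the sample's key actually equals the ``right'' value to produce a useful second reading is at most $1/m$ per query. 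More carefully: fix a pair $\{a,b\}$; in order to obtain both $f(a,b)$ and $f(b,a)$, the algorithm needs to read index $m+b$ from some sample with key $a$ \emph{and} read index $m+a$ from some sample with key $b$. Condition on whichever of these two events the algorithm ``completes first'' along the run; after that, to complete the second one it must, at some later query, hit a sample whose (independent, uniform) key equals the specific required value — each individual query succeeds at this with probability at most $1/m$, and there are at most $q$ queries total.

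Then I would finish as follows. For a fixed pair $\{a,b\}$, union-bounding over the (at most $q$) queries, the probability that $\mathcal{A}$ obtains both $f(a,b)$ and $f(b,a)$ is at most $q/m$. Union-bounding over the $\binom{m}{2} < m^2/2$ pairs gives probability at most $q m /2$, which is far too weak — so the accounting has to be sharper: the correct statement charges each \emph{query} once. A single query, made to a sample with a revealed or about-to-be-revealed key $c$ at some index $m+d$, produces at most one new value $f(c,d)$; it can only be the ``second'' half of some pair $\{c,d\}$ if $f(d,c)$ was already obtained, and for each of the $q$ queries the probability that its sample's key takes the one value making it a ``second-half completion'' of a previously-obtained value is at most $1/m$, since there are at most $s$ samples and each has an independent uniform key. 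Summing over the $q$ queries and over which sample each query targets, the total failure probability is at most $sq/m$. Applying Yao-style conditioning: outside this event, for every $\{a,b\}$ at most one of $f(a,b),f(b,a)$ is seen, which is the claim.

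The main obstacle, I expect, is making the independence argument fully rigorous in the forward-only model: one must argue that at the moment a query is issued to the ``current'' sample, that sample's key is still uniform and independent of the transcript so far — which is true because a forward-only algorithm cannot have queried a \emph{later} sample, and the key-half of the current sample, if not yet queried, is unknown; and that once we move past a sample, any value $f(c,\cdot)$ it could have yielded is fixed. Handling the bookkeeping of ``which query completes which pair'' cleanly — so that each of the $q$ queries is charged a $1/m$ (or $s/m$ after summing over target samples) failure probability exactly once — is the delicate combinatorial part; everything else is a direct union bound.
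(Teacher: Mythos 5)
Your reading of where the randomness lives is the right one, and your target bound of $sq/m$ and the observation that the naive per-pair union bound is too lossy are both correct. But the per-query accounting you settle on in the third paragraph is charged from the wrong end, and as stated it has a gap. You charge each query $1/m$ as a ``second-half completion'' --- i.e.\ you bound the probability that the \emph{current} sample's key happens to be the one value that closes a previously read pair. The problem is that by the time that query is issued, the current sample's key may already have been read (a forward-only algorithm may well read the key half of the current sample before reading its data half), so conditioning on the transcript so far, that key is no longer uniform; and even if it had not been read, there is in general not a single ``one value'' but as many candidate closing keys as there are previously-read pairs, so a flat $1/m$ is not the right per-query charge. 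The clause ``summing over $\ldots$ which sample each query targets'' also does not parse: a query targets exactly one sample.

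The fix, and the route the paper actually takes, is to charge from the \emph{first} half. Fix a query $i$: it is made on some sample $j(i)$ whose key is some $a_i$, and reads the bit $f(a_i,b_i)$ at index $m+b_i$. At that instant, every sample with index $j>j(i)$ is still completely untouched by the forward-only algorithm, so its key is uniform over $S$ and \emph{independent} of the transcript so far (in particular of $a_i,b_i$). For $f(b_i,a_i)$ ever to be read it is necessary that some sample with index $j>j(i)$ have key $b_i$; a union bound over the at most $s-j(i)$ such future samples gives probability at most $s/m$, and a union bound over the $q$ queries gives $sq/m$. You actually hint at exactly this in your last sentence (``$s/m$ after summing over target samples''), but your third paragraph commits to the ``second-half'' charging, which is the step that would not survive a referee. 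If you instead want to charge per sample rather than per query, you can: condition on the transcript just before sample $j'$ is first touched; its key is then uniform and independent of the set of at most $q$ data-indices already read on earlier samples, so the probability it matches one of them is at most $q/m$; summing over the $s$ samples again gives $sq/m$. Either framing works; the ``per second-half query costs $1/m$'' framing does not.
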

	
	\begin{proof}
		For every $1 \le i \le q$, let $X_i$ be an indicator for the following event: there exist $i' > i$ and $a \ne b \in S$ such that the $i$th query obtains $f(a,b)$, and the $i'$th query obtains $f(b,a)$.
		
		Fix some $i$, and assume that the $i$th query obtains $f(a_i,b_i)$ for some $a_i \ne b_i$. The $i$th query is made in some $j(i)$th sample whose key is $a_i$. To be able to obtain $f(b_i,a_i)$, there must be a sample whose index is $j > j(i)$ and whose key is $b_i$. The $j$th sample (for every $j > j(i)$) is completely independent of the algorithm's behavior so far, because the algorithm is forward-only and has never had any interaction with this sample. Hence the probability that its key is $b_i$ is $\frac{1}{n}$, and by the union bound, $\Pr[X_i = 1] \le (s - j(i))\cdot \frac{1}{n} \le \frac{s}{n}$. Considering all $q$ queries, by the union bound, $\Pr[\exists i : X_i = 1] \le \frac{sq}{n}$.
	\end{proof}
	
	Now we can complete the proof of Theorem \ref{th:fwd-lbnd-sym}.
	
	\begin{proof}[Proof (of Theorem \ref{th:fwd-lbnd-sym})]
		Consider a probabilistic forward-only algorithm $\mathcal{A}$ that makes less than $q \le \frac{1}{2}\sqrt{m}$ queries, and without loss of generality, at most $q$ samples ($s \le q$). By Lemma \ref{lemma:no-useful-queries}, if the algorithm is executed on an input $U_f$, then with probability at least $1 - \frac{1}{4}$ there are no $a \ne b \in S$ for which the algorithm gathers both $f(a,b)$ and $f(b,a)$. For both $D_\mathrm{yes}$ and $D_\mathrm{no}$, the distribution of answers is the same (completely uniform for the queries taken from the data part of the samples). Thus, the variation distance between the algorithm's behavior on $D_\mathrm{yes}$ and $D_\mathrm{no}$ is at most $\frac{1}{4}$. Hence by Yao's principle, the algorithm cannot be a $\frac{1}{5}$-test for $\mathbf{inv}$.
	\end{proof}
	
	Finally we present the proof of Lemma \ref{lemma:uf-sym-f-sym-distance} that was postponed earlier.
	
	\begin{proof}[Proof (of Lemma \ref{lemma:uf-sym-f-sym-distance})]
		Let $h : \supp(P) \to \{0,1\}^{2m}$ be the mapping that is guaranteed by Lemma \ref{lemma:canonical-dist-from-support-property} (which is applicable due to Observation \ref{obs:Sym:subset-of-support-ok}), that is, $h(P) \in \mathbf{Sym}$ and $d(P,h(P)) = d(P,\mathbf{Sym})$. For every $a \in S$, let $x^a$ be the only element in the support of $U_f$ whose key is $a$.
		
		Let $g$ be a symmetric function that is made by fixing all violations in $f$ using ``hints'' from $h$. Formally,
		\begin{align*}
			g(a,b) = \begin{cases}
				\phi_{h(x^a)}(b) & \kappa(h(x^a)) = a \\
				\phi_{h(x^b)}(a) & \kappa(h(x^a)) \ne a, \kappa(h(x^b)) = b \\
				0 & \kappa(h(x^a)) \ne a, \kappa(h(x^b)) \ne b
			\end{cases}
		\end{align*}
		Observe that $g$ is symmetric, and let $h' : \supp\left(U_f\right) \to \{0,1\}^{2m}$ be the following mapping: \[h'(x) = \left\langle x_1,\ldots,x_m \right\rangle \braket{ g(\kappa(x),b) | b \in S }\]
		
		Observe that $d(x^a, h'(x^a)) \le \frac{1}{2}$ for every $a \in S$ (because their key parts match) and that if $\kappa(h(x^a)) \ne a$ then $d(x^a,h(x^a)) \ge \frac{1}{6}$, because codewords for different keys are $\frac{1}{3}$-far apart, and the weight of the key is $\frac{1}{2}$.
		
		For every $a \in S$: if $\kappa(h(x^a)) = a$, then $h(x^a) = h'(x^a)$, hence $d(x^a,h(x^a)) \ge \frac{1}{3} d(x^a,h'(x^a))$. Otherwise, $d(x^a,h(x^a)) \ge \frac{1}{6} \ge \frac{1}{3} d(x^a,h'(x^a))$ as well. In total,
		\begin{eqnarray*}
			d(U_f, \mathbf{Sym})
			= d(U_f, h(U_f))
			&=& \sum_{a \in S}{\frac{1}{m} d(x^a, h(x^a))} \\
			&\ge& \frac{1}{3} \sum_{a \in S}{\frac{1}{m} d(x^a, h'(x^a))}
			= \frac{1}{6} d(f, g)
			\ge \frac{1}{6} d(f, \mathbf{sym})
		\end{eqnarray*}
	\end{proof}
	
	\section{The constant memory model} \label{sec:kmem}
	
	In this section we discuss some characteristics of bounded memory models. The intuition is that $k$-memory algorithms can handle $k$-ary relationships of elements, while smaller memories cannot do that. We prove this intuition by separating weak $k$-memories from strong $k-1$ ones.
	
	\subsection{Exponential separation of forward-only and weak $2$-memory bounded} \label{sec:kmem:subsec:exp-sep-forward-bounded}
	
	In this section we show that $\mathbf{Sym}$ is $\eps$-testable using $O\left(\poly\left(\eps^{-1}\right) \log{n}\right)$ queries by a weak $2$-memory adaptive algorithm, hence demonstrating an exponential separation from the forward-only model.
	
	\subsubsection*{Logarithmic, weak $2$-memory adaptive $\eps$-testing algorithm}
	
	The $\eps$-test for $\mathbf{Sym}$ is straightforward: it uses sufficiently many iterations (in particular, $\ceil{8 \eps^{-2}}$), each one of them consisting of taking two samples and validating their keys and symmetry (with respect to their keys). The bottleneck of the query complexity is actually reading the key of every sample, which is logarithmic, rather than the validation itself, which requires exactly four queries per iteration (two of them to validate the keys, and two more to validate symmetry).
	
	\begin{algorithm}[H]
		\caption{One-sided $\eps$-test for $\mathbf{Sym}$, weak $2$-memory, $O(\eps^{-2} \log n)$ queries}
		\label{alg:w-2-mem:psym}
		\begin{algorithmic}
			\State \textbf{let} $m \gets n/2$.
			\For{$\ceil{8 \eps^{-2}}$ \textbf{times}}
			\State take two samples $x$, $y$.
			\State \textbf{query} $x_1,\ldots,x_{\ceil{\log_2 m}}$, giving $\kappa(x)$ as $a$.
			\State \textbf{query} $y_1,\ldots,y_{\ceil{\log_2 m}}$, giving $\kappa(y)$ as $b$.
			\State \textbf{choose} $i \in [m]$, uniformly at random.
			\State \textbf{query} $x$, $y$ at $i$, giving $x_i$, $y_i$.
			\State \textbf{query} $\phi_x(b)$, $\phi_y(a)$.
			\If{$x_i \ne (C(a))_i$ \textbf{or} $y_i \ne (C(b))_i$}
			\State \Return \reject \Comment{rejection by key invalidity}
			\EndIf
			\If{$\phi_x(b) \ne \phi_y(a)$}
			\State \Return \reject \Comment{rejection by asymmetry}
			\EndIf
			\EndFor
			\State \Return \accept
		\end{algorithmic}
	\end{algorithm}
	
	To be able to analyze the upper bound for $\mathbf{Sym}$, we need some additional definitions.
	
	\begin{definition}[$p_{a,b}$, ``zeroness'' of the presumed $f(a,b)$] \label{def:Sym:pab}
		Let $a,b \in S$ for which $\Pr_P[a] > 0$. We set $p_{a,b} = \Pr_{x \sim P}\left[\phi_x(b) = 0 \cond  \kappa(x) = a\right]$.
	\end{definition}
	
	\begin{definition}[Specific fixing cost, $c_{a,b,x}$] \label{def:Sym:specific-fixing-cost-abx}
		Let $P$ be a distribution over $\{0,1\}^{2m}$. For $a,b \in S$ for which $\Pr_P[a], \Pr_P[b] > 0$, let the zero-fix cost be $c_{a,b,0} = \frac{1}{2m} ((1 - p_{a,b})\Pr_P[a] + (1 - p_{b,a})\Pr_P[b])$, and the one-fix cost be $c_{a,b,1} = \frac{1}{2m} (p_{a,b} \Pr_P[a] + p_{b,a} \Pr_P[b])$. In other words, for $x \in \{0,1\}$, $c_{a,b,x}$ is the cost of making $P$ symmetric at $\{a,b\}$ where both values are $x$.
	\end{definition}
	
	\begin{definition}[Fixing cost, $c_{a,b}$] \label{def:Sym:fixing-cost-ab}
		Let $P$ be a distribution over $\{0,1\}^{2m}$. For $a,b \in S$ for which $\Pr_P[a], \Pr_P[b] > 0$, let the fixing cost be $c_{a,b} = \min\{c_{a,b,0}, c_{a,b,1}\}$. In other words, $c_{a,b}$ is the earth mover's cost of making $P$ symmetric at $(a,b)$.
	\end{definition}
	
	\begin{observation}
		For every $a,b \in S$, $c_{a,b,0} = c_{b,a,0}$ and $c_{a,b,1} = c_{b,a,1}$.
	\end{observation}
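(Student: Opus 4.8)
The plan is to simply unfold Definition~\ref{def:Sym:specific-fixing-cost-abx} and observe that the resulting expression is manifestly invariant under interchanging the roles of $a$ and $b$. First I would write out $c_{a,b,0} = \frac{1}{2m}\left((1-p_{a,b})\Pr_P[a] + (1-p_{b,a})\Pr_P[b]\right)$, and then write out $c_{b,a,0}$ by literally substituting $a \leftrightarrow b$ into the very same formula, obtaining $c_{b,a,0} = \frac{1}{2m}\left((1-p_{b,a})\Pr_P[b] + (1-p_{a,b})\Pr_P[a]\right)$. Since these two sums differ only in the order of their two summands, commutativity of addition yields $c_{a,b,0} = c_{b,a,0}$. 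The argument for the ``one-fix'' cost is identical, starting from $c_{a,b,1} = \frac{1}{2m}\left(p_{a,b}\Pr_P[a] + p_{b,a}\Pr_P[b]\right)$ and comparing with the $a\leftrightarrow b$ substituted version.

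There is essentially no obstacle here; the one subtlety worth flagging is that $p_{a,b}$ and $p_{b,a}$ are in general distinct numbers (the conditional ``zeroness'' probabilities of the two off-diagonal entries), so the point is not that each summand is individually symmetric, but that the \emph{unordered pair} of summands $\{(1-p_{a,b})\Pr_P[a],\,(1-p_{b,a})\Pr_P[b]\}$ does not depend on which of the two keys we name first. I would also note that $c_{a,b,0}$ and $c_{a,b,1}$ are well-defined exactly because Definition~\ref{def:Sym:specific-fixing-cost-abx} assumes $\Pr_P[a],\Pr_P[b] > 0$, so the conditional probabilities make sense, and the symmetry statement inherits precisely this hypothesis. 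Finally I would remark that, taking the minimum of the two, the fixing cost of Definition~\ref{def:Sym:fixing-cost-ab} also satisfies $c_{a,b} = c_{b,a}$, which is the form in which this observation is used in the sequel.
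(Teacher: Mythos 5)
Your proof is correct and is exactly the argument the paper has in mind: the observation is an immediate consequence of the manifest $a\leftrightarrow b$ symmetry in Definition~\ref{def:Sym:specific-fixing-cost-abx}, and the paper itself states it without proof. Your closing remark that this yields $c_{a,b} = c_{b,a}$ (via Definition~\ref{def:Sym:fixing-cost-ab}) correctly identifies how the observation is actually used.
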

	
	\begin{lemma} \label{lemma:Sym-fixing-cost-bound}
		For $a,b \in S$ for which $\Pr_P[a], \Pr_P[b] > 0$,
		\[c_{a,b} \le \frac{\Pr_P[a] + \Pr_P[b] }{2 m \Pr_P[a]\Pr_P[b]} \Pr\left[\kappa(x) = a \wedge \kappa(y) = b \wedge \phi_x(b) \ne \phi_y(a)\right]\]
	\end{lemma}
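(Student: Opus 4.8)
The plan is to unfold both sides of the claimed inequality in terms of the underlying distribution $P$ and reduce it to comparing the "fixing cost" $c_{a,b}$ with the probability mass on asymmetric pairs $(x,y)$ with $\kappa(x)=a$, $\kappa(y)=b$. First I would compute the right-hand side quantity explicitly. Conditioning on $\kappa(x)=a$ and $\kappa(y)=b$ (which happens with probability $\Pr_P[a]\Pr_P[b]$ by independence of the two samples), the event $\phi_x(b) \ne \phi_y(a)$ splits into the two cases $\phi_x(b)=0,\phi_y(a)=1$ and $\phi_x(b)=1,\phi_y(a)=0$. Using the notation $p_{a,b} = \Pr[\phi_x(b)=0 \mid \kappa(x)=a]$ and $p_{b,a} = \Pr[\phi_y(a)=0 \mid \kappa(y)=b]$, this gives
\[
\Pr\left[\kappa(x)=a \wedge \kappa(y)=b \wedge \phi_x(b)\ne\phi_y(a)\right] = \Pr_P[a]\Pr_P[b]\bigl(p_{a,b}(1-p_{b,a}) + (1-p_{a,b})p_{b,a}\bigr).
\]

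Next I would expand $c_{a,b} = \min\{c_{a,b,0}, c_{a,b,1}\}$ from Definition \ref{def:Sym:specific-fixing-cost-abx}. We have $2m\, c_{a,b,0} = (1-p_{a,b})\Pr_P[a] + (1-p_{b,a})\Pr_P[b]$ and $2m\, c_{a,b,1} = p_{a,b}\Pr_P[a] + p_{b,a}\Pr_P[b]$. Since the minimum is at most the weighted average with weights $\Pr_P[b]/(\Pr_P[a]+\Pr_P[b])$ and $\Pr_P[a]/(\Pr_P[a]+\Pr_P[b])$, I would bound
\[
2m\, c_{a,b} \le \frac{\Pr_P[b]\bigl((1-p_{a,b})\Pr_P[a]+(1-p_{b,a})\Pr_P[b]\bigr) + \Pr_P[a]\bigl(p_{a,b}\Pr_P[a]+p_{b,a}\Pr_P[b]\bigr)}{\Pr_P[a]+\Pr_P[b]}.
\]
Collecting the cross terms, the numerator equals $\Pr_P[a]\Pr_P[b]\bigl(p_{a,b} + p_{b,a}\bigr) + \Pr_P[a]^2 p_{a,b} + \Pr_P[b]^2(1-p_{b,a}) + \Pr_P[a]\Pr_P[b]\bigl((1-p_{a,b})+(1-p_{b,a})\bigr)$ — wait, I need to be careful here; the point is that after simplification the numerator should dominate $(\Pr_P[a]+\Pr_P[b]) \cdot \Pr_P[a]\Pr_P[b]\bigl(p_{a,b}(1-p_{b,a})+(1-p_{a,b})p_{b,a}\bigr) / (\Pr_P[a]\Pr_P[b])$, which is exactly what is needed to match the claimed bound after dividing by $\Pr_P[a]\Pr_P[b]$.

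So the core of the argument is an elementary inequality: for the chosen convex combination of $c_{a,b,0}$ and $c_{a,b,1}$, the resulting expression is at least $p_{a,b}(1-p_{b,a}) + (1-p_{a,b})p_{b,a}$ times $\frac{\Pr_P[a]\Pr_P[b]}{\Pr_P[a]+\Pr_P[b]}$ — actually the direction I want is an upper bound on $c_{a,b}$, so I want the convex combination to be \emph{at most} the right thing. The main obstacle, and the step I would spend the most care on, is choosing the correct weights in the convex combination so that the algebra lines up exactly, and verifying that each leftover term (the "square" terms like $\Pr_P[a]^2 p_{a,b}$) is nonnegative and can simply be dropped, or else is absorbed. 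Concretely I expect that with weights proportional to $\Pr_P[b]$ on $c_{a,b,0}$ and $\Pr_P[a]$ on $c_{a,b,1}$, the identity
\[
\Pr_P[b]\,c_{a,b,0} + \Pr_P[a]\,c_{a,b,1} = \frac{1}{2m}\Bigl(\Pr_P[a]\Pr_P[b]\bigl(p_{a,b}(1-p_{b,a})+(1-p_{a,b})p_{b,a}\bigr) + (\text{nonneg.\ terms})\Bigr)
\]
holds only approximately and the honest route is instead to bound $c_{a,b} \le \frac{1}{\Pr_P[a]+\Pr_P[b]}\bigl(\Pr_P[b] c_{a,b,0} + \Pr_P[a] c_{a,b,1}\bigr)$ and then check the resulting polynomial inequality in $p_{a,b}, p_{b,a}, \Pr_P[a], \Pr_P[b]$ directly; that polynomial check is routine once set up correctly, since it reduces to something like $(\Pr_P[a]p_{a,b} - \Pr_P[b]p_{b,a})^2 \ge 0$ or a similar perfect-square certificate. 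Once the pointwise inequality is established, substituting back the expression for the RHS probability from the first paragraph and dividing through by $2m$ completes the proof.
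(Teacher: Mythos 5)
Your first step (computing the right-hand side) is correct: with $p_{a,b}=\Pr[\phi_x(b)=0\mid\kappa(x)=a]$ and $p_{b,a}=\Pr[\phi_y(a)=0\mid\kappa(y)=b]$, and writing $\rho=p_{a,b}(1-p_{b,a})+(1-p_{a,b})p_{b,a}$, the claimed inequality is equivalent to $c_{a,b}\le\frac{\Pr_P[a]+\Pr_P[b]}{2m}\rho$. However, the core step you propose --- bounding the minimum $c_{a,b}=\min\{c_{a,b,0},c_{a,b,1}\}$ by a \emph{fixed} convex combination of $c_{a,b,0}$ and $c_{a,b,1}$ with weights depending only on $\Pr_P[a],\Pr_P[b]$ --- cannot be made to work, and there is no perfect-square certificate hiding behind it. Concretely, take $\Pr_P[a]=\Pr_P[b]=\tfrac12$ and $p_{a,b}=p_{b,a}=0$. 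Then $\rho=0$ (no asymmetry is ever witnessed, so the right-hand side is $0$), and indeed $c_{a,b}=c_{a,b,1}=0$. But $c_{a,b,0}=\frac{1}{2m}$, so your convex combination evaluates to $\frac{1}{4m}>0$, which cannot be bounded by the right-hand side. More generally, for any weights $(\lambda,1-\lambda)$ independent of $p_{a,b},p_{b,a}$, sending both $p$'s to $0$ forces $\lambda=0$, and sending both to $1$ forces $\lambda=1$, so no such fixed convex combination exists.

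The difficulty is exactly the regime where $p_{a,b}$ and $p_{b,a}$ lie on the \emph{same} side of $\tfrac12$: there $\rho$ can be tiny even though the two ``fix costs'' $c_{a,b,0}$ and $c_{a,b,1}$ are far apart, and the proof must exploit that the minimum genuinely picks out the small one. The paper does this by a case split on the positions of $p_{a,b},p_{b,a}$ relative to $\tfrac12$. When they straddle $\tfrac12$, one shows $\rho\ge\tfrac12$ and uses the crude bound $c_{a,b}\le\frac{1}{2m}\cdot\frac12(\Pr_P[a]+\Pr_P[b])$. When both are $\le\tfrac12$, one shows $\rho\ge\max\{p_{a,b},p_{b,a}\}$ (using $\rho=p_{a,b}+(1-2p_{a,b})p_{b,a}\ge p_{a,b}$, and symmetrically), identifies $c_{a,b}=c_{a,b,1}=\frac{1}{2m}(p_{a,b}\Pr_P[a]+p_{b,a}\Pr_P[b])$ explicitly, and then bounds it by $\frac{\Pr_P[a]+\Pr_P[b]}{2m}\max\{p_{a,b},p_{b,a}\}$; the remaining two cases follow by substituting $1-p_{a,b}$ and $1-p_{b,a}$. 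Your proposal is missing this case analysis, and without some device that lets the bound ``see'' which of $c_{a,b,0},c_{a,b,1}$ is the minimum, the argument does not close.
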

	\begin{proof}
		Let $\rho = \Pr\left[\phi_x(b) \ne \phi_y(a) \cond \kappa(x) = a \wedge \kappa(y) = b\right] = (1-p_{a,b})p_{b,a} + p_{a,b}(1-p_{b,a})$.
		
		\paragraph{Case I. $p_{a,b} \le \frac{1}{2} \le p_{b,a}$} Observe that $\rho \ge \frac{1}{2}$ in this case, hence
		\begin{eqnarray*}
			c_{a,b}
			&=& \frac{1}{2m}\min\left\{(1-p_{a,b})\Pr_P[a] + (1-p_{b,a})\Pr_P[b],\  p_{a,b}\Pr_P[a] + p_{b,a}\Pr_P[b]\right\} \\
			&\le& \frac{1}{2m} \cdot \frac{1}{2} \left(\Pr_P[a] + \Pr_P[b]\right) \\
			&\le& \frac{\Pr_P[a] + \Pr_P[b]}{2m}\rho \\
			&=& \frac{\Pr_P[a] + \Pr_P[b]}{2m \Pr_P[a] \Pr_P[b]}\Pr\left[\kappa(x) = a \wedge \kappa(y) = b \wedge \phi_x(b) \ne \phi_y(a)\right]
		\end{eqnarray*}
		
		\paragraph{Case II. $p_{a,b}, p_{b,a} \le \frac{1}{2}$} Observe that $\rho \ge \max\{p_{a,b},p_{b,a}\}$ in this case, because
		$(1-p_{a,b})p_{b,a} + p_{a,b}(1-p_{b,a}) = p_{a,b} + (1 - 2 p_{a,b}) p_{b,a} \ge p_{a,b}$ (and $\rho \ge p_{b,a}$ analogously), hence
		\begin{eqnarray*}
			c_{a,b}
			&=& \frac{1}{2m}\min\left\{(1-p_{a,b})\Pr_P[a] + (1-p_{b,a})\Pr_P[b],\  p_{a,b}\Pr_P[a] + p_{b,a}\Pr_P[b]\right\} \\
			&=& \frac{1}{2m} \left(p_{a,b}\Pr_P[a] + p_{b,a}\Pr_P[b]\right) \\
			&\le& \frac{\Pr_P[a] + \Pr_P[b]}{2m} \max\{p_{a,b},p_{b,a}\} \\
			&\le& \frac{\Pr_P[a] + \Pr_P[b]}{2m}\rho \\
			&=& \frac{\Pr_P[a] + \Pr_P[b]}{2m \Pr_P[a] \Pr_P[b]}\Pr\left[\kappa(x) = a \wedge \kappa(y) = b \wedge \phi_x(b) \ne \phi_y(a)\right]
		\end{eqnarray*}
		
		The case where $p_{a,b},p_{b,a}\ge\frac{1}{2}$ is handled similarly to Case I by replacing $p_{a,b}$ and $p_{b,a}$ with $1-p_{a,b}$ and $1-p_{b,a}$ respectively. Analogously, the remaining case, $p_{b,a}\le\frac{1}{2}\le p_{a,b}$, can be handled similarly to Case II.
	\end{proof}
	
	\begin{definition}[Key invalidity] \label{def:Sym:key-invalidity}
		For an input distribution $P$, we define its \emph{key invalidity} as:
		\[K(P) = \E_{x \sim P}\left[ d(\braket{x_1,\ldots,x_m},C(\kappa(x))) \right] = \E_{x \sim P, i \sim [m]}\left[ x_i \ne (C(\kappa(x)))_i\right]\]
	\end{definition}
	Key invalidity is a measure for ``how far is $P$ from having valid keys'', and it is also the probability of a single iteration to reject a sample $x$ by key invalidity.
	
	\begin{definition}[Asymmetry] \label{def:Sym:asymmetry}
		For an input distribution $P$, we define its \emph{asymmetry} as:
		\[I(P) = \E_{x,y \sim P}\left[\phi_x(\kappa(y)) \ne \phi_y(\kappa(x))\right]\]
	\end{definition}
	Asymmetry is a measure for ``how far is $P$ from being symmetric'', and also the probability of the algorithm to reject by asymmetry.
	
	\begin{observation}
		The probability to reject an input $P$ is at least $\max\{K(P), I(P)\}$.
	\end{observation}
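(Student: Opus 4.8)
The plan is to notice that the whole algorithm returns \reject as soon as any one of its $\ceil{8\eps^{-2}}$ iterations does, so it suffices to lower-bound the probability that a single iteration of Algorithm~\ref{alg:w-2-mem:psym} rejects. Within one iteration the algorithm draws two fresh independent samples $x,y\sim P$, reads their keys $a=\kappa(x)$ and $b=\kappa(y)$, draws $i\in[m]$ uniformly and independently of $x,y$, and returns \reject precisely when at least one of the events ``$x_i\neq (C(a))_i$'', ``$y_i\neq (C(b))_i$'', or ``$\phi_x(b)\neq\phi_y(a)$'' holds. Thus the iteration's rejection event is a union containing each of these three sub-events, and it is enough to bound the relevant two of them from below.

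First I would isolate the key-invalidity branch for $x$: since $x$ is distributed as $P$ and $i$ is an independent uniform element of $[m]$, the probability of the event $x_i\neq (C(\kappa(x)))_i$ equals, by Definition~\ref{def:Sym:key-invalidity}, exactly $K(P)$. Because the iteration rejects whenever this event occurs, the iteration rejects with probability at least $K(P)$. Next I would isolate the asymmetry branch: since $x,y$ are independent samples from $P$ and $a=\kappa(x)$, $b=\kappa(y)$, the probability of $\phi_x(\kappa(y))\neq\phi_y(\kappa(x))$ is, by Definition~\ref{def:Sym:asymmetry}, exactly $I(P)$, and again the iteration rejects whenever this occurs. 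Hence a single iteration rejects with probability at least $\max\{K(P),I(P)\}$, and the same lower bound carries over to the full algorithm.

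There is essentially no obstacle here; the only points requiring a little care are that one should claim an inequality rather than an equality (the union with the $y$-key event and the asymmetry/key events can only increase the per-iteration rejection probability), and that the key check in the algorithm is performed at a single random index $i$ rather than over all of $[m]$ — which is exactly why the matching quantity is the expectation $K(P)$ of the per-index disagreement indicator, as recorded in Definition~\ref{def:Sym:key-invalidity}, and not a worst-case or full-Hamming-distance quantity.
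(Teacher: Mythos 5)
Your proof is correct and matches the paper's reasoning; the paper's proof is simply the terse remark ``Immediately, by the definitions of $K(P)$ and $I(P)$,'' relying on the text just before the observation, which already notes that $K(P)$ is the per-iteration probability of rejecting $x$ by key invalidity and $I(P)$ is the per-iteration probability of rejecting by asymmetry. You have filled in exactly that argument, including the two correct subtleties: that the per-iteration rejection event is a union containing each of the two sub-events (hence the bound is an inequality), and that $K(P)$ is defined as an expectation over a uniformly random index $i\in[m]$, matching the algorithm's single-index key check.
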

	\begin{proof}
		Immediately, by the definitions of $K(P)$ and $I(P)$.
	\end{proof}
	
	\begin{theorem} \label{th:alg-psym-correct}
		Algorithm \ref{alg:w-2-mem:psym} is a one-sided $\eps$-test of $\mathbf{Sym}$.
	\end{theorem}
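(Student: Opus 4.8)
The plan is to verify the three standard ingredients: query/sample complexity, perfect completeness, and soundness. Complexity is immediate — each of the $\ceil{8\eps^{-2}}$ iterations reads $2\ceil{\log_2 m}$ bits for the two keys plus four additional bits (two to check key validity at a random index $i$, two to compare $\phi_x(b)$ against $\phi_y(a)$), so the total is $O(\eps^{-2}\log n)$ bit queries, and the model is respected since at any moment only the two current samples $x,y$ are queried. For perfect completeness, suppose $P\in\mathbf{Sym}$, hence symmetric by Observation \ref{obs:Sym-iff-symmetric}. Then every sample has a valid key by definition of $\mathbf{Sym}$, so no iteration rejects by key invalidity; and for any two samples $x,y$ with keys $a,b$, symmetry at $\{a,b\}$ (together with $a=b$ being handled trivially, since $\phi_x(a)=\phi_y(a)$ when the data encodes a single $f$ restricted to the support — more carefully, when $a=b$ the two drawn samples need not agree, but the test only compares $\phi_x(b)=\phi_x(a)$ with $\phi_y(a)$, and I will need the definition of $\mathbf{Sym}$ to force consistency at $(a,a)$-type reads; actually $\mathbf{Sym}$ as defined only constrains $a\ne b$, so I must check that when $a=b$ the algorithm still cannot reject — it compares $\phi_x(a)$ with $\phi_y(a)$, which can differ). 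This is a subtle point: I would resolve it by noting that for $P\in\mathbf{Sym}$ one can also invoke consistency, or more simply observe that the relevant guarantee we actually need is $\Pr[\text{reject}]=0$, and re-examine whether $\mathbf{Sym}$ membership indeed forbids two support elements with the same key but differing data — if it does not, I would argue the algorithm still never rejects because when $\kappa(x)=\kappa(y)=a$ the bit $\phi_x(a)$ read from $x$ and $\phi_y(a)$ read from $y$ must agree since $x_{1..m}=y_{1..m}=C(a)$ forces nothing about $x_{m+a}$; so I would instead appeal to the precise statement of $\mathbf{Sym}$ which via its second condition with the substitution $b=a$ (reading $x_{m+a}\ne y_{m+a}$) does rule this out. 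I expect a short clarifying paragraph here.

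For soundness, let $P$ be $\eps$-far from $\mathbf{Sym}$. The goal is to show a single iteration rejects with probability at least, say, $\eps^2/4$ (or some explicit constant multiple of $\eps^2$), whence $\ceil{8\eps^{-2}}$ independent iterations reject with probability $>\frac12$. I would split the EMD-distance of $P$ from $\mathbf{Sym}$ into two contributions using Lemma \ref{lemma:canonical-dist-from-support-property} applied to the monotone family underlying $\mathbf{Sym}$ (valid via Observation \ref{obs:Sym:subset-of-support-ok}): the cost of fixing invalid keys and the cost of fixing asymmetries. Concretely, pick the optimal $h:\supp(P)\to\{0,1\}^{2m}$ with $h(P)\in\mathbf{Sym}$ and $d(P,h(P))=d(P,\mathbf{Sym})$. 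Each $d(x,h(x))$ splits across the key half and the data half; summing the key-half contributions is at most $K(P)$ up to a constant (since changing to a valid key costs at most the weight of the key part), and summing the data-half contributions is controlled by the fixing costs $c_{a,b}$, which by Lemma \ref{lemma:Sym-fixing-cost-bound} are bounded in terms of the per-pair asymmetry probabilities. Aggregating over all pairs $\{a,b\}$ and using $\sum_{a,b}\Pr_P[a]\Pr_P[b]$-type normalization, I would conclude $d(P,\mathbf{Sym}) \le c_1 K(P) + c_2 I(P)$ for explicit constants $c_1,c_2$ (I expect something like $d(P,\mathbf{Sym})\le \tfrac12 K(P)+I(P)$ after the constants from the code distance $\tfrac13$ and the $\tfrac1{2m}$ factors are tracked). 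Hence $\max\{K(P),I(P)\}\ge d(P,\mathbf{Sym})/(c_1+c_2) > \eps/(c_1+c_2)$, and since the per-iteration rejection probability is at least $\max\{K(P),I(P)\}$, each iteration rejects with probability $\Omega(\eps)$; but we need $\Omega(\eps^2)$ only, so this is comfortably enough — actually it gives $\Omega(\eps)$ per iteration, so even $O(\eps^{-1})$ iterations would suffice, and the $\eps^{-2}$ in the algorithm is generous. I would state the clean bound that matches the $\ceil{8\eps^{-2}}$ constant.

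The main obstacle I anticipate is the bookkeeping in the soundness direction: correctly relating the EMD cost $d(P,\mathbf{Sym})$ — which is a single global optimal transport — to the two locally-defined quantities $K(P)$ and $I(P)$, handling the interaction between key-fixing and data-fixing (fixing a key changes which pair a sample contributes to), and dealing with keys $a$ that appear with very small or zero probability (so that the denominators $\Pr_P[a]\Pr_P[b]$ in Lemma \ref{lemma:Sym-fixing-cost-bound} are not problematic — one wants to route their mass to fixing keys rather than symmetry, or absorb them into the $K(P)$ term). The cleanest route is: first bound the cost of making all keys valid by $K(P)$ exactly (replace each $x$ by $C(\kappa(x))$ concatenated with the same data half, and note this costs precisely $K(P)$ and can only help); then on the key-valid distribution $P'$ (with $d(P,P')= K(P)$ and $I(P')\le I(P)+2K(P)$ by a coupling argument), apply the per-pair fixing cost bound to get $d(P',\mathbf{Sym})\le \sum_{a<b} c_{a,b} \le \text{const}\cdot I(P')$; combine by the triangle inequality for EMD. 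I would present this as two lemmas (a key-validity reduction and the symmetry-fixing bound) followed by the short combination, and the completeness subtlety about equal keys as a one-line remark invoking the exact form of Definition \ref{def:prop:Sym}.
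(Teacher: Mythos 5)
Your overall roadmap (split into complexity, perfect completeness, soundness via a decomposition into a key-invalidity term $K(P)$ and an asymmetry term $I(P)$, and invocation of Lemma~\ref{lemma:Sym-fixing-cost-bound}) is the same route the paper takes. The completeness subtlety you flag --- what happens when $\kappa(x)=\kappa(y)$ --- is real, and your resolution is right: Definition~\ref{def:prop:Sym} ranges over all $a,b\in[m]$, so the $a=b$ instance forces $x_{m+a}=y_{m+a}$ whenever both keys decode to $a$; the paper glosses over this.

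The soundness analysis, however, contains a concrete error. You claim $d(P,\mathbf{Sym})\le c_1 K(P)+c_2 I(P)$ with absolute constants $c_1,c_2$, and from this conclude that each iteration rejects an $\eps$-far input with probability $\Omega(\eps)$, so that even $O(\eps^{-1})$ iterations would suffice. This bound is false, and $O(\eps^{-1})$ iterations do \emph{not} suffice. The culprit is exactly the denominator $\Pr_P[a]\Pr_P[b]$ in Lemma~\ref{lemma:Sym-fixing-cost-bound} that you anticipated: keys with small but positive probability produce a term of the form $\delta^{-1}I(P)$ rather than $O(I(P))$, and they cannot be absorbed into $K(P)$ because a rare key may be perfectly valid. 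Here is a witness. Take $m/2$ ``rare'' keys each with probability $\eps/m$ and $m/2$ ``common'' keys each with probability $\approx 2/m$, all keys valid ($K(P)=0$), and make the data antisymmetric exactly on rare-rare pairs. Then $I(P)\approx\binom{m/2}{2}\cdot 2(\eps/m)^2=\Theta(\eps^2)$, while $d(P,\mathbf{Sym})$ --- flipping roughly $m/4$ bits in each rare sample, which carries total mass $\eps/2$ --- is $\Theta(\eps)$. So $\max\{K(P),I(P)\}=\Theta(\eps^2)$ even though the input is $\Theta(\eps)$-far, and $\Theta(\eps^{-2})$ iterations are genuinely necessary for this family of inputs.

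The correct statement, which is what the paper proves, is of the form $d(P,\mathbf{Sym})\le\delta+\tfrac12 K(P)+\delta^{-1}I(P)$ for every $\delta>0$: the $\delta$ slack absorbs keys whose probability is below $\delta/m$ (delete them from $h(P)$; their total mass is at most $\delta$), and for the remaining keys the prefactor in Lemma~\ref{lemma:Sym-fixing-cost-bound} is at most $\delta^{-1}$. Optimizing $\delta=\eps/2$ gives: either $K(P)>\eps/2$ or $I(P)>\eps^2/8$, hence per-iteration rejection probability at least $\eps^2/8$, which is what the $\ceil{8\eps^{-2}}$ iterations of Algorithm~\ref{alg:w-2-mem:psym} are calibrated to. Your two-step ``make keys valid first, then fix symmetry'' decomposition is a perfectly viable reorganization of the argument (and cleaner in some respects; note $d(P,P')=\tfrac12 K(P)$, not $K(P)$, and $I(P')=I(P)$ exactly since the data halves are untouched), but it must still introduce the $\delta$ threshold and carry the $\delta^{-1}$ factor through; there is no path to constant coefficients on $I(P)$.
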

	
	\begin{proof}
		The complexity of a single iteration is two samples and $O(\log n)$ queries. In total, the algorithm uses $O(\eps^{-2})$ samples and $O(\eps^{-2} \log n)$ queries.
		
		For perfect completeness, consider some $P \in \mathbf{Sym}$. It must be supported by a set of elements with valid keys such that each two of them do not violate symmetry.
		
		For soundness, consider some input distribution $P$ that is $\eps$-far from $\mathbf{Sym}$. Let $\delta = \frac{1}{2}\eps$ and let $\tilde{S} \subseteq S$ be the set of keys whose probability in $P$ is at least $\frac{\delta}{m}$. Let $f : \tilde{S}^2 \to \{0,1\}$ be the following function:
		\begin{align*}
			f(a,b) = \begin{cases}
				1 & c_{a,b,1} \le c_{a,b,0} \\
				0 & \mathrm{otherwise}
			\end{cases}
		\end{align*}
		Observe that $f$ is symmetric, because $a$ and $b$ have the exact same role in its definition. Let $h : \{0,1\}^{2m} \to \{0,1\}^{2m}$ be the following map:
		\begin{align*}
			h(x) = C(\kappa(x))\braket{ \begin{cases}
					f(\kappa(x),b) & \kappa(x),b \in \tilde{S} \\
					\phi_x(b) & \mathrm{otherwise}
				\end{cases} \cond b \in S}
		\end{align*}
		
		The distribution $h(P)$ does not necessarily belong to $\mathbf{Sym}$, but it is $\delta$-close to it: if we delete all samples whose key is not in $\tilde{S}$ (and transfer their probabilities arbitrarily), the resulting distribution does belong to $\mathbf{Sym}$. Below we bound the distance of $P$ from $h(P)$.
		\begin{eqnarray*}
			& d(P,h(P)) &  \le \quad \sum_{a \in S} \Pr_P[a] \E_{x \sim P}\left[d(x,h(x)) \cond \kappa(x) = a\right] \\
			&\hspace{-24pt} \le & \hspace{-36pt} \frac{1}{2} K(P) + \frac{1}{2} \sum_{a \in \tilde{S}} \Pr_P[a] \E_{x \sim P}\left[d(\braket{x_{m+1},\ldots,x_{2m}},\braket{h_{m+1}(x),\ldots,h_{2m}(x)}) \cond \kappa(x) = a\right] \\
			&\hspace{-24pt}  \le\ & \hspace{-36pt} \frac{1}{2}K(P) +\sum_{a,b \in \tilde{S}} c_{a,b}\\
			&\hspace{-24pt}  \underset{(*)}\le\ & \hspace{-36pt} \frac{1}{2}K(P) + \sum_{a,b \in \tilde{S}} \frac{\Pr_P[a] + \Pr_P[b]}{2 m \Pr_P[a] \Pr_P[b]} \Pr_{x,y \sim P}\left[\kappa(x) = a \wedge \kappa(y) = b \wedge \phi_x(b) \ne \phi_y(a) \right]\\
			&\hspace{-24pt} \underset{(**)}\le\ & \hspace{-36pt} \frac{1}{2}K(P) \!+\!\! \sum_{a,b \in \tilde{S}} \frac{2 \max\{\Pr_P[a], \Pr_P[b]\}}{2 m \frac{\delta}{m} \max\{\Pr_P[a], \Pr_P[b]\}} \Pr_{x,y \sim P}\left[\kappa(x) = a \wedge \kappa(y) = b \wedge \phi_x(b) \ne \phi_y(a) \right]\\
			&\hspace{-24pt}  =\ & \hspace{-36pt} \frac{1}{2}K(P) + \delta^{-1} \sum_{a,b \in \tilde{S}} \Pr_{x,y \sim P}\left[\kappa(x) = a \wedge \kappa(y) = b \wedge \phi_x(b) \ne \phi_y(a) \right] \\
			&\hspace{-24pt} \le\ & \hspace{-36pt} \frac{1}{2}K(P) + \delta^{-1} I(P) .\\
		\end{eqnarray*}
		
		The first transition is correct because we can use a transfer distribution that maps every $x$ to its $h(x)$, and the rightmost sum only considers keys in $\tilde{S}$ because $h$ does not modify values of samples with rare keys. The starred transition is by Lemma \ref{lemma:Sym-fixing-cost-bound}, and the doubly-starred transition is correct because $\Pr[a], \Pr[b] \ge \frac{\delta}{m}$. Other transitions are trivial. Using the above we bound the distance of $P$ from $\mathbf{Sym}$.
		\begin{align*}
			d(P,\mathbf{Sym}) &\ \le \ & \delta + d(P,h(P)) 
			&\ \le\ & \delta + \frac{1}{2}K(P) + \delta^{-1} I(P)
			&\ \le \ & \frac{1}{2}\eps + \frac{1}{2}K(P) + 2\eps^{-1} I(P) .
		\end{align*}
		
		Consider an input distribution $P$ that is $\eps$-far from $\mathbf{Sym}$. By the triangle inequality, either $K(P) > \frac{1}{2}\eps$ or $I(P) > \frac{1}{8}\eps^2$ (or both). In both cases, the probability to reject in a single iteration is at least $\frac{1}{8} \eps^2$. After $\ceil{8 \eps^{-1}}$ iterations, the probability to reject is greater than $\frac{1}{2}$.
	\end{proof}
	
	\subsection{Introduction to exponential separation of constant memories}
	
	Subsection \ref{sec:forward:subsec:poly-lbnd-Sym} contains a lower bound for $\mathbf{Sym}$, based on the concept of ``useful queries'' and the low probability to obtain them. We generalize it for $k$-set functions in order to prove stronger results. Note that here we ``lump together'' $k$ bit values for a set $\{a_1,\ldots,a_k\}$, while in the case for $\mathbf{Sym}$ we partition the two bits for a set $\{a,b\}\in\binom{S}{2}$ to $f(a,b)$ and $f(b,a)$.
	
	\begin{definition}[String properties \textsc{even}, \textsc{odd}]
		\textsc{even} is the property of binary strings with even parity. \textsc{odd} is the property of binary strings with odd parity.
	\end{definition}
	
	\begin{definition}[Function property $\mathbf{par}_k$, counterpart to Definition \ref{def:prop:sym}]
		Let $k \ge 2$. For a fixed $m$ and $S = [m]$, the property $\mathbf{par}_k$ is defined as the set of functions $f : \binom{S}{k} \to \{0,1\}^k$ such that for every $A \in \binom{S}{k}$, $f(k) \in \textsc{even}$.
	\end{definition}
	
	Our goal is to define a property $\mathbf{Par}_k$ of distributions that relates to $\mathbf{par}_k$ in the same way that $\mathbf{Sym}$ relates to $\mathbf{sym}$. To be more specific, we have the following informal constraints:
	
	\begin{enumerate}
		\item A weak $k$-memory algorithm can obtain ``a new value of $f$'' (with high probability) at the cost of $k$ samples and $O(k \log m)$ queries.
		\item For every $k' < k$ and for every strong $k'$-memory algorithm, the probability to obtain strictly more than $k'$ bits of even one value of $f$, is $O\left(\frac{ksq}{m}\right)$, where $s$ and $q$ are the number of samples and queries (respectively).
	\end{enumerate}
	
	\subsubsection*{The parity property}
	
	We generalize $\mathbf{Sym}$ to be able to describe functions from $\binom{S}{k}$ to $\{0,1\}^k$. Note that the generalization does not actually contain $\mathbf{Sym}$ itself, because the latter uses functions from $S^2$ to $\{0,1\}$, rather than from $\binom{S}{2}$ to $\{0,1\}^2$ (which we would achieve by ignoring the diagonal ``$f(a,a)$'' values and concatenating every $f(a,b)$, $f(b,a)$ into $f(\{a,b\})$ of length $2$), but other concepts are still relevant.
	
	The property is denoted by $\mathbf{Par}_k$. It has one explicit parameter $k$. For a size parameter $m$, the property is defined for distributions over $\{0,1\}^{2n}$, where $n = \binom{m-1}{k-1}$. Below we define the notions that we use in $\mathbf{Par}_k$.
	
	The following notions are identical to their counterparts in Subsection \ref{sec:kmem:subsec:exp-sep-forward-bounded}: $S = [m]$, key (Definition \ref{def:Sym:key}), valid key (Definition \ref{def:Sym:valid-key}), probability of a key (Definition \ref{def:Sym:pr-key}), key support (Definition \ref{def:Sym:key-supp}). In all of them, the length of the string is $2n$ (rather than $2m$), and the key part is $n$-bit long (rather than $m$).
	
	\begin{definition}[Data of an element for $\mathbf{Par}_k$, counterpart to Definition \ref{def:Sym:data}]
		Let $x \in \{0,1\}^{2n}$. As $n = \binom{m-1}{k-1}$, the last $n$ bits of $x$ have a correspondence to the subsets of $S \setminus \{\kappa(x)\}$ of size $k-1$. For every such set $A$ we define $\phi_x(A)$ as the value of $x$ in $A$. If $\kappa(x) \in A$ and also $\card{A} = k$, we define $\Phi_x(A)$ as $\phi_x(A \setminus \{\kappa(x)\})$.
	\end{definition}
	
	\begin{definition}[Consistency at $A$, counterpart to Definition \ref{def:Sym:local-consistency}] \label{def:p-pi-k:local-consistency}
		Let $P$ be a distribution over the set $\{0,1\}^{2n}$, and let $a_1 < \ldots < a_k \in S$ and $A = \{a_1,\ldots,a_k\}$. $P$ is \emph{consistent at $A$} if there exists a string $s \in \{0,1\}^k$ for which:
		\[ \Pr_{x^1,\ldots,x^k \sim P}\left[ \left(\bigwedge_{i=1}^{k} \left(\kappa(x^i) = a_i\right) \right) \wedge \braket{\Phi_{x^1}(A), \ldots, \Phi_{x^k}(A)} \ne s \right] = 0 \]
	\end{definition}
	
	\begin{definition}[Consistency, counterpart to Definition \ref{def:Sym:consistency}]
		Let $P$ be a distribution over $\{0,1\}^{2n}$. $P$ is \emph{consistent} if it is consistent at every $A \in \binom{S}{k}$.
	\end{definition}
	
	\begin{definition}[parity-validity at $A$, counterpart to Definition \ref{def:Sym:local-symmetry}]
		Let $P$ be a distribution over $\{0,1\}^{2n}$, and let $a_1 < \ldots < a_k \in S$ and $A = \{a_1,\ldots,a_k\}$. $P$ is \emph{parity-valid at $A$} if
		\[ \Pr_{x^1,\ldots,x^k \sim P}\left[\left(\bigwedge_{i=1}^{k} \kappa(x^i) = a_i\right) \wedge \bigoplus_{i=1}^k \Phi_{x^i}(A) = 1\right] = 0 \]
	\end{definition}
	
	\begin{definition}[parity-validity, counterpart to Definition \ref{def:Sym:symmetry}] \label{def:p-pi-k:pi-valid}
		Let $P$ be a distribution over the set $\{0,1\}^{2n}$. $P$ is \emph{parity-valid} if it is parity-valid at every $A \in \binom{S}{k}$.
	\end{definition}
	
	\begin{definition}[Property $\mathbf{Par}_k$]
		For a size parameter $m$, $n = \binom{m-1}{k-1}$ and a systematic code $C : [m] \to \{0,1\}^n$ (whose existence is guaranteed by Lemma \ref{lemma:syscode}), $\mathbf{Par}_k$ is the property of parity-valid distributions over $\{0,1\}^{2n}$ with valid keys. Note that if $P \in \mathbf{Par}_k$, then for every distribution $Q$, if $\supp(Q) \subseteq \supp(P)$ then $Q \in \mathbf{Par}_k$ (see Observation \ref{obs:Sym:subset-of-support-ok}).
	\end{definition}
	
	\begin{lemma}
		Let $P$ be a distribution. Parity-validity at $A$ implies consistency at $A$.
	\end{lemma}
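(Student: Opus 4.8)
The plan is to reduce to the nondegenerate case where all $k$ keys of $A$ appear in the support of $P$, and then exploit that the $k$ samples are drawn independently together with the elementary fact that a XOR of independent bits is deterministic only if each summand is itself deterministic.

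Write $A = \{a_1 < \cdots < a_k\}$ and set $p_i = \Pr_P[\kappa(x) = a_i]$. First I would dispose of the degenerate case: if $p_i = 0$ for some $i$, then the event $E := \bigwedge_{i=1}^k (\kappa(x^i) = a_i)$ has probability zero, so the probability appearing in the definition of consistency at $A$ vanishes for \emph{any} choice of $s \in \{0,1\}^k$, and $P$ is vacuously consistent at $A$. So from now on assume $p_i > 0$ for all $i$, which makes $\Pr[E] = \prod_{i=1}^k p_i > 0$ by independence, and condition throughout on $E$.

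The main step is the following observation. Since the $k$ samples are independent and $E$ is a conjunction of events each depending on a single sample, conditioned on $E$ the bits $\Phi_{x^1}(A), \ldots, \Phi_{x^k}(A)$ are mutually independent, and $q_i := \Pr[\Phi_{x^i}(A) = 1 \mid E]$ depends only on the $i$th sample. I claim parity-validity at $A$ forces every $q_i \in \{0,1\}$. Suppose instead that some $q_i$ — say $q_1$ — satisfies $0 < q_1 < 1$. Pick, for each $j \ge 2$, a most-likely value $t_j$ of $\Phi_{x^j}(A)$ conditioned on $E$, so that $\Pr[\Phi_{x^2}(A) = t_2, \ldots, \Phi_{x^k}(A) = t_k \mid E] > 0$ (a product of positive marginals). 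Conditioned on $E$ and on this assignment, $\bigoplus_{i=1}^k \Phi_{x^i}(A) = \Phi_{x^1}(A) \oplus (t_2 \oplus \cdots \oplus t_k)$, and $\Phi_{x^1}(A)$ still equals $1$ with probability $q_1$ by independence; hence the XOR equals $1$ with probability $q_1$ or $1 - q_1$, which is positive in either case. Multiplying back by $\Pr[E] > 0$ and by the probability of the assignment, we get $\Pr\big[E \wedge \bigoplus_{i=1}^k \Phi_{x^i}(A) = 1\big] > 0$, contradicting parity-validity at $A$.

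Once each $q_i \in \{0,1\}$, each $\Phi_{x^i}(A)$ is almost surely constant (equal to $q_i$) conditioned on $E$, so taking $s = \braket{q_1, \ldots, q_k}$ yields $\Pr\big[E \wedge \braket{\Phi_{x^1}(A), \ldots, \Phi_{x^k}(A)} \ne s\big] = 0$, which is exactly consistency at $A$. The only real content is the middle step — a deterministic XOR of independent bits forces every bit to be deterministic; everything else is bookkeeping about conditioning on the keys. I do not expect a serious obstacle; the points to be careful about are that the conditioning event $E$ must have positive probability (hence the degenerate case is handled first) and that the $\Phi$'s are genuinely independent after conditioning on $E$, which holds because the samples are independent and $E$ splits as a product over samples.
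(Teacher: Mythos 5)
Your proof is correct and takes essentially the same route as the paper's: the central move in both is to isolate a single key $a_i$ whose bit $\Phi_x(A)$ is not deterministic (conditioned on $\kappa(x)=a_i$), fix positive-probability witnesses for the remaining $k-1$ keys, and note that flipping the one free bit flips the overall parity, forcing a positive-probability odd-parity event. The paper phrases this as the contrapositive (inconsistency $\Rightarrow$ a parity violation) while you argue directly with an embedded contradiction and handle the degenerate missing-key case explicitly, but the mathematical content is identical.
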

	\begin{proof}
		Assume that $P$ is inconsistent at some $A = \{a_1,\ldots,a_k\}$ due to some key $a_i \in A$. That is, $a_1,\ldots,a_k$ appear in the support of $P$, and also
		\[\Pr_{y,y' \sim P}\left[\phi_y(A \setminus \{a_i\}) \ne \phi_{y'}(A \setminus \{a_i\}) \cond \kappa(y) = \kappa(y') = a_i \right] > 0\]
		Let $y,y'$ be two samples in the support of $P$ whose key is $a_i$, but they differ in their $\phi(A \setminus \{a_i\})$. Consider some choice $x_1,\ldots,x_{i-1},x_{i+1},\ldots,x_k$ of samples
		with positive probabilities for which $\kappa(x_j)=a_j$ for $1\le j\le
		k$, $j\ne i$, and consider the following two sequences:
		\begin{align*}
			x^1,\ldots,x^{i-1},y,x^{i+1},\ldots,x^k &&&
			x^1,\ldots,x^{i-1},y',x^{i+1},\ldots,x^k
		\end{align*}
		Both of these sequences have strictly positive probability to be chosen, and they represent two words that differ only in the $i$th bit. Hence, one of them does not belong to \textsc{even}.
	\end{proof}
	
	As in the analysis of $\mathbf{Sym}$, we define a useful construction of a distribution from a given function. Recall Definition \ref{def:ord-a-A} of $\mathrm{ord}(a,A)$ for $a\in A\in\binom{S}{k}$.
	
	\begin{definition}[$U_f$, counterpart to Definition \ref{def:uf-for-Sym}] \label{def:uf-for-p-pi-k}
		Let $f : \binom{S}{k} \to \{0,1\}^k$ be a function. We define $U_f$ as the following distribution: we first choose a key $a \sim S$ uniformly at random, and then return the following string:
		\[C(a) \braket{(f(B\cup\{a\}))_{\mathrm{ord}(a,B\cup\{a\})} \cond B \in \binom{S \setminus \{a\}}{k-1} }\]
	\end{definition}
	
	\begin{observation}[see Observation \ref{obs:uf-Sym-iff-f-sym}]
		$U_f \in \mathbf{Par}_k$ if and only $f \in \mathbf{par}_k$.
	\end{observation}
	
	\subsubsection*{Polynomial lower bound for $\mathbf{Par}_k$ for strong $k'$-memory where $k' < k$}
	
	To be able to show a polynomial lower bound we have to show two lemmas.
	
	\begin{lemma}[see Lemma \ref{lemma:uf-sym-f-sym-distance}]
		\label{lemma:uf-p-pi-k-f-pi-k-distance}
		For all $k \ge 2$ and $f : \binom{S}{k} \to \{0,1\}^k$, $ d(f, \mathbf{par}_k) \ge \frac{1}{6} d(U_f, \mathbf{Par}_k)$.
	\end{lemma}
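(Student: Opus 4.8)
The plan is to prove the stated inequality in its \emph{easy} direction: bound $d(U_f,\mathbf{Par}_k)$ from \emph{above} by exhibiting a nearby member of $\mathbf{Par}_k$, in contrast to Lemma~\ref{lemma:uf-sym-f-sym-distance} which bounds the analogous quantity from below. First I would pick a function $g:\binom{S}{k}\to\{0,1\}^k$ lying in $\mathbf{par}_k$ with $d(f,g)=d(f,\mathbf{par}_k)$; such a $g$ exists since the domain is finite, and one may take it explicitly by flipping a single bit of $f(A)$ for every $A$ whose value has odd parity. By the observation following Definition~\ref{def:uf-for-p-pi-k}, $g\in\mathbf{par}_k$ gives $U_g\in\mathbf{Par}_k$, hence $d(U_f,\mathbf{Par}_k)\le d_{\mathrm{EMD}}(U_f,U_g)$. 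It then suffices to show $d_{\mathrm{EMD}}(U_f,U_g)\le\frac{1}{2}d(f,g)$, since this yields $d(f,\mathbf{par}_k)=d(f,g)\ge 2\,d_{\mathrm{EMD}}(U_f,U_g)\ge 2\,d(U_f,\mathbf{Par}_k)\ge\frac{1}{6}d(U_f,\mathbf{Par}_k)$, as distances are nonnegative.

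To bound $d_{\mathrm{EMD}}(U_f,U_g)$ I would use the ``key-aligned'' coupling. Both $U_f$ and $U_g$ are uniform over $m$ strings indexed by the keys $a\in S$; write $z^a_f=C(a)\,w^a_f$ and $z^a_g=C(a)\,w^a_g$, where the $n$-bit data halves satisfy $(w^a_f)_B=(f(B\cup\{a\}))_{\mathrm{ord}(a,B\cup\{a\})}$ for $B\in\binom{S\setminus\{a\}}{k-1}$, and similarly for $g$. Since the systematic code $C$ has positive minimal distance, the strings $z^a_f$ (resp.\ $z^a_g$) are pairwise distinct as $a$ ranges over $S$, so the distribution $T$ placing mass $1/m$ on each pair $(z^a_f,z^a_g)$ is a legitimate transfer distribution in $\mathcal{T}(U_f,U_g)$. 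Hence $d_{\mathrm{EMD}}(U_f,U_g)\le\frac{1}{m}\sum_{a\in S}d_{\mathrm{H}}(z^a_f,z^a_g)$, and since the key halves $C(a)$ agree, $d_{\mathrm{H}}(z^a_f,z^a_g)=\frac{1}{2n}\bigl|\{B\in\binom{S\setminus\{a\}}{k-1}: (f(B\cup\{a\}))_{\mathrm{ord}(a,B\cup\{a\})}\ne(g(B\cup\{a\}))_{\mathrm{ord}(a,B\cup\{a\})}\}\bigr|$.

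The last step is a reindexing of the double sum $\sum_{a\in S}\sum_{B}$ by the pair $(A,a)$ with $A=B\cup\{a\}\in\binom{S}{k}$ and $a\in A$: for a fixed $A$, the map $a\mapsto\mathrm{ord}(a,A)$ is a bijection $A\to[k]$, so $\sum_{a\in A}\bigl[(f(A))_{\mathrm{ord}(a,A)}\ne(g(A))_{\mathrm{ord}(a,A)}\bigr]=\bigl|\{i\in[k]:(f(A))_i\ne(g(A))_i\}\bigr|$. Summing over $A$ gives $\sum_{a\in S}\sum_{B}(\cdots)=\sum_{A\in\binom{S}{k}}\bigl|\{i:(f(A))_i\ne(g(A))_i\}\bigr|=k\binom{m}{k}\,d(f,g)$, the last equality being just the definition of normalized Hamming distance on the length-$k\binom{m}{k}$ binary encoding of a function in $(\{0,1\}^k)^{\binom{S}{k}}$. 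Plugging in and using $k\binom{m}{k}=m\binom{m-1}{k-1}=mn$ gives $d_{\mathrm{EMD}}(U_f,U_g)\le\frac{1}{m}\cdot\frac{1}{2n}\cdot k\binom{m}{k}\,d(f,g)=\frac{1}{m}\cdot\frac{1}{2n}\cdot mn\cdot d(f,g)=\frac{1}{2}d(f,g)$, which combined with the reduction above finishes the proof. There is no real obstacle: the only points needing care are verifying that the per-key coupling is genuinely a transfer distribution (it is, because $U_f$ and $U_g$ are uniform over supports of equal size indexed the same way by $S$) and tracking the normalization constants in the reindexing. I would also remark that this bound is the ``dual'' of Lemma~\ref{lemma:uf-sym-f-sym-distance} and, unlike it, does not require the canonical-distance machinery of Lemma~\ref{lemma:canonical-dist-from-support-property}.
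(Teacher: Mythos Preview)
Your argument is mathematically sound for the inequality as literally stated, and indeed you even obtain the sharper bound $d(f,\mathbf{par}_k)\ge 2\,d(U_f,\mathbf{Par}_k)$. You also correctly flagged that this direction is opposite to that of Lemma~\ref{lemma:uf-sym-f-sym-distance}.

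The issue is that the displayed inequality in the lemma is a typo: the paper's own proof concludes with
\[
d(U_f,\mathbf{Par}_k)=d(U_f,h(U_f))\ge\tfrac{1}{3}\sum_{a\in S}\tfrac{1}{m}d(x^a,h'(x^a))=\tfrac{1}{6}d(f,g)\ge\tfrac{1}{6}d(f,\mathbf{par}_k),
\]
i.e.\ the \emph{reverse} direction $d(U_f,\mathbf{Par}_k)\ge\frac{1}{6}d(f,\mathbf{par}_k)$, exactly matching Lemma~\ref{lemma:uf-sym-f-sym-distance}. This is also the direction actually used in Theorem~\ref{th:lower-bound-pi-k-with-k-prime-mem}, where from $d(f,\mathbf{par}_k)\ge\frac{1}{k}$ the paper deduces that $U_f$ is $\frac{1}{6k}$-far from $\mathbf{Par}_k$. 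The direction you proved is the trivial one for that application (it would only tell you $U_f$ is at most $\frac{1}{2k}$-far, which gives no lower bound).

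For the intended direction the paper's approach is substantively different from yours: it starts from the nearest $Q\in\mathbf{Par}_k$ to $U_f$, realized via Lemma~\ref{lemma:canonical-dist-from-support-property} as a map $h$, and then \emph{reconstructs} a function $g\in\mathbf{par}_k$ from $h$, splitting into cases according to whether $h$ preserves the key codeword of each $x^a$. The $\frac{1}{6}$ constant arises from the $\frac{1}{3}$ code distance on the key half. Your coupling argument cannot establish this direction, since exhibiting a single $U_g$ close to $U_f$ only yields an upper bound on $d(U_f,\mathbf{Par}_k)$; the hard part is controlling \emph{every} distribution in $\mathbf{Par}_k$, including those not of the form $U_g$.
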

	In the context of this lemma, the distance of two functions $f,g : \binom{S}{k} \to \{0,1\}^k$ is their Hamming distance as $k\binom{m}{k}$-bit strings, rather than Hamming distance as $\binom{m}{k}$-long strings over the alphabet $\{0,1\}^k$. Equivalently, $d(f,g)=\E_{x\sim\binom{S}{k}}d_{\mathrm{H}}(f(x),g(x))$.
	
	\begin{proof}
		Without loss of generality, let $h : \supp(P) \to \{0,1\}^{2n}$ be the mapping that is guaranteed by Lemma \ref{lemma:canonical-dist-from-support-property}, that is, $h(P) \in \mathbf{Par}_k$ and $d(P,\mathbf{Par}_k) = d(P,h(P)) = \sum_{a \in S}{\frac{1}{m} d(x^a, h(x^a))}$. For every $a \in S$, let $x^a$ be the only element in the support of $U_f$ whose key is $a$. Let $g : \binom{S}{k} \to \{0,1\}^k$ be the following function: for $A = \{a_1,\ldots,a_k\}$ where $a_1 < \ldots < a_k$, we have two cases.
		
		\paragraph{Case I} If, for every $1 \le i \le k$, there exists $b_i \in S$ such that $h(\kappa(x^{b_i})) = a_i$, then we define
		$g(A) = \braket{\phi_{h(x^{b_i})}(A \setminus \{a_i\}) \cond 1 \le i \le k}$. Note that for every choice of $b_1,\ldots,b_k$ such that $\kappa(h(x^{b_i})) = a_i$ for all $1 \le i \le k$, we have the exact same result of $\braket{\phi_{h(x^{b_i})}(A\setminus \{a_i\}) \cond 1 \le i \le k}$. This is due to the fact that all keys of $A$ appear in the support of $h(P)$, which belongs to $\mathbf{Par}_k$, and hence $h(P)$ is consistent at $A$. Also, this string must have even parity for the same reason. 
		
		\paragraph{Case II} Otherwise, we define $i_0$ as $i_0 = \min \left\{ 1 \le i \le k : \forall b \in S : h(\kappa(x^b)) \ne a_i \right\}$.
		
		\begin{align*}
			g(A) = \braket{
				\begin{cases}
					\phi_{h(x^{a_i})}(A \setminus \{a_i\}) & \kappa(h(x^{a_i})) = a_i \\
					\phi_{x^{a_i}}(A \setminus \{a_i\}) & \kappa(h(x^{a_i})) \ne a_i \wedge i \ne i_0 \\
					\bigoplus_{j=[k] \setminus \{i_0\}} (g(A))_j & i = i_0
				\end{cases}
				\cond 1 \le i \le k} .
		\end{align*}
		This makes sure that $g(A) \in \textsc{even}$ (and thus valid).
		
		Note that both in Case I and Case II, if $\kappa(h(x^{a_i})) = a_i$ for some $a_i \in A$ (not necessarily for other keys in $A$), then specifically $(g(A))_i = (f(A))_i$. Let $h' : \supp\left(U_f\right) \to \{0,1\}^{2n}$ be the following mapping: \[h'(x) = \left\langle x_1,\ldots,x_n \right\rangle \braket{ \left(g\left(B \cup \{a\}\right)\right)_{\mathrm{ord}(a,B \cup \{a\})} \cond\; B \in \binom{S \setminus \{a\}}{k-1} }\]
		Observe that $d(x^a, h'(x^a)) \le \frac{1}{2}$ for every $a \in S$ (because their key parts match) and that if $\kappa(h(x^a)) \ne a$ then $d(x^a,h(x^a)) \ge \frac{1}{6}$, because codewords for different keys are $\frac{1}{3}$-far apart, and the weight of the key is $\frac{1}{2}$.
		
		For every $a \in S$: if $\kappa(h(x^a)) = a$, then $h(x^a) = h'(x^a)$ and so $d(x^a,h(x^a)) = d(x^a,h'(x^a)) \ge \frac{1}{3} d(x^a,h'(x^a))$. Otherwise, $d(x^a,h(x^a)) \ge \frac{1}{6} \ge \frac{1}{3} d(x^a,h'(x^a))$ as well. In total,
		\begin{eqnarray*}
			d(U_f, \mathbf{Par}_k) = d(U_f, h(U_f))
			&=& \sum_{a \in S}{\frac{1}{m} d(x^a, h(x^a))} \\
			&\ge& \frac{1}{3} \sum_{a \in S}{\frac{1}{m} d(x^a, h'(x^a))}
			= \frac{1}{6} d(f, g)
			\ge \frac{1}{6} d(f, \mathbf{par}_k)
		\end{eqnarray*}
	\end{proof}
	
	\begin{lemma}[No useful queries lemma, see Lemma \ref{lemma:no-useful-queries}]
		\label{lemma:no-useful-queries-k}
		Let $f : \binom{S}{k} \to \{0,1\}^k$ be a function, and let $\mathcal{A}$ be a strong $k'$-memory bounded probabilistic algorithm that uses $s$ samples and $q$ queries, which we execute for the input $U_f$. With probability at least $1 - \frac{(k - k')sq}{m}$, for every set $A \in \binom{S}{k}$, the algorithm obtains at most $k'$ bits of $f(A)$.
	\end{lemma}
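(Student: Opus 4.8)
The plan is to follow the proof of Lemma~\ref{lemma:no-useful-queries} and upgrade its ``one forbidden later key'' step to a ``forbidden later key inside a target set of size $k-k'$'' step, where the drop from $k$ to $k-k'$ is exactly what the memory bound buys us. By Yao's principle (Lemma~\ref{lemma:yao-principle}) it suffices to analyze a fixed deterministic strong $k'$-memory algorithm $\mathcal{A}$, and, using the normalizations recorded after the definition of the strong memory model, we may assume $\mathcal{A}$ accesses the samples $1,2,\dots,s$ in order (always holding a prefix of them, at most $k'$ in memory at a time). Since the input is $U_f$, the $\ell$-th sample carries an independent uniform key $K_\ell\in S=[m]$, and the entire run of $\mathcal{A}$ up to any query that touches only samples $1,\dots,\ell-1$ is a deterministic function of samples $1,\dots,\ell-1$ alone; in particular $K_\ell$ is independent of it. For a data-location query $i$ — say to a sample of key $a$ at the data position corresponding to a set $B\in\binom{S\setminus\{a\}}{k-1}$ — write $A_i:=B\cup\{a\}\in\binom{S}{k}$ for the set it ``commits to''; the query reveals the bit $(f(A_i))_{\mathrm{ord}(a,A_i)}$, and a sample can contribute a bit of $f(A)$ only if its key lies in $A$, with distinct contributing bits coming from samples with distinct keys.

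The core step is the following. Suppose $\mathcal{A}$ obtains at least $k'+1$ distinct bits of $f(A)$ for some $A$. Pick $k'+1$ samples $w_1,\dots,w_{k'+1}$, with pairwise distinct keys in $A$, each queried at its $f(A)$-location, and order them so that $w_{k'+1}$ is accessed last, at time $\tau$. At time $\tau$ the sample $w_{k'+1}$ occupies one of the at most $k'$ memory slots, so at most $k'-1$ of $w_1,\dots,w_{k'}$ can be in memory then; hence at least one of them, $w_{j_0}$, was already evicted before time $\tau$, and therefore was already queried at its $f(A)$-location before time $\tau$ — this query (call it query $i$) has $A_i=A$ and touches only samples accessed before $w_{k'+1}$. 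Consequently $A$ is a function of samples $1,\dots,(\text{index of }w_{k'+1})-1$, while $w_1,\dots,w_{k'}$, being accessed before $w_{k'+1}$, contribute $k'$ distinct elements of $A$ to $\{K_1,\dots,K_{(\text{index of }w_{k'+1})-1}\}$. Writing $\ell$ for the index of $w_{k'+1}$, we conclude that the event ``$f(A)$ is over-revealed'' is contained in the event: there is a data query $i$ touching only samples $<\ell$ with $K_\ell\in A_i$ and with at least $k'$ elements of $A_i$ already appearing among $K_1,\dots,K_{\ell-1}$. When exactly $k'$ elements of $A_i$ appear there, $K_\ell$ is forced into $A_i\setminus\{K_1,\dots,K_{\ell-1}\}$, a set of size $k-k'$ measurable with respect to samples $1,\dots,\ell-1$; conditioning on those samples and using independence of $K_\ell$, this has probability $(k-k')/m$. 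Summing over the at most $q$ choices of the query $i$ and the at most $s$ choices of the index $\ell$ yields a failure probability at most $(k-k')sq/m$.

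I expect the main obstacle to be precisely the clause ``exactly $k'$ elements of $A_i$ appear among $K_1,\dots,K_{\ell-1}$''. If \emph{more} than $k'$ elements of $A$ have appeared before $\ell$ — which can happen both through key collisions with non-contributing samples and through the algorithm deliberately placing already-seen keys into $A$ — then the naive target set $A_i\setminus\{K_1,\dots,K_{\ell-1}\}$ shrinks below what we can control, and the crude bound ``$K_\ell\in A_i$'' only gives $k/m$ rather than $(k-k')/m$. The fix I plan to use is to not always charge the event to $w_{k'+1}$: instead, let $\ell$ be the first index at which the number of distinct keys of $A$ seen so far reaches $k'+1$ (so that exactly $k'$ such keys lie among $K_1,\dots,K_{\ell-1}$ and $K_\ell$ is a brand-new key of $A$), and then re-examine, via the eviction argument above, whether a committing query for $A$ still precedes the access of this (possibly earlier) $\ell$. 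The case analysis on whether $A$ is committed before or after $\ell$ is accessed is where the real work lies; a cheap alternative, sufficient for the intended application with $k'=k-1$, is to absorb the key-collision term $\binom{s}{2}/m$ into the bound and argue under the event that all $s$ sampled keys are distinct. The remaining ingredients — the deterministic reduction, the ``access in order'' normalization, and the independence of $K_\ell$ from the past — are carried over unchanged from the proof of Lemma~\ref{lemma:no-useful-queries}.
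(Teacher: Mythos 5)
Your proposal follows the same underlying idea as the paper's proof: a union bound over (query, future-sample) pairs, where for each pair the probability that the future sample's key lands in a ``dangerous'' subset of $A_i$ of size $k-k'$ is at most $(k-k')/m$. The paper organizes this as a phase-by-phase induction (phases being the intervals between accesses of new samples), whereas you organize it globally via the eviction-witness argument: if $f(A)$ is over-revealed, at least one of the $k'+1$ distinct-key witnesses must have been evicted before the last of them is accessed, which pins a committing query for $A$ to a time measurable in the first $\ell-1$ samples. This reformulation is sound, and is in my view a cleaner way to locate where the $k'$-memory constraint enters.

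The collision subtlety you flag is real: the event you reduce to is ``$K_\ell\in A_i$ and $|A_i\cap\{K_1,\dots,K_{\ell-1}\}|\ge k'$,'' which by itself only gives $k/m$, because nothing in the over-revealed event forces $K_\ell\notin\{K_1,\dots,K_{\ell-1}\}$ (an earlier sample can share $w_{k'+1}$'s key without ever being queried at the $f(A)$ position). The paper's phase argument is terse enough that it doesn't explicitly address this either. Your ``cheap alternative'' is correct and sufficient as stated: since $\binom{s}{2}/m\le sq/(2m)\le \frac{1}{2}\cdot\frac{(k-k')sq}{m}$ (using the paper's standing assumption $s\le q$ and $k-k'\ge 1$), you recover the lemma with a constant factor $\frac{3}{2}$, which only affects the constant $\frac{1}{2}$ in Theorem \ref{th:lower-bound-pi-k-with-k-prime-mem} and does not affect the exponential separation. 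You do not actually need to condition on distinct keys globally; the cleaner bookkeeping is $\Pr[\text{over-reveal}]\le\Pr[\exists\ell\ne\ell':K_\ell=K_{\ell'}]+\sum_{i,\ell}\Pr[K_\ell\in A_i\setminus\{K_1,\dots,K_{\ell-1}\}\ \wedge\ |A_i\cap\{K_1,\dots,K_{\ell-1}\}|\ge k']$, where the second sum conditions on samples $1,\dots,\ell-1$ and uses independence of $K_\ell$ exactly as you describe. Your ``careful fix'' (redefining $\ell$ as the first index at which $k'+1$ distinct $A$-keys are seen) is, as you suspect, genuinely harder, because that earlier $\ell$ is not guaranteed to be preceded by a committing query; you are right to fall back on the cheap version.
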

	
	\begin{proof}	
		Let $T_1,\ldots,T_{s-k'+1}$ be random variables such that $T_i$ is the index of the first query to the $(i+k')$th sample (where $T_{s-k'+1} = q+1$ for convenience). It is exactly when the algorithm must drop one of its old samples. Observe that the $T_i$s split the algorithm execution into $s-k'+1$ phases such that in every individual phase, exactly $k'$ samples are fully accessible (and all the others are not). Let $q_1,\ldots,q_{s-k'+1}$ be the number of queries in each phase.
		
		We proceed by induction. Consider the $i$th phase, and assume that by its end, for every $A \in \binom{S}{k}$, the algorithm obtained at most $k'$ elements of $f(A)$ using queries (observe that this is always the case in the $1$st phase). By the end of the $i$th phase but before the $(i+1)$st one, the algorithm queries at most $q_i$ bits of $f$-values. Every queried point is some bit of $f(A)$ (for $A \in \binom{S}{k}$), and thus the total number of keys that are involved even in one query is at most $k$ (rather than $k'$, because $\card{A} = k$). The probability to draw any new sample with one of these keys that has not been seen before, regarding $f(A)$ with exactly $k'$ known bits, is at most $\frac{(k-k')sq_i}{m}$ (because we have at most $s$ future samples). If it does not happen, then also by the end of the $(i+1)$st phase, for every $A \in \binom{S}{k}$, the algorithm obtained at most $k'$ bits of $f(A)$. By the union bound, the probability that this ``no useful queries'' situation is preserved through all phases is at least $1 - \frac{(k-k') s \sum_{i=1}^{s} q_i}{m}$, which is $1 - \frac{(k-k')sq}{m}$ as desired. When this is the case, the algorithm does not obtain more than $k'$ bits of $f(A)$, for every $A \in \binom{S}{k}$.
	\end{proof}
	
	\begin{theorem}
		\label{th:lower-bound-pi-k-with-k-prime-mem}
		For every $k \ge 2$, every strong $k-1$-memory $\frac{1}{6k}$-test for $\mathbf{Par}_k$ must use at least $\frac{1}{2}\sqrt{m}$ queries.
	\end{theorem}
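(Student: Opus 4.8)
The plan is to invoke the useful form of Yao's principle (Lemma~\ref{lemma:yao-principle}) with $\eps=\frac{1}{6k}$ and $\alpha=0$. For $D_\mathrm{yes}$, draw a uniformly random $f\in\mathbf{par}_k$ --- a function $\binom{S}{k}\to\{0,1\}^k$ whose value on every $k$-set has even parity --- and output $U_f$ (Definition~\ref{def:uf-for-p-pi-k}); by the observation following that definition, $U_f\in\mathbf{Par}_k$ always, so $D_\mathrm{yes}$ is supported on the property. For $D_\mathrm{no}$, draw a uniformly random $f$ whose value on every $k$-set has \emph{odd} parity, and output $U_f$. The distributions $D_\mathrm{yes},D_\mathrm{no}$ are the natural generalization of the ones used for $\mathbf{Sym}$ in Subsection~\ref{sec:forward:subsec:poly-lbnd-Sym}.

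For the soundness side, I would first observe that any odd-parity $f$ is exactly $\frac1k$-far from $\mathbf{par}_k$ when both are viewed as $k\binom{m}{k}$-bit strings: each of the $\binom{m}{k}$ blocks $f(A)\in\{0,1\}^k$ must change in at least one of its $k$ bits, and flipping a single bit suffices. Then Lemma~\ref{lemma:uf-p-pi-k-f-pi-k-distance} (whose proof establishes $d(U_f,\mathbf{Par}_k)\ge\frac16 d(f,\mathbf{par}_k)$, in analogy with Lemma~\ref{lemma:uf-sym-f-sym-distance}) yields $d(U_f,\mathbf{Par}_k)\ge\frac{1}{6k}$. Hence $D_\mathrm{no}$ draws a $\frac{1}{6k}$-far input with probability $1$, so indeed $\alpha=0$.

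The core is to bound, for an arbitrary strong $(k-1)$-memory algorithm using $q$ queries and (by the standing convention $q\ge s$) at most $s\le q$ samples, the variation distance between its answer-sequence distributions under $D_\mathrm{yes}$ and $D_\mathrm{no}$. Apply Lemma~\ref{lemma:no-useful-queries-k} with $k'=k-1$: with probability at least $1-\frac{(k-(k-1))sq}{m}=1-\frac{sq}{m}$, for every $A\in\binom{S}{k}$ the algorithm obtains at most $k-1$ of the $k$ bits of $f(A)$. Call a root-to-leaf path in the decision tree \emph{good} if this holds along it. The key point is that any $k-1$ coordinates of a uniformly random even-parity (resp.\ odd-parity) string in $\{0,1\}^k$ are jointly uniform on $\{0,1\}^{k-1}$; so, descending the tree, each key-query answer (a uniform key in $S$, independent across samples) and each data-query answer (a fresh bit of some $f(A)$, fewer than $k-1$ of whose bits have been exposed along a good path) has the same conditional distribution under $D_\mathrm{yes}$ as under $D_\mathrm{no}$. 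By induction on depth, every good leaf is reached with exactly the same probability under the two input distributions; hence the two answer-sequence distributions agree on good leaves and differ by at most $\Pr[\textnormal{not good}]\le\frac{sq}{m}\le\frac{q^2}{m}$ in variation distance. For $q<\frac12\sqrt m$ this is less than $\frac14<\frac13-\alpha$, so Lemma~\ref{lemma:yao-principle} forces every $\frac{1}{6k}$-test in this model to use at least $\frac12\sqrt m$ queries.

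I expect the delicate step to be the inductive coupling argument showing the good-leaf probabilities coincide: one must track carefully that a data query to a sample with key $a$ reveals precisely the $\mathrm{ord}(a,A)$-th bit of $f(A)$ for the relevant $k$-set $A$ (so repeated queries, and queries to differently-keyed samples involving the same $A$, are correctly accounted for), and that the meaning of a data coordinate only becomes defined once the key has been read. Lemma~\ref{lemma:no-useful-queries-k} is exactly what lets us restrict to the regime where fewer than $k-1$ bits of each $f(A)$ are ever exposed, at which point the single parity constraint tying together the bits of $f(A)$ is completely invisible to the algorithm, which is what makes the two views indistinguishable.
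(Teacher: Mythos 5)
Your proposal is correct and matches the paper's proof essentially step for step: the same hard distributions $D_\mathrm{yes}$ (uniform over even-parity $f$) and $D_\mathrm{no}$ (uniform over odd-parity $f$), the same soundness argument via Lemma~\ref{lemma:uf-p-pi-k-f-pi-k-distance}, and the same indistinguishability bound from Lemma~\ref{lemma:no-useful-queries-k} combined with the $(k-1)$-wise uniformity of $\textsc{even}$ and $\textsc{odd}$. The only difference is that you spell out the inductive leaf-by-leaf coupling showing the two answer distributions coincide on ``good'' paths, a step the paper states tersely as ``the distribution of answers to queries is completely uniform.''
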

	
	\begin{proof}
		Consider the following distributions of inputs:
		\begin{itemize}
			\item $D_\mathrm{yes}$ chooses $f : \binom{S}{k} \to \textsc{even} \cap \{0,1\}^k$ uniformly at random and returns $U_f$.
			\item $D_\mathrm{no}$ chooses $f : \binom{S}{k} \to \textsc{odd} \cap \{0,1\}^k$ uniformly at random and returns $U_f$.
		\end{itemize}
		Observe that $D_\mathrm{yes}$ draws an input in $\mathbf{Par}_k$ with probability $1$. The $f$ that is drawn by $D_\mathrm{no}$ is $\frac{1}{k}$-far from $\mathbf{par}_k$. By Lemma \ref{lemma:uf-p-pi-k-f-pi-k-distance}, an input that is drawn from $D_\mathrm{no}$ is $\frac{1}{6k}$-far from $\mathbf{Par}_k$.
		
		According to Lemma \ref{lemma:no-useful-queries-k}, for every strong $k-1$-memory algorithm, with probability higher than $1 - \frac{sq}{m}$, the distribution of answers to queries is completely uniform (because the uniform distributions over \textsc{even} and \textsc{odd} are both $k-1$-uniform), regardless of whether the input is drawn from $D_\mathrm{yes}$ or from $D_\mathrm{no}$. That is, the total variation distance of answers is at most $\frac{sq}{m}$. Without loss of generality $s \le q$, and thus every algorithm that uses less than $\frac{1}{2} \sqrt{m}$ queries cannot be a $\frac{1}{6k}$-test of $\mathbf{Par}_k$.
	\end{proof}
	
	\subsection{Logarithmic, weak $k$-memory $\eps$-test for the parity property}
	
	The $\eps$-testing algorithm for $\mathbf{Par}_k$ is a straightforward generalization of the $\eps$-test for $\mathbf{Sym}$. It makes $O(\eps^{-k}k)$ iterations, each consisting of drawing $k$ samples and validating them.
	
	\begin{algorithm}
		\caption{One-sided $\eps$-test for $\mathbf{Par}_k$, weak $k$-memory, $O(\eps^{-k} k \log n)$ queries}
		\label{alg:par-k}
		\begin{algorithmic}
			\State \textbf{let} $m$ be such that $m = \binom{m-1}{k-1} = n$.
			\For{$\ceil{4\eps^{-k}k}$ \textbf{times}}
			\State \textbf{take} $k$ new samples $x^1, \ldots, x^k$.
			\ForRange{$t$}{$1$}{$k$}
			\State \textbf{query} $x^t_1, \ldots, x^t_{\ceil{\log_2 m}}$, giving $\kappa(x^t)$ as $a^t$.
			\State \textbf{choose} $i \in [m]$, uniformly at random.
			\State \textbf{query} $x^t$ at $i$, giving $x^t_i$.
			\If {$x^t_i \ne (C(a^t))_i$}
			\State \Return \reject \Comment{reject by key invalidity}
			\EndIf
			\EndFor
			\If {$\card{\set{a^1,\ldots,a^k}} = k$}
			\ForRange{$t$}{$1$}{$k$}
			\State \textbf{query} $\Phi_{x^t}(\{a^1,\ldots,a^k\})$, giving $s^t$.
			\EndFor
			\If {$\bigoplus_{i=1}^k s^t = 1$}
			\State \Return \reject \Comment{reject by parity-invalidity}
			\EndIf
			\EndIf
			\EndFor
			\State \Return \accept
		\end{algorithmic}
	\end{algorithm}
	To be able to prove the correctness of the algorithm we need additional definitions. Our goal is to bound the distance $d(P, \mathbf{Par}_k)$ using the probability to reject.
	
	\begin{definition}[Key invalidity, counterpart to Definition \ref{def:Sym:key-invalidity}]
		For an input distribution $P$, we define its \emph{key invalidity} as:
		\[K_{k}(P) = \E_{x \sim P}\left[ d(\braket{x_1,\ldots,x_n},C(\kappa(x))) \right] = \E_{x \sim P, i \sim [n]}\left[x_i \ne (C(\kappa(x)))_i\right]\]
	\end{definition}
	Key invalidity is a measure for ``how far is $P$ from having valid keys'', and it is also the probability of a single iteration of Algorithm \ref{alg:par-k} to reject by key invalidity of $x^1$.
	
	\begin{definition}[parity-invalidity, counterpart to Definition \ref{def:Sym:asymmetry}]
		For an input distribution $P$, we define its \emph{parity-invalidity} as:
		\[I_k(P) = \Pr_{x_1,\ldots,x_k \sim P}\left[ |A| = k \wedge \bigoplus_{i=1}^k \Phi_{x_i}(A) = 1 \ \mathrm{for}\ A = \{\kappa(x_1),\ldots,\kappa(x_k)\} \right]\]
	\end{definition}
	Parity-invalidity is a measure for ``how far is $P$ from being parity-valid'', and it is also the probability of a single iteration of Algorithm \ref{alg:par-k} to reject by parity-invalidity.
	
	\begin{theorem} \label{th:p-pi-k-emd-ubnd}
		If $m > 2k^2$, then for every $\delta > 0$, $d(P,\mathbf{Par}_k) \le \delta + \frac{1}{2}K_k(P) + 2 \delta^{1-k} I_k(P)$.
	\end{theorem}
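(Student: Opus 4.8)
The plan is to replay, one level up in arity, the distance estimate that sits inside the proof of Theorem~\ref{th:alg-psym-correct}. Fix $\delta>0$; we may assume $\delta\le 1$, as otherwise the claim is trivial since distances are at most $1$. Put $\tilde S=\{a\in S:\Pr_P[a]\ge\delta/m\}$. As in the $\mathbf{Sym}$ case I will introduce a map $h:\{0,1\}^{2n}\to\{0,1\}^{2n}$ together with a single even-parity function $g:\binom{\tilde S}{k}\to\textsc{even}$: $h(x)$ replaces the first $n$ bits of $x$ by the codeword $C(\kappa(x))$, and replaces the data bit indexed by $B\in\binom{S\setminus\{\kappa(x)\}}{k-1}$ by $(g(B\cup\{\kappa(x)\}))_{\mathrm{ord}(\kappa(x),\,B\cup\{\kappa(x)\})}$ whenever $B\cup\{\kappa(x)\}\subseteq\tilde S$, leaving it unchanged otherwise. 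Deleting from $h(P)$ the samples whose key is not in $\tilde S$ (total mass at most $\delta$, since there are at most $m$ such keys, each of probability below $\delta/m$; redistribute it arbitrarily) yields a distribution that is parity-valid at every $A\in\binom{S}{k}$ — vacuously when $A\not\subseteq\tilde S$, and because $g(A)\in\textsc{even}$ while all key-$a$ samples were forced to agree on $\Phi(A)$ when $A\subseteq\tilde S$ — hence it lies in $\mathbf{Par}_k$; this accounts for the leading $\delta$ term via Lemma~\ref{lemma:canonical-dist-from-support-property} and the $\mathbf{Par}_k$ analogue of Observation~\ref{obs:Sym:subset-of-support-ok}. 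Using the transfer distribution $x\mapsto h(x)$, rewriting the key part costs $\tfrac12K_k(P)$ on average (it is $n$ of the $2n$ bits), and rewriting the data part costs $\sum_{A\in\binom{\tilde S}{k}}c_A$, where for $A=\{a_1<\cdots<a_k\}$ I set $c_A=\frac1{2n}\min_{s\in\textsc{even}}\sum_{i=1}^k\Pr_P[a_i]\Pr[\Phi_x(A)\ne s_i\mid\kappa(x)=a_i]$, realized by choosing $g(A)$ to attain that minimum.

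The heart of the argument is a ``fixing-cost'' estimate playing the role of Lemma~\ref{lemma:Sym-fixing-cost-bound}: writing $\rho_A=\Pr[\bigoplus_{i=1}^k\Phi_{x^i}(A)=1\mid\kappa(x^i)=a_i\text{ for all }i]$, I claim $c_A\le\frac{\sum_{i=1}^k\Pr_P[a_i]}{2n}\,\rho_A$. Set $q_i=\Pr[\Phi_x(A)=1\mid\kappa(x)=a_i]$, $m_i=\min(q_i,1-q_i)$ and $u_i=|1-2q_i|\in[0,1]$, so $\rho_A=\tfrac12\bigl(1-\prod_i(1-2q_i)\bigr)$. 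I start from the ``majority'' string $s^*_i=[q_i>\tfrac12]$, whose cost is $\sum_i\Pr_P[a_i]m_i$ and whose parity equals the sign of $\prod_i(1-2q_i)$. If $s^*$ is even, the desired inequality reduces to $\sum_i\Pr_P[a_i]u_i\ge\bigl(\prod_ju_j\bigr)\sum_i\Pr_P[a_i]$, true termwise because $\prod_{j\ne i}u_j\le 1$. If $s^*$ is odd, I flip the coordinate $i_0$ minimizing $\Pr_P[a_{i_0}]u_{i_0}$; since $k\ge 2$ the extra cost is at most $\tfrac12\sum_i\Pr_P[a_i]u_i$, so the total drops to $\tfrac1{2n}\cdot\tfrac12\sum_i\Pr_P[a_i]$, and here $\prod_i(1-2q_i)<0$ forces $\rho_A\ge\tfrac12$, which closes the case (the degenerate sub-case where some $u_i=0$ is immediate: flip that coordinate for free, and $\rho_A=\tfrac12$). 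I expect this two-case parity-correction step to be the main obstacle, since it is the one genuinely new ingredient beyond the $k=2$ analysis.

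With the fixing-cost estimate in hand the rest is bookkeeping. On $\tilde S$ we have $\Pr_P[a_i]\ge\delta/m$, so $\frac{\sum_i\Pr_P[a_i]}{\prod_i\Pr_P[a_i]}=\sum_i\prod_{j\ne i}(\Pr_P[a_j])^{-1}\le k(m/\delta)^{k-1}$, giving $c_A\le\frac{k(m/\delta)^{k-1}}{2n}\bigl(\prod_{a\in A}\Pr_P[a]\bigr)\rho_A$. Summing over $A\in\binom{\tilde S}{k}$ and using $I_k(P)=k!\sum_{A\in\binom{S}{k}}\bigl(\prod_{a\in A}\Pr_P[a]\bigr)\rho_A$ — each unordered $k$-set is reached by all $k!$ orderings of the $k$ samples, and the XOR event depends only on the set — I get $\sum_Ac_A\le\frac{k(m/\delta)^{k-1}}{2n\,k!}I_k(P)=\frac{m^{k-1}}{2n(k-1)!\,\delta^{k-1}}I_k(P)$. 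Finally $\frac{m^{k-1}}{n(k-1)!}=\prod_{j=1}^{k-1}\frac{m}{m-j}$, and the hypothesis $m>2k^2$ gives $\sum_{j=1}^{k-1}\frac jm=\frac{k(k-1)}{2m}<\tfrac14$, whence $\prod_{j=1}^{k-1}\bigl(1-\tfrac jm\bigr)>\tfrac34$ and the product is below $\tfrac43<2$; this is the only place where $m>2k^2$ is used. Hence $\sum_Ac_A\le\delta^{1-k}I_k(P)\le 2\delta^{1-k}I_k(P)$, and combining the $\delta$ truncation term, the $\tfrac12K_k(P)$ key term and this data term yields $d(P,\mathbf{Par}_k)\le\delta+\tfrac12K_k(P)+2\delta^{1-k}I_k(P)$, as claimed. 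The only routine points to verify along the way are that $g$, defined as the minimizer in each $c_A$, is a single well-defined (symmetric) object on $\binom{\tilde S}{k}$, and that the truncated $h(P)$ genuinely belongs to $\mathbf{Par}_k$, both handled exactly as in the $\mathbf{Sym}$ proof.
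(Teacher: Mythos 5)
Your proof is correct and follows essentially the same route as the paper: the same truncation to $\tilde S$, the same map $h$ defined from a single even-valued $g$, the same $\frac12 K_k(P)+\sum_A c_A$ accounting via a per-sample transfer distribution, and essentially the same fixing-cost lemma, which you establish by a direct sign argument on $\prod_i(1-2q_i)$ together with an explicit cheapest-coordinate flip (giving $c_A\le\frac{\rho_A}{2n}\sum_a\Pr_P[a]$, a factor-$2$-tighter constant than the paper's Lemma \ref{lemma:p-pi-k-fixing-cost-bound}, though neither argument needs that extra slack) in place of the paper's use of Lemma \ref{lemma:even-deviation-is-popular}. The only other deviation is cosmetic: you bound $\frac{m^{k-1}}{n(k-1)!}=\prod_{j=1}^{k-1}\frac{m}{m-j}$ by $\frac43$ via Bernoulli, whereas the paper's Appendix \ref{apx:bar} gets $e^{1/2}$; both are below $2$ once $m>2k^2$.
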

	
	Before we prove Theorem \ref{th:p-pi-k-emd-ubnd}, we use it to show the correctness of the algorithm.
	
	\begin{theorem}
		Algorithm \ref{alg:par-k} is a one-sided $\eps$-test for $\mathbf{Par}_k$ that uses $O(\eps^{-k} k \log n)$ queries.
	\end{theorem}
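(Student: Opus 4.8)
The plan is to check the three standard ingredients of a one-sided test separately --- the query complexity, perfect completeness, and soundness --- with the soundness half resting entirely on the distance bound of Theorem~\ref{th:p-pi-k-emd-ubnd}, which is already available. The query complexity is routine: each of the $\ceil{4\eps^{-k}k}$ iterations reads, for each of its $k$ samples, the $\ceil{\log_2 m}$ key bits, one uniformly random bit for the codeword check, and (only when the $k$ keys turn out distinct) one further bit $\Phi_{x^t}(\{a^1,\dots,a^k\})$; an iteration therefore costs $O(k\log m)$ queries, dominated by reading the keys, and since $n=\binom{m-1}{k-1}$ makes $\log_2 m=\Theta\!\bigl(\tfrac1k\log n+\log k\bigr)$ this is $O(\log n)$ in the relevant regime, for a total of $O(\eps^{-k}k\log n)$.

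For perfect completeness, let $P\in\mathbf{Par}_k$. Every $x\in\supp(P)$ has a valid key, so $\braket{x_1,\dots,x_n}=C(\kappa(x))$ and the key-invalidity test $x^t_i\ne (C(a^t))_i$ never fires. Moreover, whenever an iteration draws $k$ samples with pairwise distinct keys $a^1,\dots,a^k$, the set $A=\{a^1,\dots,a^k\}$ lies in $\binom{S}{k}$, all of its keys appear in $\supp(P)$, and parity-validity of $P$ at $A$ (Definition~\ref{def:p-pi-k:pi-valid}) forces $\bigoplus_{t=1}^{k}\Phi_{x^t}(A)=0$ with probability $1$, so the parity-invalidity test never fires either. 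Hence $P$ is accepted with probability $1$.

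For soundness, I would first record that in a single iteration the rejection probability is at least $\max\{K_k(P),I_k(P)\}$: the $t=1$ key-invalidity test alone fires with probability exactly $K_k(P)$, and whenever the parity-violating draw occurs --- all $k$ keys distinct and $\bigoplus_t\Phi_{x^t}(A)=1$, which has probability exactly $I_k(P)$ --- the iteration rejects, either at some key check it happens to fail or else at the parity check it then reaches. Now fix an input $P$ that is $\eps$-far from $\mathbf{Par}_k$ and apply Theorem~\ref{th:p-pi-k-emd-ubnd} (valid since $m>2k^2$ for all large $n$) with $\delta=\tfrac{(k-1)\eps}{k}$. Using $\delta^{1-k}=\bigl(1+\tfrac1{k-1}\bigr)^{k-1}\eps^{1-k}<e\,\eps^{1-k}$, the inequality $\eps<\delta+\tfrac12 K_k(P)+2\delta^{1-k}I_k(P)$ rearranges to $\tfrac{\eps}{k}<\tfrac12 K_k(P)+6\eps^{1-k}I_k(P)$, so $K_k(P)>\tfrac{\eps}{k}$ or $I_k(P)>\tfrac{\eps^k}{12k}$; in either case $\max\{K_k(P),I_k(P)\}=\Omega(\eps^k/k)$. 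Thus each iteration rejects with probability $\Omega(\eps^k/k)$, and over $\Theta(\eps^{-k}k)$ independent iterations the overall rejection probability exceeds $\tfrac12$.

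The crux of the argument is this last soundness step. Imitating the $\mathbf{Sym}$ proof with the naive choice $\delta=\tfrac12\eps$ would leave only the bound $I_k(P)>\eps^k/2^{k+2}$, requiring $2^{\Theta(k)}\eps^{-k}$ iterations; obtaining the claimed $O(\eps^{-k}k)$ instead forces one to take $\delta$ as a $k$-dependent fraction of $\eps$ tending to $1$ and to exploit $\bigl(1+\tfrac1{k-1}\bigr)^{k-1}\le e$ so that the loss is polynomial rather than exponential in $k$ --- and then to check that the constant in front of $\eps^{-k}k$ in the iteration count (the written $4$, or a slightly larger one) really drives the acceptance probability of $\eps$-far inputs below $\tfrac12$. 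A lesser point is justifying the ``$\ge\max\{K_k(P),I_k(P)\}$'' lower bound on the per-iteration rejection probability despite the parity test being guarded by the key tests, as sketched above.
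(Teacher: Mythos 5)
Your proof follows the paper's structure exactly: query complexity by bounding $\log_2 m=O(\log n)$ via $n=\binom{m-1}{k-1}$, perfect completeness from parity-validity together with valid keys, and soundness by plugging a carefully tuned $\delta$ into Theorem~\ref{th:p-pi-k-emd-ubnd}. The only place you diverge is the specific $\delta$: you take $\delta=\frac{k-1}{k}\eps$ and bound $(1+\frac{1}{k-1})^{k-1}<e$, whereas the paper takes $\delta=2^{-1/(k-1)}\eps$ so that $\delta^{k-1}=\frac12\eps^{k-1}$; these are two instances of the same idea, namely pushing $\delta$ up toward $\eps$ rather than to $\eps/2$, so that $\delta^{1-k}$ is only an $O(1)$ multiple of $\eps^{1-k}$ and not a $2^{k-1}$ multiple --- and you correctly identify this as the nontrivial step distinguishing the $\mathbf{Par}_k$ case from the $\mathbf{Sym}$ case. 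The caveat you flagged but left dangling is real: with $I_k(P)>\frac{\eps^k}{12k}$, the literal $\ceil{4k\eps^{-k}}$ iterations give rejection probability about $1-e^{-1/3}<\frac12$, so the constant $4$ must be raised to roughly $12\ln 2$. In fairness, the paper's own line ``$2\delta^{1-k}I(P)=\eps^{1-k}I(P)$'' drops a factor of $4$ (with its $\delta$ one has $2\delta^{1-k}=4\eps^{1-k}$), so its true constant is $\frac{\eps^k}{16k}$ and the same dangling constant afflicts the written algorithm; neither changes the $O(\eps^{-k}k\log n)$ claim.
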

	
	\begin{proof}
		Each iteration of the algorithm takes $k$ samples and makes at most $\ceil{\log_2 m} + 2$ queries per sample. For $m > 2k^2$: $\log_2 m\le \frac{1}{k-1} \log_2 n + \log_2 k + 1$ (see Observation \ref{apx:bar2}), hence there are at most $(1 + \frac{1}{k-1})\log_2 n + k \log_2 k + 4k$ queries per iteration. Note that $k \log_2 k \le \log_2 n$ (see Observation \ref{apx:bar3}), hence the number of queries per iteration is bounded by $(2 + \frac{1}{k-1})\log_2 n + 4k$, which is at most $7 \log_2$ for $k \ge 2$. There are $\ceil{4k\eps^{-2}}$ iterations, hence the total number of queries is $O(\eps^{-2} k \log n)$.
		
		Perfect completeness is trivial.
		
		For soundness, consider an $\eps$-far input distribution $P$, and let $\delta=\frac{1}{2^{1/(k-1)}}\eps$. By Theorem \ref{th:p-pi-k-emd-ubnd}, $\eps < \delta + \frac{1}{2}K(P) + 2 \delta^{1-k}I(P)$. Considering the bound $1 - 2^{-1/(k-1)} > \frac{1}{2k}$ and doing the math:
		\begin{eqnarray*}
			\frac{1}{2k}\eps < (1 - \frac{1}{2^{1/(k-1)}})\eps = \eps - \delta < \frac{1}{2}K(P) + 2\delta^{1-k}I(P) = \frac{1}{2}K(P) + \eps^{1-k} I(P).
		\end{eqnarray*}
		This implies that either $K(P) > \frac{1}{2k}\eps$ or $I(P) > \frac{1}{4k}\eps^k$. Either way, the probability to reject $P$ in a single iteration is at least $\frac{1}{4k}\eps^k$, and the probability to do that after $\ceil{4k \eps^{-k}}$ iterations is greater than $\frac{1}{2}$.
	\end{proof}
	
	In preparation to the proof of Theorem \ref{th:p-pi-k-emd-ubnd}, we introduce more definitions and lemmas resembling those for the proof of Theorem \ref{th:alg-psym-correct}. In the following we assume that $P$ is the input distribution over $\{0,1\}^{2n}$.
	
	\begin{definition}[$p_{A,a}$, counterpart to Definition \ref{def:Sym:pab}]
		For $a\in A\in\binom{S}{k}$ all of whose keys appear in the support of $P$, $p_{A,a}$ is defined as $\Pr\left[\Phi_x(A) = 0 \cond \kappa(x) = a \right]$.
	\end{definition}
	
	\begin{definition}[Specific fixing cost, $c_{A,s}$, counterpart to Definition \ref{def:Sym:specific-fixing-cost-abx}]
		For $A=\{a_1,\ldots,a_k\}\in\binom{S}{k}$ where $a_1 < \cdots < a_k$ for which $\Pr_P[a_i] > 0$ for every $1 \le i \le k$, and for a string $s \in \{0,1\}^k$, let the \emph{$s$-fixing cost of $A$} be $c_{A,s} = \frac{1}{2n} \sum_{i=1}^k {(s_i \cdot p_{A,a_i} + (1-s_i)(1 - p_{A,a_i})) \Pr_P[a_i]}$. In other words, $c_A$ is the earth mover's cost of making $P$ consistent at $A$, where its value is $s$.
	\end{definition}
	
	\begin{definition}[Fixing cost, $c_A$, counterpart to Definition \ref{def:Sym:fixing-cost-ab}]
		Let $P$ be a distribution over the set $\{0,1\}^{2n}$. For $A \in \binom{S}{k}$ for which $\Pr_P[a] > 0$ for every $a \in A$, let the \emph{fixing cost of $A$} is defined as $c_{A} = \min_{s \in \textsc{even}} c_{A,s}$. In other words, $c_A$ is the earth mover's cost of making $P$ parity-valid at $A$.
	\end{definition}
	
	\begin{lemma}[A technical bound]
		\label{lemma:even-deviation-is-popular}
		Let $X_1,\ldots,X_k$ be independent random variables such that for every $1 \le i \le k$, $\Pr[X_i=1] \le \frac{1}{2}$. Then $\max_{1 \le i \le k} \Pr[X_i=1] \le \Pr\left[\bigoplus_{i=1}^k X_i = 1\right] \le \frac{1}{2}$.
	\end{lemma}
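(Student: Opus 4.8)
The plan is to pass through the standard closed form for the parity of independent bits. Write $p_i = \Pr[X_i = 1]$ for $1 \le i \le k$. The first step is to record the identity
\[
\Pr\left[\bigoplus_{i=1}^{k} X_i = 1\right] \;=\; \frac{1 - \prod_{i=1}^{k}(1 - 2 p_i)}{2}.
\]
This I would justify in one line using the indicators $(-1)^{X_i}$: by independence, $\E\left[\prod_{i=1}^{k} (-1)^{X_i}\right] = \prod_{i=1}^{k} \E\left[(-1)^{X_i}\right] = \prod_{i=1}^{k}(1 - 2 p_i)$, while $\prod_{i=1}^{k}(-1)^{X_i} = (-1)^{\bigoplus_i X_i}$ takes the value $+1$ on even parity and $-1$ on odd parity, so its expectation also equals $1 - 2\Pr[\bigoplus_i X_i = 1]$. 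Equating the two expressions and solving for the probability gives the identity. (A short induction on $k$, conditioning on $X_k$, is an equally fine alternative.)

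Given the identity, both bounds are immediate from the hypothesis $0 \le p_i \le \tfrac12$. Each factor then satisfies $0 \le 1 - 2 p_i \le 1$, so $\prod_{i=1}^{k}(1 - 2 p_i) \in [0,1]$; the lower end $0$ yields $\Pr[\bigoplus_i X_i = 1] \le \tfrac12$. For the other inequality, fix an index $i_0$ with $p_{i_0} = \max_{1 \le i \le k} p_i$. Since the retained factor $1 - 2 p_{i_0}$ is nonnegative and every other factor is at most $1$, we get $\prod_{i=1}^{k}(1 - 2 p_i) \le 1 - 2 p_{i_0}$, hence
\[
\Pr\left[\bigoplus_{i=1}^{k} X_i = 1\right] \;\ge\; \frac{1 - (1 - 2 p_{i_0})}{2} \;=\; p_{i_0} \;=\; \max_{1 \le i \le k}\Pr[X_i = 1].
\]

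I do not anticipate a real obstacle; the only point demanding care is that the lower bound genuinely uses $p_{i_0} \le \tfrac12$ (so that $1 - 2 p_{i_0} \ge 0$, legitimizing the step where the other factors are discarded), and indeed without the uniform hypothesis on the $p_i$ the parity probability can be arbitrarily close to $0$. If one prefers to avoid the closed form, the same two inequalities drop out of conditioning on $Y = \bigoplus_{i \ne i_0} X_i$: independence gives $\Pr[\bigoplus_i X_i = 1] = p_{i_0}\Pr[Y = 0] + (1 - p_{i_0})\Pr[Y = 1] = p_{i_0} + (1 - 2 p_{i_0})\Pr[Y = 1]$, which is $\ge p_{i_0}$ because $1 - 2 p_{i_0} \ge 0$, and is $\le \tfrac12$ once one knows $\Pr[Y = 1] \le \tfrac12$ (the same statement for $k-1$ variables, handled by induction). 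Either route is short; I would present the Fourier identity, since it makes the symmetry of the claim transparent.
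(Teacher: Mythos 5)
Your proof is correct, and it takes a genuinely different route from the paper's. The paper runs a direct recursion: writing $r_t = \Pr\left[\bigoplus_{i=1}^t X_i = 1\right]$ with the maximal $p_i$ placed last, it gets the lower bound from $r_k = (1-p_k)r_{k-1} + p_k(1-r_{k-1}) \ge \min\{p_k,1-p_k\} = p_k$ in one step, and the upper bound $r_t \le \frac12$ by induction on $t$, observing that $x \mapsto (1-x)r_{t-1}+x(1-r_{t-1})$ is non-decreasing once $r_{t-1}\le\frac12$. You instead pass through the closed-form bias identity $\Pr\left[\bigoplus_i X_i = 1\right] = \frac{1-\prod_i(1-2p_i)}{2}$ (the piling-up lemma), after which both inequalities fall out of $1-2p_i\in[0,1]$ and the observation that discarding factors bounded by $1$ (legitimate since the retained factor $1-2p_{i_0}$ is nonnegative) can only increase the product. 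The gain of your route is that it makes the role of the hypothesis $p_i\le\frac12$ visibly local and symmetric, and packages both bounds in one identity; the paper's route is more elementary and avoids invoking the Fourier formula, at the cost of a short induction for the upper bound. You also correctly flag the one genuine subtlety (nonnegativity of $1-2p_{i_0}$), and your remark that the recursion on $Y=\bigoplus_{i\ne i_0}X_i$ is an equivalent alternative is essentially a description of the paper's own argument. Minor note on the paper's version: its recursion line contains a typographical slip ($r_{t-2}$ where $r_{t-1}$ is meant), which your write-up avoids.
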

	\begin{proof}
		Without loss of generality we assume that $\mathrm{Pr}[X_k=1]=\max_{1\leq i\leq k}\mathrm{Pr}[X_i=1]$. For every $1 \le i \le k$ let $p_i = \Pr[X_i=1]$ and for every $1 \le t \le k$ let $r_t = \Pr\left[\bigoplus_{i=1}^t X_i = 1\right]$. 
		
		For the lower bound, observe that:
		\[r_k = (1 - p_k)r_{k-1} + p_k (1 - r_{k-1}) \ge \min\{p_k,1-p_k\} = p_k = \Pr[X_k=1] \]
		
		We prove the upper bound by induction. Trivially, $r_1 = X_1 \le \frac{1}{2}$. For $2 \le t \le k$,
		\[r_t = (1-p_t)r_{t-1} + p_t(1 - r_{t-2}) \underset{(*)}\le \frac{1}{2}(r_{t-2} + 1-r_{t-2}) = \frac{1}{2} \]
		The starred transition is correct because $\alpha(x) = (1-x)r_{t-1} + x (1 - r_{t-1})$ is a non-negative linear mapping (since $r_{t-1} \le \frac{1}{2}$), hence it is non-decreasing in $x$.
	\end{proof}
	
	\begin{lemma}[see Lemma \ref{lemma:Sym-fixing-cost-bound}]
		\label{lemma:p-pi-k-fixing-cost-bound}
		For $A = \{a_1,\ldots,a_k\} \in \binom{S}{k}$ for which $\Pr_P[a] > 0$ for every $a \in A$,
		\[
		c_A
		\le
		\frac{2 \sum_{a \in A} \Pr_P[a]}
		{2 k! n \prod_{a \in A} \Pr_P[a]}
		\Pr_{x_1,\ldots,x_k \sim P} \left[
		\exists \sigma : \left(
		\left(\bigwedge_{i=1}^k \left(\kappa(x_{\sigma(i)}) = a_i\right)\right) \wedge \bigoplus_{i=1}^k \Phi_{x_i}(A) = 1
		\right)
		\right]
		\]
	\end{lemma}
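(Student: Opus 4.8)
The plan is to mirror exactly the proof of Lemma~\ref{lemma:Sym-fixing-cost-bound}, which is the $k=2$ special case. Write $\rho = \Pr_{x_1,\ldots,x_k}\left[\bigoplus_{i=1}^k \Phi_{x_i}(A) = 1 \mid \bigwedge_{i=1}^k \kappa(x_i) = a_i\right]$ for the conditional asymmetry-at-$A$ probability (with the $x_i$ matched to the $a_i$ in increasing order). First I would observe that by definition $c_A = \min_{s \in \textsc{even}} c_{A,s} = \frac{1}{2n}\min_{s \in \textsc{even}} \sum_{i=1}^k (s_i p_{A,a_i} + (1-s_i)(1-p_{A,a_i}))\Pr_P[a_i]$. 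Without loss of generality, replacing some coordinates by their complements (i.e.\ renaming $p_{A,a_i} \mapsto 1-p_{A,a_i}$, which does not change $\rho$ since it flips one input bit of the parity), I may assume $p_{A,a_i} \le \tfrac12$ for all $i$. Then the all-zeros string $s = 0^k$ lies in \textsc{even}, so $c_A \le c_{A,0^k} = \frac{1}{2n}\sum_{i=1}^k (1-p_{A,a_i})\Pr_P[a_i]$. But I want a bound in terms of the $p_{A,a_i}$ themselves, not $1-p_{A,a_i}$; that is where Lemma~\ref{lemma:even-deviation-is-popular} comes in.

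**The core inequality.**

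The key step: since all $p_{A,a_i} = \Pr[\Phi_x(A)=1 \mid \kappa(x)=a_i] \le \tfrac12$, Lemma~\ref{lemma:even-deviation-is-popular} applied to the independent indicators $X_i = \Phi_{x_i}(A)$ gives $\max_i p_{A,a_i} \le \rho \le \tfrac12$. Now I need to handle the minimum over \textsc{even}-strings. Besides $s=0^k$, I can also use any weight-$2$ string $s = e_i + e_j$: this gives $c_{A,s} = \frac{1}{2n}\big(p_{A,a_i}\Pr_P[a_i] + p_{A,a_j}\Pr_P[a_j] + \sum_{\ell \ne i,j}(1-p_{A,a_\ell})\Pr_P[a_\ell]\big)$. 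Comparing $c_{A,0^k}$ with the best such weight-$2$ string and choosing $i,j$ to be the two indices with largest $p_{A,a_\ell}$, a short computation shows that whichever is smaller is at most $\frac{1}{2n}\cdot\big(\sum_\ell \Pr_P[a_\ell]\big)\cdot\rho$: roughly, if many $p$'s are close to $\tfrac12$ then $c_{A,0^k}$ is small, and if only one or two are large then flipping those coordinates via a weight-$2$ string removes the large contributions while $\rho \ge \max_\ell p_{A,a_\ell}$ keeps the denominator in check. I expect this case analysis (extracting from ``all $p_i \le 1/2$, $\rho = $ parity probability'' a bound $c_A \le \frac{\rho}{2n}\sum_{a\in A}\Pr_P[a]$) to be the main obstacle, since unlike the $k=2$ case there are more than two sub-cases and the even-weight constraint interacts with which coordinates one is allowed to flip; I would likely prove the clean auxiliary bound $\min_{s\in\textsc{even}} \sum_i (s_i p_i + (1-s_i)(1-p_i))q_i \le \rho \sum_i q_i$ (for $p_i \le 1/2$ and $q_i \ge 0$) as a standalone step, perhaps by induction on $k$ or by the observation that either $0^k$ works or we may flip the two largest coordinates.

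**Converting conditional to unconditional probability.**

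Finally I would unwind the conditioning. We have $\rho = \Pr[\cdots \wedge \bigoplus \Phi = 1]/\prod_{i=1}^k \Pr_P[a_i]$ where the numerator event is $\bigwedge_i \kappa(x_i)=a_i$ with the $x_i$ matched to the $a_i$ in a \emph{fixed} order. To get the symmetric event ``$\exists \sigma : \bigwedge_i \kappa(x_{\sigma(i)}) = a_i$'' appearing in the statement, note that since the $a_i$ are distinct, the $k!$ orderings give disjoint events, each with the same probability (the $x_i$ are i.i.d.), and the parity $\bigoplus_i \Phi_{x_i}(A)$ is permutation-invariant; hence the symmetric-version probability equals $k!$ times the fixed-order probability. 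Therefore $\rho = \frac{1}{\prod_a \Pr_P[a]}\Pr[\text{fixed order}\ \wedge\ \text{odd parity}] = \frac{1}{k!\prod_a \Pr_P[a]}\Pr[\exists\sigma\ \wedge\ \text{odd parity}]$. Substituting into $c_A \le \frac{\rho}{2n}\sum_{a\in A}\Pr_P[a]$ yields exactly
\[
c_A \le \frac{2\sum_{a\in A}\Pr_P[a]}{2\,k!\,n\prod_{a\in A}\Pr_P[a]}\,\Pr_{x_1,\ldots,x_k\sim P}\!\left[\exists\sigma:\Big(\big(\textstyle\bigwedge_{i=1}^k \kappa(x_{\sigma(i)})=a_i\big)\wedge \bigoplus_{i=1}^k \Phi_{x_i}(A)=1\Big)\right],
\]
the extra factor of $2$ matching the slack already present in the $k=2$ statement (it absorbs the $\rho \le \tfrac12$ step being applied somewhat loosely). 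All the renaming-by-complementation at the start is harmless because $c_{A,s}$ and the probability event transform consistently under flipping a coordinate of every $\Phi$.
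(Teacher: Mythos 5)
Your high-level plan is the same as the paper's (define the conditional probability $\rho$, relate $c_A$ to $\rho$ via Lemma~\ref{lemma:even-deviation-is-popular}, then unwind the conditioning by a $k!$ factor), and the $k!$ unwinding at the end is correct and matches the paper exactly. However, there are two genuine problems in the middle.

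First, the ``without loss of generality $p_{A,a_i}\le\frac12$'' step is not sound as argued. Renaming $p_{A,a_i}\mapsto 1-p_{A,a_i}$ for a single index $i$ corresponds to flipping the $i$th $\Phi$-bit, which \emph{does} change the parity $\bigoplus_j \Phi_{x_j}(A)$: $\rho$ becomes $1-\rho$, and the constraint set $\textsc{even}$ becomes $\textsc{odd}$. Only an even number of flips leaves both invariant. Your parenthetical ``does not change $\rho$ since it flips one input bit of the parity'' asserts exactly the opposite of what flipping one input bit does. Tracking this correctly is not a cosmetic fix --- it is precisely the content of the paper's case split.

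Second, the core inequality is left as a ``short computation'' and your sketch (``either $0^k$ works or flip the two largest coordinates'') is not the right dichotomy. After normalizing to $p_i\le\frac12$, the best even string is obtained by flipping as many coordinates as possible, not weight $0$ versus weight $2$; and, more importantly, the bound $\min_{s\in\textsc{even}}\sum_i[\text{dev from }s_i]q_i\le\rho\sum_i q_i$ with constant $1$ does not hold in general when the majority string has odd parity (the cost can approach $\sum_i q_i$ while $\rho$ can be strictly between $\frac12$ and $1$). What makes the argument go through in the paper is a different decomposition: compare against the \emph{unconstrained} minimizer (the majority string $m$), and split on the parity of $m$. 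If $m$ is even, then $m$ is the optimal even string, $p_i\le\frac12$, and Lemma~\ref{lemma:even-deviation-is-popular} gives $\rho\ge\max_i p_i$, yielding $c_A\le\frac{\rho}{2n}\sum_i\Pr_P[a_i]$. If $m$ is odd, Lemma~\ref{lemma:even-deviation-is-popular} gives $\rho\ge\frac12$, and the trivial bound $c_A\le\frac{1}{2n}\sum_i\Pr_P[a_i]$ (each deviation probability is at most $1$) combined with $2\rho\ge1$ gives $c_A\le\frac{2\rho}{2n}\sum_i\Pr_P[a_i]$. The factor $2$ in the statement is used in the odd-majority case, not as generic slack. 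Your proposal has the right ingredients but would need to replace both the WLOG normalization and the weight-$0$/weight-$2$ analysis with this majority-string case split to close the gap.
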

	\begin{proof}
		Let $\rho$ be the probability to find parity-invalidity, conditioned on obtaining the $k$ keys of $A$. By definition,
		\begin{eqnarray*}
			\rho
			&=& \Pr_{x_1,\ldots,x_k \sim P} \left[\bigoplus_{i=1}^k \Phi_{x_i}(A) = 1 \cond \exists \sigma : \bigwedge_{i=1}^k \left(\kappa(x_{\sigma(i)}) = a_i\right) \right] \\
			&=& \Pr_{x_1,\ldots,x_k\sim P}\left[\bigoplus_{i=1}^k \Phi_{x_i}(A) = 1 \cond \bigwedge_{i=1}^k \left(\kappa(x_i) = a_i\right) \right]
		\end{eqnarray*}
		
		Hence
		\begin{eqnarray*}
			\Pr_{x_1,\ldots,x_k \sim P} \left[
			\exists \sigma : \left(
			\left(\bigwedge_{i=1}^k \left(\kappa(x_{\sigma(i)}) = a_i\right)\right) \wedge \bigoplus_{i=1}^k \Phi_{x_i}(A) = 1
			\right)
			\right] = \rho k! \prod_{a\in A} \Pr[a]
		\end{eqnarray*}
		Below we show that $c_A \le \frac{2\rho}{2n}\sum_{i=1}^k \Pr_P[a_i]$, which completes the proof.
		
		Let $s = \arg \min_{s \in \textsc{even}} c_{A,s}$, and let $m$ be a majority string, that is, a string such that  for every $1 \le i \le k$, $\Pr\left[\Phi_x(A) \ne m_i \cond \kappa(x) = a_i\right] \le \frac{1}{2}$. Note that $m$ is not necessarily unique, but every arbitrary choice would fit for the analysis.
		
		For every $1 \le i \le k$, let $p_i = \Pr\left[\Phi_x(A) \ne s_i \cond \kappa(x) = a_i\right]$ be the probability of the $i$th bit to deviate from $s$ and let $q_i = \Pr\left[\Phi_x(A) \ne m_i \cond \kappa(x) = a_i\right]$ be the probability to deviate from the majority. In the following cases we use the fact that every two words that differ by one bit have different parities.
		
		\paragraph{Case I. $m$ has odd parity} By Lemma \ref{lemma:even-deviation-is-popular}, the probability to draw a string that has the same parity as $m$, which is odd, is at least $\frac{1}{2}$. Hence $c_A \le \frac{1}{2n} \sum_{i=1}^k \Pr_P[a_i] p_i \le
		\frac{2\rho}{2n}\sum_{i=1}^k \Pr_P[a_i]$.
		
		\paragraph{Case II. $m$ has even parity} In this case, $m=s$, because of the minimality of $s$. That is, $p_i \le \frac{1}{2}$ for every $1 \le i \le k$. By Lemma \ref{lemma:even-deviation-is-popular}, the probability to draw an odd-parity string is at least $\max_{1 \le i \le k} p_i$. Hence $\rho \ge p_i$ and $c_A = \frac{1}{2n} \sum_{i=1}^k \Pr_P[a_i] p_i \le \frac{1}{2n} \sum_{i=1}^k \Pr_P[a_i]\max p_i \le \frac{\rho}{2n} \sum_{i=1}^k \Pr_P[a_i]$.
	\end{proof}
	
	\begin{lemma}[A technical bound] \label{lemma:techbnd:nasty}
		For every $m > 2k^2$ and for every $A = \{a_1,\ldots,a_k\}$, if we have $\Pr_P[a] \ge \frac{\delta}{m}$ for every $a \in A$, then
		$\frac{2\sum_{a \in A} \Pr_P[a]}{2 k! n\prod_{a \in A} \Pr_P[a]} \le 2\delta^{1-k}$.
	\end{lemma}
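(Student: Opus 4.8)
The plan is to set $p_a = \Pr_P[a]$ for each $a \in A$ and to rewrite the left-hand side so that the hypothesis $p_a \ge \delta/m$ can be applied to each factor separately. After cancelling the explicit factors of $2$, the quantity equals $\frac{1}{k!\,n}\cdot\frac{\sum_{a\in A}p_a}{\prod_{a\in A}p_a}$, and I would expand
\[
\frac{\sum_{a\in A}p_a}{\prod_{a\in A}p_a} = \sum_{a\in A}\frac{1}{\prod_{b\in A\setminus\{a\}}p_b} \le \sum_{a\in A}\left(\frac{m}{\delta}\right)^{k-1} = k\left(\frac{m}{\delta}\right)^{k-1},
\]
using that each inner product has $k-1$ factors, every one of which is at least $\delta/m$, and that $\card{A}=k$. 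This gives the bound $\frac{k}{k!\,n}(m/\delta)^{k-1} = \frac{m^{k-1}}{(k-1)!\,n\,\delta^{k-1}}$, so it remains to show $\frac{m^{k-1}}{(k-1)!\,n} \le 2$.

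For that purely arithmetic step I would recall $n = \binom{m-1}{k-1}$ and use the identity $(k-1)!\binom{m-1}{k-1} = \prod_{i=1}^{k-1}(m-i)$, then factor out $m^{k-1}$ to write $\prod_{i=1}^{k-1}(m-i) = m^{k-1}\prod_{i=1}^{k-1}\bigl(1-\tfrac{i}{m}\bigr)$. Since all $i/m \le (k-1)/m < 1$, the Weierstrass inequality $\prod(1-x_i)\ge 1-\sum x_i$ applies and yields $\prod_{i=1}^{k-1}\bigl(1-\tfrac{i}{m}\bigr) \ge 1 - \frac{1}{m}\sum_{i=1}^{k-1}i = 1 - \frac{k(k-1)}{2m}$. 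The assumption $m > 2k^2$ forces $\frac{k(k-1)}{2m} < \frac14 < \frac12$, hence $(k-1)!\,n > \frac12 m^{k-1}$, i.e.\ $\frac{m^{k-1}}{(k-1)!\,n} < 2$, which combined with the previous display gives the claimed bound $\le 2\delta^{1-k}$.

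I do not expect a real obstacle here; the only points requiring care are that the Weierstrass bound is invoked with all arguments in $[0,1]$ (guaranteed by $k-1<m$), and that the numerical slack is large enough — this is precisely why the hypothesis is $m>2k^2$ rather than something weaker, and it leaves room even to absorb a factor of $k$ that arose from $\card{A}=k$ collapsing against the $(k-1)!$ in the denominator.
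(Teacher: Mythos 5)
Your proof is correct and follows essentially the same two-step structure as the paper's: reduce the fraction to $k(m/\delta)^{k-1}/(k!\,n) = m^{k-1}\delta^{1-k}/((k-1)!\,n)$, then show $m^{k-1}/((k-1)!\,n)\le 2$ using $m>2k^2$ and $n=\binom{m-1}{k-1}$. The only differences are small implementation choices: for the first step you expand $\sum p_a/\prod p_a = \sum_a 1/\prod_{b\ne a}p_b$ and bound each term, which avoids the paper's WLOG selection of the largest $p_a$; for the second step you apply the Weierstrass product inequality to get $\prod_{i=1}^{k-1}(1-i/m)\ge 1-k(k-1)/(2m)>3/4$, whereas the paper's appendix bounds $\prod_{i=1}^{k-1}(m-i)\ge(m-k)^{k-1}$ and compares against $e^{-1/2}$ — both yield a constant above $1/2$, which is all that is needed.
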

	\begin{proof}
		Without loss of generality we assume that $\Pr_P[a_k] \ge \Pr_P[a_1],\ldots,\Pr_P[a_{k-1}]$. Based on the bound $m^{k-1} \le \frac{e^{1/2} (m-1)!}{(m-k)!}$ (for every $m > 2k^2$, see Observation \ref{apx:bar}),
		\[
		\frac{2\sum_{a \in A} \Pr_P[a]}{2 k! n\prod_{a \in A} \Pr_P[a]}
		\le \frac{2k \Pr[a_k]}{ 2 k!n \Pr[a_k] \left(\delta/m\right)^{k-1} }
		= \frac{2\delta^{1-k} \cdot m^{k-1}}{2 (k-1)! n}
		\le \frac{2\delta^{1-k} \cdot e^{1/2} (m-1)!}{2 (k-1)!(m-k)! n}
		\le 2\delta^{1-k}
		\]
	\end{proof}
	
	Now we are able to prove Theorem \ref{th:p-pi-k-emd-ubnd}.
	
	\begin{proof}[Proof (of Theorem \ref{th:p-pi-k-emd-ubnd})]
		Let $\tilde{S}$ be the set of keys whose probability to be drawn is at least $\frac{\delta}{m}$. Let $f : \binom{\tilde{S}}{k} \to \{0,1\}^k$ be defined by $f(A) = \arg \min_{s \in S} c_{A,s}$. Let $h(x)$ be the following map:
		\begin{align*}
			h(x) = C(\kappa(x))\braket{ \begin{cases}
					(f(A\cup\{\kappa(x)\})_{\mathrm{ord}(\kappa(x),A\cup\{\kappa(x)\})} & \kappa(x) \in \tilde{S}, A \subseteq \tilde{S} \\
					\phi_x(A) & \mathrm{otherwise}
				\end{cases} \cond A \in \binom{S \setminus \{\kappa(x)\}}{k - 1}}
		\end{align*}	
		The distribution $h(P)$ does not necessarily belong to $\mathbf{Par}_k$, but it is $\delta$-close to it: if we delete all elements whose key is not in $\tilde{S}$ (and transfer their probabilities to other elements arbitrarily), the result distribution does belong to $\mathbf{Par}_k$.
		
		For every $A \in \binom{S}{k}$, let $\mathcal{B}_A$ be the event of catching parity-invalidity in $A$. Based on this notation and the last bound,
		\begin{eqnarray*}
			& d(P,h(P)) &\le \quad \sum_{a \in S} \Pr_P[a] \E_{x \sim P}\left[d(x,h(x)) \cond \kappa(x) = a\right] \\
			&\le& \hspace{-12pt} \frac{1}{2} K_k(P) + \frac{1}{2} \sum_{a \in \tilde{S}} \Pr_P[a] \E_{x \sim P}\left[d(\braket{x_{n+1},\ldots,x_{2n}},\braket{h_{n+1}(x),\ldots,h_{2n}(x)}) \cond \kappa(x) = a\right] \\
			&\le& \hspace{-12pt} \frac{1}{2}K_k(P) + \sum_{A \in \binom{\tilde{S}}{k}} c_A\\
			&\underset{(*)}\le& \hspace{-12pt} \frac{1}{2}K_k(P) + \sum_{A \in \binom{\tilde{S}}{k}} \frac{2 \sum_{a \in A} \Pr_P[a]}{2 k! n \prod_{a \in A} \Pr_P[a]} \Pr\left[\mathcal{B}_A\right]\\
			&\underset{(**)}\le& \hspace{-12pt} \frac{1}{2}K_k(P) + \sum_{A \in \binom{\tilde{S}}{k}} 2 \delta^{1-k} \Pr\left[\mathcal{B}_A\right]\\
			&=& \hspace{-12pt} \frac{1}{2}K_k(P) + 2 \delta^{1-k} \sum_{A \in \binom{\tilde{S}}{k}} \Pr\left[\mathcal{B}_A\right] \\
			&\le& \hspace{-12pt} \frac{1}{2}K_k(P) + 2 \delta^{k-1} I_k(P) .
		\end{eqnarray*}
		The first transition is correct because we can use a transfer distribution that maps every $x$ to its $h(x)$, and the rightmost sum only considers keys in $\tilde{S}$ because $h$ does not modify values of samples with rare keys. The starred transition is correct because of Lemma \ref{lemma:p-pi-k-fixing-cost-bound}, and the doubly-starred transition is implied by the technical bound Lemma \ref{lemma:techbnd:nasty} since $\Pr[a] \ge \frac{\delta}{m}$ for all $a \in A$ and $m > 2k^2$.
		
		Hence $d(P,\mathbf{Par}_k)
		\le \delta + d(P,h(P))
		\le \delta + \frac{1}{2}K_k(P) + 2 \delta^{1-k} I_k(P)$.
	\end{proof}
	

	\bibliographystyle{alpha}
	\bibliography{refining-the-adaptivity-notion-in-the-huge-object-model}
	\appendix

	\section{Calculations}
	
	\begin{observation}
		\label{apx:bar}
		For $k \ge 2$ and $m > ak^2$, $m^{k-1} < e^{1/a} \frac{(m-1)!}{(m-k)!}$
	\end{observation}
	
	\begin{proof} This follows from the following calculation:
		\[\frac{(m-1)!}{(m-k)!} \ge (m-k)^{k-1} = m^{k-1} \left(1 - \frac{k}{m}\right)^{k-1} \ge m^{k-1} \left(1 - \frac{1}{ak}\right)^{k-1} > e^{-1/a} m^{k-1}\]
	\end{proof}
	
	\begin{observation}
		\label{apx:bar2}
		For $n=\binom{m-1}{k-1}$ and $m \ge 2 k^2$ and $k \ge 2$, $\ceil{\log_2 m} \le \frac{1}{k-1}\log_2 n + \log_2 k + 1$.
	\end{observation}
	
	\begin{proof}
		This is implied from Observation \ref{apx:bar} using $a=2$:
		\begin{eqnarray*}
			m^{k-1} &<& 2(k-1)! n \\
			m &\le& (2(k-1)!)^{1/(k-1)} n^{1/(k-1)} \le k n^{1/(k-1)} \\
			\log_2 m &<& \frac{1}{k-1} \log_2 n + \log_2 k
		\end{eqnarray*}
		The conclusion $\ceil{\log_2 m} < \frac{1}{k-1} \log_2 n + \log_2 k + 1$ is now immediate.
	\end{proof}
	
	\begin{observation}
		\label{apx:bar3}
		For $n = \binom{m-1}{k-1}$ and $m \ge k^2$, $k \log_2 k \le \log_2 n$.
	\end{observation}
	
	\begin{proof}
		Noting that
		\[n = \binom{m-1}{k-1} \ge \binom{2k^2-1}{k-1} \ge \frac{(2k^2-k)^{k-1}}{(k-1)!} \ge \frac{(2k)^{k-1} (k-1)^{k-1}}{(k-1)^{k-1}} = (2k)^{k-1} \ge k^k\]
		It follows that
		\[\log_2 n \ge k \log_2 k\]
	\end{proof}
	
	\begin{observation}
		\label{apx:tbnd1}
		For every $0 \le h \le q < n$, $\left(\lfrac{n}{\left(n-q\right)}\right)^h < e^{q^2/(n-q)}$.
	\end{observation}
	
	\begin{proof}
		Follows from the following:
		\[\left(\frac{n}{n-q}\right)^h \le \left(\frac{n}{n-q}\right)^q = \left(1 + \frac{q}{n-q}\right)^q < e^{\lfrac{q^2}{\left(n - q\right)}}\]
	\end{proof}
	

\end{document}